\journalname{Communications in Mathematical Physics}
\newcommand{\fsp}[1]{\ll{#1}\gg_{\rm{free}}}
\newcommand\bx{{\mathbf x}}
\newcommand\by{{\mathbf y}}
\newcommand\be{{\mathbf e}}
\newcommand\bj{{\mathbf j}}
\newcommand\bk{{\mathbf k}}
\newcommand\bu{{\mathbf u}}
\newcommand\bv{{\mathbf v}}
\newcommand\bG{{\mathbf G}}
\newcommand\bJ{{\mathbf J}}
\newcommand\LL{{\mathbb L}}
\newcommand\PP{{\mathbb P}}
\newcommand\NN{{\mathbb N}}
\newcommand\RR{{\mathbb R}}
\newcommand\TT{{\mathbb T}}
\newcommand\ZZ{{\mathbb Z}}
\newcommand\ve{\varepsilon}
\newcommand{\mc}[1]{{\mathcal #1}}
\newcommand{\mf}[1]{{\mathfrak #1}}
\newcommand{\mb}[1]{{\mathbf #1}}
\newcommand{\bb}[1]{{\mathbb #1}}
\newtheorem{prop}{Proposition}
\newtheorem{theo}{Theorem}
\begin{document}

\title{{Anomalous fluctuations for a perturbed Hamiltonian system with exponential interactions}}
\titlerunning{Perturbed Hamiltonian system with exponential interactions}

\author{C\'edric Bernardin \inst{1} \and Patr\'icia Gon\c calves \inst{2}}
\institute{Universit\'e de Lyon and CNRS, UMPA, UMR-CNRS 5669, ENS-Lyon,
46, all\'ee d'Italie, 69364 Lyon Cedex 07 - France. \\ \email{Cedric.Bernardin@ens-lyon.fr}  \and
  CMAT, Centro de Matem\'atica da Universidade do Minho, Campus de Gualtar, 4710-057 Braga, Portugal. \\ \email{patg@math.uminho.pt} }
\authorrunning{C. Bernardin and P. Gon\c calves}

\maketitle
\begin{abstract}
A one-dimensional Hamiltonian system with exponential interactions perturbed by a conservative noise is considered. It is proved that energy superdiffuses and upper and lower bounds describing this anomalous diffusion are obtained.
\end{abstract}
\section{Introduction}
Over the last decade, transport properties of one-dimensional Hamiltonian systems consisting
of coupled oscillators on a lattice have been the subject of many theoretical and
numerical studies, see the review papers~\cite{BLR,D,LLP}.
Despite many efforts, our knowledge of the fundamental mechanisms necessary and/or sufficient
to have a normal diffusion remains very limited.
Nevertheless, it has been recognized that conservation of momentum plays a major
role and numerical simulations provide a strong evidence of the fact that
one dimensional chains of anharmonic oscillators conserving momentum are usually  {\footnote{See however the coupled-rotor model which displays normal behavior (see \cite{LLP}, Section 6.4). }} superdiffusive.

An interesting area of current research consists in studying this problem for hybrid models where a stochastic perturbation is superposed to the deterministic evolution. Even if the problem is considerably simplified, several open challenging questions can be addressed for these systems.  In \cite{BBO2} it is  proved that the thermal conductivity of an unpinned harmonic chain of oscillators perturbed by an energy-momentum conservative noise is infinite while if a pinning potential (destroying momentum conservation) is added it is finite. In the same paper, diverging upper bounds are provided when some nonlinearities are added. This does not, however, exclude the possibility of having a finite conductivity. Therefore much more interesting would be to obtain lower bounds showing that the conductivity is infinite and that energy superdiffuses, but this problem is left open in \cite{BBO2}.

In \cite{BS}, has been introduced and studied numerically, a class of Hamiltonian models for which anomalous diffusion is observed. There, the investigated systems present strong analogies with standard chains of oscillators. They can be described as follows. Let $V$ and $U$ be two non-negative potentials on $\RR$ and consider the Hamiltonian system $( \, {\bf r} (t) , {\bf p} (t) \,)_{t \ge 0}$ whose equations of motion are given by
\begin{equation}
\label{eq:generaldynamics}
\frac{dp_x}{dt} = V'(r_{x+1}) -V'(r_x),
\qquad \frac{dr_x}{dt} = U' (p_x) -U' (p_{x-1}),
\qquad x \in \ZZ,
\end{equation}
where $p_x$ is the momentum of the particle $x$, $q_x$ its position and $r_x=q_{x} -q_{x-1}$ is the ``deformation'' of the lattice at $x$. Standard chains of oscillators are recovered for a quadratic kinetic energy $U(p)=p^2 /2$. Now, take $V=U$, and call $\eta_{2x-1}=r_x$ and $\eta_{2x}=p_x$. The dynamics can be rewritten as:
\begin{equation}
\label{eq:dyneq}
d\eta_{x} (t) =\Big(V' (\eta_{x+1}) - V' (\eta_{x-1})\Big) dt.
\end{equation}
Notice that with these new variables the energy of the system is simply given by $\sum_{x\in \bb Z} V(\eta_x)$. In \cite{BS} an anomalous diffusion of energy is numerically observed for a generic potential $V$. Then, following the spirit of \cite{BBO2}, the deterministic evolution is perturbed by adding a noise which consists to exchange $\eta_{x}$ with $\eta_{x+1}$ at random exponential times, independently for each bond $\{x,x+1\}$. The dynamics still conserves the energy $\sum_{x\in \ZZ} V(\eta_x)$ and the ``volume'' $\sum_{x\in \ZZ} \eta_x$ and destroys all other conserved quantities. As argued in \cite{BS}, the volume conservation law is responsible for the anomalous energy diffusion observed for this class of energy-volume conserving dynamics. This can be shown for quadratic interactions (\cite{BS}) with a behavior similar to the one observed in \cite{BBO2}. For nonlinear interactions the problem is much more difficult.

The aim of this paper is to show that if the interacting potential is of exponential type then the energy superdiffuses. Therefore, for this class of related models, in a particular case, we answer to the open question stated in \cite{BBO2}. With some additional technical work we think that our methods could be carried out to the Toda lattice perturbed by an energy-momentum conserving noise (considered e.g. in \cite{ILOS}). The exponential form of the potential $V$ makes the deterministic dynamics given by (\ref{eq:dyneq}) completely integrable. Nevertheless our proofs do not rely on this exceptional property of the dynamics and could be potentially generalized to other potentials $V$. The main ingredient used is the existence of explicit orthogonal polynomials for the equilibrium measures (see Section \ref{sec:dua}).

The paper is organized as follows. In Section \ref{sec:model} we define precisely the model. The results are stated in Section \ref{sec:results}. To prove the theorems we first perform a microscopic change of variables (Section \ref{sec:cv}) which permits to use a nice orthogonal decomposition of the generator (Section \ref{sec:dua}). Roughly speaking the upper bound on the energy superdiffusion is proved in Section \ref{sec:triv} and  the lower bound in Section \ref{sec:diff}. Section \ref{sec:pert} contains a comment about the possible extensions and comparisons of our model to others. In the Appendix we prove the existence of the infinite dynamics.

\vspace{0.2cm}

{\textbf{Notations}:  For any $a, b \in \RR^2$, $a \cdot b $ stands for the standard scalar product between $a$ and $b$ and $|a|= \sqrt{a \cdot a}$ for the norm of $a$. The transpose of a matrix $M$ is denoted by $M^T$. If $u: \bx=(x_1, \ldots,x_n)^T \in \RR^n \to u(\bx) =(u_1(\bx), \ldots, u_d (\bx))^T \in \RR^d$ is a differentiable function then $\partial_{x_j} u_i (\bx)$ denotes the partial derivative of $u_j$ with respect to the $j$-th coordinate at $\bx$ and $\nabla u( \bx)$ denotes the differential matrix (the gradient if $d=1$) of $u$ at $\bx$, i.e. the $n \times d$ matrix whose $(i,j)$-th entry is $\partial_{x_j} u_i(\bx)$; if $u:=(u_1, \ldots, u_d)^T:\ZZ \to \RR^d$ then we adopt the same notation to denote the discrete gradient of $u$ defined by $\nabla u:= (\nabla u_1, \ldots,\nabla u_d)^T$ with $\nabla u_i (x)= u_i (x+1) -u_i (x)$.
}

\vspace{0.2cm}

\section{The model}
\label{sec:model}
Let $b>0$ and $V_{b} (q)= e^{-bq} -1+bq $. We consider the system $\eta(t)=\{\eta_x(t):x\in{\mathbb{Z}}\}$ on $\RR^{\ZZ}$ defined by its generator $L=A+\gamma S$, $\gamma>0$, where for local {\footnote{A function $f$ defined on an infinite product space is said to be local if it depends only on its variable through a finite number of coordinates.}} differentiable functions $f:\RR^{\ZZ}\rightarrow{\mathbb{R}}$ we have that
\begin{equation*}
(Af)(\eta)=\sum_{x \in \ZZ} \Big(V_b^{\prime} (\eta_{x+1}) -V_b^{\prime} (\eta_{x-1})  \Big) (\partial_{\eta_x} f)(\eta)
\end{equation*}
and
\begin{equation*}
 (Sf)(\eta)=\sum_{x \in \ZZ} \Big( f(\eta^{x,x+1}) -f(\eta) \Big),
\end{equation*}
 where $\eta^{x,x+1}$ is obtained from $\eta$ by exchanging the variables $\eta_x$ and $\eta_{x+1}$, namely
\begin{equation}\label{etax,x+1}
\eta^{x,x+1}_y=\left\{\begin{array}{cl}
\eta_{x+1},& \mbox{if}\,\,\, y=x\,,\\
\eta_x,& \mbox{if} \,\,\,y=x+1\,,\\
\eta_y,& \mbox{otherwise}\,.
\end{array}
\right.
\end{equation}
The deterministic system (\ref{eq:dyneq}) with potential $V_b$ is well known in the integrable systems literature. It has been introduced in \cite{KVM} by Kac and van Moerbecke and was shown to be completely integrable. Consequently, the energy transport is ballistic (\cite{BS,Z}). As we will see this is different when the noise is added: the energy transport is no more ballistic but superdiffusive.

The existence of the dynamics generated by $L$ is proved in the Appendix for a large set of initial conditions and in particular for a set of full measure w.r.t. any invariant state $\mu_{{\bar \beta},{\bar \lambda}}$ (see bellow for its definition).

The system conserves the energy $\sum_{x \in \ZZ} V_{b} (\eta_x)$ and the volume $\sum_{x \in \ZZ} \eta_x$. In fact, we have
\begin{equation*}
{L} (V_{b} (\eta_x))=-\nabla {\bar j}_{x-1,x}(\eta), \quad L (\eta_x)=-\nabla {\bar j}^{\prime}_{x-1,x}(\eta),
\end{equation*}
where the microscopic currents are given by
\begin{equation*}
{\bar j}_{x,x+1}(\eta)=-b^2 e^{-b(\eta_x + \eta_{x+1})}+b^2(e^{-b \eta_x} +e^{-b \eta_{x+1}})- \gamma \nabla V_{b} (\eta_{x})
\end{equation*}
and
\begin{equation*}
{\bar j}^{\prime}_{x,x+1}(\eta)=b e^{-b \eta_x} +b e^{-b\eta_{x+1}} -\gamma \nabla \eta_x .\\
\end{equation*}

Every product probability measure $\mu_{\bar \beta, \bar \lambda}$ {{on $\bb R^{\bb Z}$}} in the form
$$\mu_{{\bar \beta}, {\bar \lambda}} (d\eta) = \prod_{x\in \bb Z} {\bar Z}^{-1} ({\bar \beta},{\bar \lambda}) \exp \{ -{\bar \beta} e^{-b \eta_x} - {\bar\lambda} \eta_x\}d\eta_x, \quad {\bar \beta}>0\, , \, {\bar\lambda} >0$$
is invariant for the dynamics.

Let $\langle \cdot\rangle_{\mu_{\bar \beta, \bar \lambda}}$ denote the average with respect to $\mu_{\bar \beta, \bar \lambda}$.
We define ${\bar e}:=\bar {e} ({\bar \beta}, {\bar \lambda}), {\bar v}:=\bar{v} ({\bar \beta}, {\bar \lambda})$ as the averages of the conserved quantities $V_b (\eta_x)$, $\eta_x$ with respect to $\mu_{\bar \beta, \bar \lambda}$, respectively, namely ${\bar e}=\langle V_b(\eta_x) \rangle_{\mu_{\bar \beta, \bar \lambda}}$ and
${\bar v}=\langle \eta_x \rangle_{\mu_{\bar \beta, \bar \lambda}}$.

A simple computation shows that
 \begin{equation}\label{mean of micro currents}
 \langle {\bar j}_{x,x+1} \rangle_{\mu_{\bar \beta, \bar \lambda}}=-b^2({\bar e} -b {\bar v})^2+b^2 \quad \textrm{ and} \quad \langle {\bar j}^{\prime}_{x,x+1}  \rangle_{\mu_{\bar \beta, \bar \lambda}}  =2b ({\bar e} -b {\bar v}+1).
\end{equation}
Hence, in the hyperbolic scaling, the hydrodynamical equations  are given by
\begin{equation}
\label{eq:hl1euler}
\begin{cases}
\partial_t {\mf e} -b^2 \, \partial_q ( ({\mf e} - b{\mf v})^2) =0\\
\partial_{t} {\mf v} + 2b \, \partial_q ({\mf e -b {\mf v}}) =0
\end{cases}
\end{equation}
and can be written in the compact form $\partial_t {\bar {\mf X}} + \partial_q {\bar{ \mf J}} ({\bar {\mf X}}) =0$ with
\begin{equation}\label{eq:hl-ss00}
\bar {\mf X}=
\left(
\begin{array}{c}
{\mf e}\\
{\mf v}
\end{array}
\right)
, \quad \textrm{and} \quad
\bar{ \mf J }({\bar {\mf X}})=
\left(
\begin{array}{c}
-b^2 ( {\mf e} - b{\mf v})^2\\
2b  ( {\mf e} - b{\mf v})
\end{array}
\right).
\end{equation}
This can be proved before the appearance of the shocks (see \cite{BS}).  The differential matrix of $\bar{ \mf J}$ is given by
\begin{equation*}
\nabla \bar{ \mf J}(\bar {\mf X})=
2b \left(
\begin{array}{cc}
-b ( {\mf e} - b{\mf v})& b^2 ( {\mf e} - b{\mf v}) \\
1 & -b
\end{array}
\right).
\end{equation*}
For given $({\bar e}, {\bar v})$ we denote by $({\bar T}^{+}_t )_{t \ge 0}$ (resp. $({\bar T}^{-}_t)_{t \ge 0}$) the semigroup on $S(\RR) \times S(\RR)$ generated by
\begin{equation}\label{linearized system 1}
\partial_t \varepsilon + {\bar M}^T\, \partial_q \varepsilon =0, \quad ({\text{resp.}} \; \partial_t \varepsilon - {\bar M}^T \, \partial_q \varepsilon =0 ),
\end{equation}
where
\begin{equation*}
{\bar M} := {\bar M} ({\bar e}, {\bar v})= [\nabla \bar{ \mf J}] (\bar \omega), \quad \bar \omega= \left(
\begin{array}{c}
{\bar e}\\
{\bar v}
\end{array}
\right).
\end{equation*}
We omit the dependence of these semigroups on $(\bar e , \bar v)$ for lightness of the notations. Above $S(\RR)$ denotes the Schwartz space of smooth
rapidly decreasing functions.

\section{Statement of the results}
\label{sec:results}

For each integer $z \ge 0$, let $H_z (x) = (-1)^z e^{x^2} \cfrac{d^z}{dx^z} e^{-x^2}$ be the Hermite polynomial and $h_z (x) =(z! {\sqrt{2\pi}})^{-1} H_{z} (x) e^{-x^2}$ the Hermite function. The set $\{ h_z, z\ge 0\}$ is an orthonormal basis of ${\bb L}^2 (\RR)$. Consider in ${\bb L}^2(\RR)$ the operator $K_0 = x^2-\Delta$, $\Delta$ being the Laplacian on $\RR$. For an integer $k \ge 0$, denote by ${\bb H}_k$ the Hilbert space induced by $S (\RR)$ and the scalar product $\langle \cdot,\cdot\rangle_{k}$ defined by $\langle f, g \rangle_k= \langle f, K_0^k g \rangle_0$, where $\langle \cdot,\cdot\rangle_0$ denotes the inner product of $\LL^2 (\RR)$ and denote by ${\bb H}_{-k}$ the dual of ${\bb H}_k$, relatively to this inner product. Let $\langle\cdot\rangle$ represent the average with respect to the Lebesgue measure.

We take the infinite system at equilibrium under the Gibbs measure $\mu_{\bar \beta,\bar \lambda}$ corresponding to a mean energy $\bar e$ and a mean volume $\bar v$. Our goal is to study
the energy-volume fluctuation field in the time-scale $tn^{1+\alpha}$, $\alpha \ge0$:
\begin{equation}
  \label{eq:YY}
  \mathcal{Y}^{n,\alpha}_t  (\bG) =\frac{1}{\sqrt{n}} \sum_{x\in \ZZ}
  \bG\left(x/n\right) \cdot \left({\bar \omega}_x (tn^{1+\alpha})  - {\bar \omega}\right),
\end{equation}
where for $q\in{\mathbb{R}}$,  $x \in \ZZ$,
\[
\bG(q) = \left(
\begin{array}{c}
G_{1} (q)\\
G_2 (q)
\end{array}
\right), \quad
{\bar \omega}_x= \left(
\begin{array}{c}
V_b (\eta_x) \\
\eta_x
\end{array}
\right)
\]
and $G_1, G_2$ are test functions belonging to $S(\RR)$.

If $E$ is a Polish space then $D(\RR^+, E)$ (resp. $C(\RR^+, E)$) denotes the space of $E$-valued functions, right continuous with left limits (resp. continuous), endowed with the Skorohod (resp. uniform) topology. Let $Q^{n,\alpha}$ be the probability measure on ${D}(\RR^+,{\bb H}_{-k} \times {\bb H}_{-k})$ induced by the fluctuation field ${\mc Y}^{n,\alpha}_t$ and $\mu_{\bar \beta,\bar \lambda}$. Let $\mathbb{P}_{\mu_{\bar \beta,\bar \lambda}}$ denote the probability measure on ${ D}(\RR^+, \RR^{\ZZ})$ induced by $(\eta(t))_{t\geq{0}}$ and $\mu_{\bar \beta,\bar \lambda}$. Let $\mathbb{E}_{\mu_{\bar \beta,\bar \lambda}}$ denote the expectation with respect to $\mathbb{P}_{\mu_{\bar \beta,\bar \lambda}}$.

\begin{theo}
\label{th:fluct-hs}
Fix an integer $k>2$. Denote by $Q$ the probability measure on $C(\RR^+, {\bb H}_{-k} \times {\bb H}_{-k})$ corresponding to a stationary Gaussian process with mean $0$ and covariance given by
\begin{equation*}
{\mathbb E}_{Q} \left[ \mathcal{Y}_t (\mb H) \, \mathcal{ Y}_s (\mb G) \right] =  \langle \,{\bar T}_t^{-} \mb H\;  \cdot \;   \bar \chi \;  {\bar T}_s^{-}\mb G \,  \rangle
\end{equation*}
for every $0 \le s \le t$ and $\mb H, \mb G$ in ${\bb H}_k \times {\bb H}_k$. Here ${\bar \chi}:={\bar \chi} ({\bar \beta}, {\bar \lambda})$ is the equilibrium covariance matrix {\footnote{See (\ref{eq:chibar2}) for an explicit expression.}} of ${\bar \omega}_0$. Then, the sequence $(Q^{n,0})_{n \ge 1}$ converges weakly,   as $n\to\infty$, to the probability measure $Q$.
\end{theo}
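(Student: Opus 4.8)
The plan is to run the standard martingale scheme for equilibrium fluctuations: Dynkin's formula, a Boltzmann--Gibbs reduction of the microscopic current, tightness in ${\bb H}_{-k}\times{\bb H}_{-k}$, and identification of every limit point through its covariance. Fix $\mathbf{G}=(G_1,G_2)^T$ with $G_1,G_2\in S(\RR)$. Since $\alpha=0$ the field is driven by the accelerated generator $nL$, so Dynkin's formula reads
\[
\mathcal{Y}^{n,0}_t(\mathbf{G})=\mathcal{Y}^{n,0}_0(\mathbf{G})+\int_0^t nL\,\mathcal{Y}^{n,0}_s(\mathbf{G})\,ds+M^{n}_t(\mathbf{G}),
\]
with $M^n_t(\mathbf{G})$ a martingale. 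Writing $\bar J_{x,x+1}=({\bar j}_{x,x+1},{\bar j}'_{x,x+1})^T$ and using $L\bar\omega_x=-\nabla\bar J_{x-1,x}$, a discrete summation by parts together with a Taylor expansion of $\mathbf{G}$ rewrite the drift as $\tfrac{1}{\sqrt n}\sum_x\partial_q\mathbf{G}(x/n)\cdot\bar J_{x,x+1}$ up to $o(1)$; centering is harmless, since the constant $\langle\bar J\rangle_{\mu_{\bar\beta,\bar\lambda}}$ pairs with $\sum_x\partial_q\mathbf{G}(x/n)=O(1)$ for Schwartz $\mathbf{G}$.

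The crux is the Boltzmann--Gibbs principle, namely that
\[
\int_0^t\frac{1}{\sqrt n}\sum_x\partial_q\mathbf{G}(x/n)\cdot\Big(\bar J_{x,x+1}-\langle\bar J\rangle_{\mu_{\bar\beta,\bar\lambda}}-\bar M\,(\bar\omega_x-\bar\omega)\Big)\,ds\longrightarrow 0
\]
in $\LL^2(\mathbb{P}_{\mu_{\bar\beta,\bar\lambda}})$, where $\bar M=\nabla\bar{\mf J}(\bar\omega)$. I would establish this through the orthogonal polynomial decomposition of $\LL^2(\mu_{\bar\beta,\bar\lambda})$ of Section \ref{sec:dua}: the centered current splits into its projection onto the conserved densities, which is precisely $\bar M(\bar\omega_x-\bar\omega)$ and matches the Jacobian of the macroscopic flux, plus a higher-order component orthogonal to the conserved fields, whose associated additive functional has variance of order $t/n$ (a bound obtained from the spectral properties of the symmetric part $\gamma S$, for which the non-conserved current has finite negative-Sobolev norm). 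Granting the replacement, the drift becomes $\int_0^t\mathcal{Y}^{n,0}_s(\bar M^T\partial_q\mathbf{G})\,ds$, the linearized Euler transport. At the same time the quadratic variation of $M^n_t(\mathbf{G})$ comes only from $\gamma S$ (the antisymmetric part $A$ contributes none, being a derivation) and is $O(t/n)$: each exchange moves the field by $O(n^{-3/2})$, there are $O(n)$ active bonds across the support, and time is accelerated by $n$; hence $M^n_t(\mathbf{G})\to 0$ in $\LL^2$.

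For tightness I would use Mitoma's criterion, reducing to tightness of the scalar processes $\mathcal{Y}^{n,0}_\cdot(\mathbf{G})$, which follows from the drift bound above and the vanishing quadratic variation via Aldous' criterion; the hypothesis $k>2$ provides the Hilbert--Schmidt embeddings needed to place the limit in $C(\RR^+,{\bb H}_{-k}\times{\bb H}_{-k})$. The initial field is a normalized sum of independent centered vectors under the product measure $\mu_{\bar\beta,\bar\lambda}$ (whose one-site exponential moments are finite), so the classical central limit theorem gives a Gaussian limit with covariance $\langle\mathbf{H}\cdot\bar\chi\mathbf{G}\rangle$, $\bar\chi$ being the covariance of $\bar\omega_0$. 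Passing to the limit in Dynkin's formula, every limit point satisfies the deterministic equation $d\mathcal{Y}_t(\mathbf{G})=\mathcal{Y}_t(\bar M^T\partial_q\mathbf{G})\,dt$, whose solution is $\mathcal{Y}_t(\mathbf{G})=\mathcal{Y}_0(\bar T^-_t\mathbf{G})$, because $\bar T^-$ is generated by $\partial_t-\bar M^T\partial_q=0$. Then
\[
\mathbb{E}\big[\mathcal{Y}_t(\mathbf{H})\,\mathcal{Y}_s(\mathbf{G})\big]=\mathbb{E}\big[\mathcal{Y}_0(\bar T^-_t\mathbf{H})\,\mathcal{Y}_0(\bar T^-_s\mathbf{G})\big]=\langle\,\bar T^-_t\mathbf{H}\cdot\bar\chi\,\bar T^-_s\mathbf{G}\,\rangle,
\]
which matches the stated covariance and identifies the limit uniquely, yielding weak convergence of $Q^{n,0}$ to $Q$.

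The main obstacle is the Boltzmann--Gibbs step, specifically the $O(t/n)$ control of the non-conserved part of the current: the current is genuinely nonlinear (it involves $e^{-b(\eta_x+\eta_{x+1})}$ and $e^{-b\eta_x}$), so its exact projection onto the conserved modes and the sharp variance bound on the remainder are not elementary. This is precisely where I expect the explicit orthogonal polynomials for $\mu_{\bar\beta,\bar\lambda}$, the microscopic change of variables of Section \ref{sec:cv}, and the spectral properties of the noise $\gamma S$ to carry the argument.
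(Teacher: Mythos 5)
Your architecture --- Dynkin decomposition with vanishing martingale (your $O(t/n)$ quadratic variation computation is correct), Boltzmann--Gibbs reduction of the current to its linear projection, Mitoma tightness, CLT for the initial field, and identification of the limit through $\mathcal{Y}_t(\mathbf{G})=\mathcal{Y}_0(\bar T_t^-\mathbf{G})$ --- is exactly the paper's, which carries it out after the change of variables $\xi_x=e^{-b\eta_x}$ of Section~\ref{sec:cv}, using $\mathcal{Y}_t^{n,\alpha}(\mathbf{G})=\mathcal{Z}_t^{n,\alpha}(\Lambda^T\mathbf{G})$ and ${\bar M}=\Lambda M\Lambda^{-1}$ to transfer the result. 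As you anticipated, in the $\xi$-variables the current becomes polynomial: its non-conserved part is \emph{exactly} the single degree-two orthogonal polynomial $-b^2(\xi_x-\rho)(\xi_{x+1}-\rho)$ plus discrete gradients (removed by a second summation by parts), so the projection onto the conserved fields is explicit and the replacement reduces to a variance bound on one degree-two additive functional.

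There is, however, one concrete flaw in your justification of that variance bound: the claim that the non-conserved current has \emph{finite} negative-Sobolev norm with respect to $\gamma\mc S$, yielding a variance of order $t/n$. This is false for this model and cannot be repaired. Restricted to the degree-two sector, the noise $\mc S$ acts (in the coordinate given by the distance between the two dual particles, cf.\ Lemma~\ref{lem:df12}) as a \emph{one-dimensional} random walk, whose Green function at a point diverges; the resolvent norm $\ll W_{0,1},(z-\gamma\mc S)^{-1}W_{0,1}\gg$ blows up like $z^{-1/2}$ as $z\to 0$. Indeed, finiteness of this $H_{-1}$ norm would make the Green--Kubo integral \eqref{eq:GK00} converge, contradicting Theorem~\ref{th:diffusivity}, which shows even the full-generator resolvent diverges like $z^{-1/4}$: the absence of such a bound (and of a spectral gap, which blocks the multiscale scheme of \cite{G}) is precisely the source of the superdiffusivity this paper establishes. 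What the paper uses instead (Theorem~\ref{th:BGII}, applied here with $\alpha=0$) is the time-dependent resolvent inequality of \cite{S.}, adapted as in \cite{C.L.O.},
\begin{equation*}
{\mathbb E}_{\nu_{\beta,\lambda}}\Big[\Big(\int_0^t f(\xi(sn))\,ds\Big)^2\Big]\;\le\; C\int_0^t\big\langle f,(s^{-1}-n\mc L)^{-1}f\big\rangle_{\nu_{\beta,\lambda}}\,ds,
\end{equation*}
followed by discarding the antisymmetric part and computing the degree-two resolvent of $\gamma\mc S$ by Fourier analysis: the smeared quadratic field has $H_{-1,\ve}$-norm of order $n\,\ve^{-1/2}$, which with $\ve=1/(sn)$ gives a total bound of order $t^{5/2}n^{-1/2}$ --- vanishing, but far weaker than $t/n$, and strong enough only for $\alpha<1/3$, which is exactly why Theorem~\ref{th:fluct-ds} carries that restriction. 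With this substitution in the Boltzmann--Gibbs step, your proof coincides with the paper's.
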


A byproduct of Theorem \ref{th:fluct-hs} is a Central Limit Theorem for the energy flux and for the volume flux through a fixed bond. Despite it is not directly related to the problem of anomalous diffusion it has a probabilistic interest.  For that purpose, fix a site $x\in{\bb Z}$, let $\mathcal{E}_{x,x+1}^n(t)$ (resp. $\mathcal{V}_{x,x+1}^n(t)$) denote the energy (resp. volume) flux through the bond $\{x,x+1\}$ during the time interval $[0,tn]$. By conservation laws, for any $x\in{\mathbb{Z}}$ it holds that:
\begin{equation*}
\mathcal{E}_{x-1,x}^n(t)-\mathcal{E}_{x,x+1}^n(t):=V_b(\eta_x(tn))-V_b(\eta_x(0))
\end{equation*}
\begin{equation*}
 \Big( \text{resp. \; } \mathcal{V}^n_{x-1,x}(t)-\mathcal{V}_{x,x+1}^n(t):=\eta_x(tn)-\eta_x(0)\Big).
\end{equation*}
 This, together with the previous result allow us to conclude that
\begin{corollary}\label{CLT Energy flux}
 Fix $x\in\bb Z$ and let $Z^{n,e}_t:=\frac{1}{\sqrt n}\{\mathcal{E}_{x,x+1}^n(t)- \bb E_{\mu_{\bar \beta,\bar \lambda}}[\mathcal{E}_{x,x+1}^n(t)]\}$. For every $t\geq{0}$, $(Z^{n,e}_{t})_{n\geq 1}$ converges in law in the sense of finite-dimensional distributions, as $n\to\infty$, to a Brownian motion $Z_{t}^e$ with mean zero and covariance given by
\begin{equation*}
{\mathbb E}_{Q} [Z^e_tZ^e_s]=\frac{2}{\bar\beta^3}(\bar\lambda-b\bar\beta)^2s,
\end{equation*}
for all $s\leq{t}$.
\end{corollary}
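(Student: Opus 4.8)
The proof rests on turning the energy flux into a time increment of the fluctuation field. Summing the microscopic conservation law recalled in the statement over all sites $y>x$ telescopes the currents and gives, at least formally,
\begin{equation*}
\mathcal{E}_{x,x+1}^n(t)=\sum_{y>x}\big(V_b(\eta_y(tn))-V_b(\eta_y(0))\big).
\end{equation*}
Since $x$ is fixed and $x/n\to 0$, after subtracting the mean this is exactly the time increment of the energy component of the field $\mathcal{Y}^{n,0}$ tested against the Heaviside profile $\bG^{+}=(\mathbf 1_{(0,\infty)},0)^{T}$, namely
\begin{equation*}
Z^{n,e}_t=\mathcal{Y}^{n,0}_t(\bG^{+})-\mathcal{Y}^{n,0}_0(\bG^{+}).
\end{equation*}
First I would make this identification rigorous, which requires checking that the centering $\mathbb{E}_{\mu_{\bar\beta,\bar\lambda}}[\mathcal{E}_{x,x+1}^n(t)]$ — carried by the non-convergent boundary term at infinity — is correctly absorbed, so that $Z^{n,e}_t$ is the genuinely convergent centered half-line sum above.

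Since $\bG^{+}\notin \mathbb{H}_k\times\mathbb{H}_k$, Theorem \ref{th:fluct-hs} does not apply to it directly. The plan is to fix a family $(\bG_\varepsilon)_{\varepsilon>0}\subset S(\RR)\times S(\RR)$ converging to $\bG^{+}$ and to split
\begin{equation*}
Z^{n,e}_t=\big(\mathcal{Y}^{n,0}_t(\bG_\varepsilon)-\mathcal{Y}^{n,0}_0(\bG_\varepsilon)\big)+R^{n,\varepsilon}_t,
\end{equation*}
where $R^{n,\varepsilon}_t$ is the field increment tested against $\bG^{+}-\bG_\varepsilon$. For each fixed $\varepsilon$, Theorem \ref{th:fluct-hs} yields convergence of the finite-dimensional distributions of $t\mapsto\mathcal{Y}^{n,0}_t(\bG_\varepsilon)-\mathcal{Y}^{n,0}_0(\bG_\varepsilon)$ to a centered Gaussian process. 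I would then bound $\mathbb{E}_{\mu_{\bar\beta,\bar\lambda}}[(R^{n,\varepsilon}_t)^2]$ by the stationary covariances of $\bar\omega_\cdot$, show that $\limsup_n\mathbb{E}_{\mu_{\bar\beta,\bar\lambda}}[(R^{n,\varepsilon}_t)^2]\to 0$ as $\varepsilon\to 0$, and exchange the limits $n\to\infty$ and $\varepsilon\to 0$; the Gaussian limit being determined by its covariance, this yields the finite-dimensional convergence of $Z^{n,e}$ to a centered Gaussian process.

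It remains to compute the covariance. Using the formula of Theorem \ref{th:fluct-hs} and bilinearity, for $0\le s\le t$ the four cross terms combine, and since the equilibrium covariance matrix $\bar\chi$ is symmetric they collapse to
\begin{equation*}
\mathbb{E}_Q[Z^e_tZ^e_s]=\big\langle (\bar T^{-}_t-\mathrm{Id})\bG^{+}\cdot\bar\chi\,(\bar T^{-}_s-\mathrm{Id})\bG^{+}\big\rangle,
\end{equation*}
which is finite although each separate pairing diverges, because $(\bar T^{-}_r-\mathrm{Id})\bG^{+}$ is the difference of a translated Heaviside and a Heaviside, hence supported on a bounded interval. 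The decisive structural input is that $\det\bar M=0$: the eigenvalues of $\bar M$ are $0$ and $-2b^2(\bar e-b\bar v+1)$, so the semigroup $\bar T^{-}_t$ leaves the null mode invariant and only the sound mode, travelling at the nonzero speed $c$, survives in $(\bar T^{-}_r-\mathrm{Id})\bG^{+}$. Projecting $\bG^{+}$ onto the eigenbasis of $\bar M^{T}$, $(\bar T^{-}_r-\mathrm{Id})\bG^{+}$ becomes (up to sign) the indicator of an interval of length $|c|r$ carried by the sound eigenvector, and the overlap of the intervals attached to $s$ and $t$ has Lebesgue measure $|c|\min(s,t)=|c|s$. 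Hence $\mathbb{E}_Q[Z^e_tZ^e_s]$ is linear in $s$, which identifies the limit as a Brownian motion; inserting the explicit sound speed, the projection coefficient, the sound eigenvector and the entries of $\bar\chi$ from (\ref{eq:chibar2}) yields the prefactor $\tfrac{2}{\bar\beta^3}(\bar\lambda-b\bar\beta)^2$.

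The main obstacle is the second step: the uniform-in-$n$ control of the remainder $R^{n,\varepsilon}_t$. This amounts to estimating the variance of the far-field energy increment $\tfrac{1}{\sqrt n}\sum_y(\mathbf 1_{y>x}-G_\varepsilon(y/n))(V_b(\eta_y(tn))-V_b(\eta_y(0)))$ and requires quantitative bounds on the space-time correlations of $V_b(\eta_\cdot)$ in the hyperbolic scaling, where transport is ballistic along the characteristics. I expect these to follow from the same orthogonal decomposition of the generator and Boltzmann--Gibbs type estimates used to establish Theorem \ref{th:fluct-hs}, now applied to the non-compactly supported profile $\bG^{+}$, together with an argument showing that the telescoping boundary term at infinity is negligible --- which is precisely where the rapid decay of the approximants $\bG_\varepsilon$ enters.
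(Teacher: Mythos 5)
Your overall route is exactly the paper's: write the flux as the time increment of the field tested against a Heaviside profile, approximate by admissible test functions, exchange limits, and compute the covariance through the explicit transport semigroup. Your covariance computation is correct and coincides with the paper's calculation: in the $\xi$-variables one has $\Lambda^T H^1_0=(G,-G)^T$ with $G$ the Heaviside, the formula for $T^-_t$ gives $(T^-_r-\mathrm{Id})(G,-G)^T=\big((1-\tfrac1\rho)(G(\cdot-2b^2\rho r)-G),\,0\big)^T$, i.e. only the travelling mode survives, supported on an interval of length $2b^2\rho r$; the overlap of the intervals for $s\le t$ has measure $2b^2\rho s$, and $(1-\tfrac1\rho)^2\tfrac{\lambda+1}{\beta^2}\,2b^2\rho s=\tfrac{2}{\bar\beta^3}(\bar\lambda-b\bar\beta)^2 s$ via \eqref{eq:relblbl} and \eqref{eq:chimpot}. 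Your eigenbasis formulation of this is equivalent to the paper's explicit semigroup computation.

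The genuine gap is in the step you yourself flag, and the first version of your remainder bound cannot work as stated. Every $\bG_\varepsilon\in S(\RR)\times S(\RR)$ decays at $+\infty$ while $\bG^+$ equals $1$ there, so $\bG^+-\bG_\varepsilon$ never lies in $\LL^2(\RR)$; consequently the stationary-covariance estimate $\mathbb{E}_{\mu_{\bar\beta,\bar\lambda}}[(R^{n,\varepsilon}_t)^2]\le 4\langle(\bG^+-\bG_\varepsilon)\cdot\bar\chi\,(\bG^+-\bG_\varepsilon)\rangle$ is $+\infty$ for every $\varepsilon$, and no one-step Schwartz approximation controlled by a static bound can close the argument. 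The paper's fix (following \cite{G}) is a two-step scheme: first replace the Heaviside by the compactly supported ramp $G_{\ell,x}(y)=(1-y/\ell)\mathbf{1}_{\{x\le y\le x+\ell\}}$ and prove the \emph{dynamical} statement of Proposition \ref{prop nec} --- that the flux minus the ramp-tested increment vanishes in $\LL^2$ as $\ell\to\infty$ --- by the argument of Proposition 4.1 of \cite{G}, which exploits the conservation law and the martingale decomposition, i.e. precisely the space-time correlation control you defer to the end; only then is the ramp, which \emph{is} in $\LL^2(\RR)$, approximated by Schwartz functions so that Theorem \ref{th:fluct-hs} applies. You correctly identify this as the main obstacle and your proposed tools (orthogonal decomposition, Boltzmann--Gibbs estimates) are plausible, but until a substitute for Proposition \ref{prop nec} is supplied, the interchange of the limits $n\to\infty$ and $\varepsilon\to 0$ in your argument is unjustified; note the paper itself defers this proof to \cite{G} rather than attempting a static $\LL^2$ bound.
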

\begin{corollary}\label{CLT volume flux}
 Fix $x\in\bb Z$ and let $Z^{n,v}_t:=\frac{1}{\sqrt n}\{\mathcal{V}_{x,x+1}^n(t)-{{ \bb E_{\mu_{\bar \beta,\bar \lambda}}}}[\mathcal{V}_{x,x+1}^n(t)]\}$. For every $t\geq 0$, $(Z^{n,v}_t)_{n\geq 1}$ converges in law in the sense of finite-dimensional distributions, as $n\to\infty$, to a Brownian motion  $Z_{t}^v$ with mean zero and covariance given by
\begin{equation*}
{\mathbb E}_{Q}[Z^v_tZ^v_s]=\frac{2}{\bar\beta}s,
\end{equation*}
for all $s\leq{t}$.
\end{corollary}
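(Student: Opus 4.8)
The plan is to deduce the statement directly from the equilibrium fluctuation result of Theorem~\ref{th:fluct-hs}, by realizing the rescaled volume flux through a fixed bond as an increment of the volume component of the field $\mathcal Y^{n,0}$ tested against a Heaviside function. First I would exploit the volume conservation law $L(\eta_x)=-\nabla\bar j'_{x-1,x}$. Writing the accumulated flux as an additive functional and using the identity $\eta_x(tn)-\eta_x(0)=\mathcal V^n_{x-1,x}(t)-\mathcal V^n_{x,x+1}(t)$ from the excerpt, a discrete summation by parts gives, for every $G\in S(\RR)$, the exact relation $\mathcal Y^{n,0}_t(0,G)-\mathcal Y^{n,0}_0(0,G)=\frac{1}{\sqrt n}\sum_{x}\big(G(\tfrac{x+1}{n})-G(\tfrac xn)\big)\,\mathcal V^n_{x,x+1}(t)$. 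As $G$ tends to $H=\mathbf{1}_{(0,\infty)}$, the discrete gradient $G(\tfrac{x+1}{n})-G(\tfrac xn)$ concentrates on the bond straddling the origin, so the right-hand side collapses to $\frac{1}{\sqrt n}\mathcal V^n_{0,1}(t)$; by translation invariance this is the flux through the fixed bond of the statement. The centering by $\mathbb{E}_{\mu_{\bar\beta,\bar\lambda}}[\mathcal V^n_{0,1}(t)]$ is exactly what absorbs the otherwise divergent contribution of the nonzero systematic current $\langle\bar j'_{x,x+1}\rangle_{\mu_{\bar\beta,\bar\lambda}}=2b(\bar e-b\bar v+1)$, which survives because $\sum_x\big(H(\tfrac{x+1}{n})-H(\tfrac xn)\big)=1\neq0$.

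Granting this reduction, $Z^{n,v}_t$ is, up to a vanishing error, the increment $\mathcal Y^{n,0}_t(0,H)-\mathcal Y^{n,0}_0(0,H)$, and Theorem~\ref{th:fluct-hs} identifies its limit as the Gaussian increment $\mathcal Y_t(0,H)-\mathcal Y_0(0,H)$. Gaussianity is therefore automatic, and it remains to show that the limiting covariance equals $\tfrac{2}{\bar\beta}\min(t,s)$, which characterizes the announced Brownian motion. By the covariance formula of Theorem~\ref{th:fluct-hs}, this covariance is $\int_\RR (\bar T^-_t-I)(0,H)^T\,\bar\chi\,(\bar T^-_s-I)(0,H)\,dq$. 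The decisive simplification is that $\det\bar M=0$: the matrix $\bar M$ has eigenvalues $0$ and $c=-2b^2(\bar e-b\bar v+1)$, so the stationary mode is annihilated by $\bar T^-_a-I$ and only the mode propagating at speed $c$ survives. Writing $r=(1,-b)^T$ for the right eigenvector of $\bar M^T$ associated with $c$ and $\ell$ for its biorthonormal left eigenvector, one gets $(\bar T^-_a-I)(0,H)=\ell_2\,r\,[\,H(\cdot+ca)-H(\cdot)\,]$, and the spatial integral reduces to $\int_\RR\big(H(q+ct)-H(q)\big)\big(H(q+cs)-H(q)\big)\,dq=|c|\min(t,s)$, the length of the interval swept by the travelling front.

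It then remains to evaluate the prefactor $\ell_2^{\,2}\,(r^T\bar\chi r)\,|c|$, and here the exponential form of $V_b$ makes everything explicit. Since $V_b(\eta)-b\eta=e^{-b\eta}-1$, the combination selected by $r=(1,-b)^T$ is $r^T\bar\chi r=\mathrm{Var}_{\mu_{\bar\beta,\bar\lambda}}(e^{-b\eta_0})$; under $\mu_{\bar\beta,\bar\lambda}$ the variable $e^{-b\eta_0}$ is Gamma distributed with shape $\bar\lambda/b$ and rate $\bar\beta$, so $r^T\bar\chi r=\bar\lambda/(b\bar\beta^2)$ and $\langle e^{-b\eta_0}\rangle=\bar e-b\bar v+1=\bar\lambda/(b\bar\beta)$. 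With $|c|=2b^2(\bar e-b\bar v+1)=2b\bar\lambda/\bar\beta$ and $\ell_2=-[\,b(\bar e-b\bar v+1)\,]^{-1}$, a direct multiplication collapses to $\tfrac{2}{\bar\beta}$, the stated slope; running the same computation with $(H,0)$ in place of $(0,H)$ reproduces the constant of Corollary~\ref{CLT Energy flux}. Finite-dimensional convergence to a Brownian motion then follows from the joint Gaussian convergence supplied by Theorem~\ref{th:fluct-hs}.

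The main obstacle is that the Heaviside $H$ is not a Schwartz function, so Theorem~\ref{th:fluct-hs}, stated for test functions in $\mathbb{H}_k\times\mathbb{H}_k$, cannot be applied to it directly. The real work is therefore an approximation argument: one approximates $H$ by $G_\delta\in S(\RR)$, applies Theorem~\ref{th:fluct-hs} to each $G_\delta$, and controls, uniformly in $n$ and in $L^2(\PP_{\mu_{\bar\beta,\bar\lambda}})$, the error $\frac{1}{\sqrt n}\sum_x\big(G_\delta(\tfrac{x+1}{n})-G_\delta(\tfrac xn)-\delta_{x,0}\big)\mathcal V^n_{x,x+1}(t)$ between the regularized field increment and $Z^{n,v}_t$, together with the matching convergence of the limiting covariances as $\delta\to0$. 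This requires equilibrium second-moment bounds for weighted sums of the fluxes and a careful bookkeeping of the divergent centering; once these uniform estimates are in place, exchanging the limits $n\to\infty$ and $\delta\to0$ yields the claim.
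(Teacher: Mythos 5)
Your proposal is correct and follows essentially the same route as the paper: the flux is realized as an increment of the fluctuation field tested against a Heaviside-type function, the non-admissibility of that test function is handled by an $L^2$ approximation argument (the paper's Proposition~\ref{prop nec} with the ramp functions $G_{\ell,x}$, proved as in \cite{G}), and the constant comes from the covariance formula of Theorem~\ref{th:fluct-hs}. The only difference is presentational: you evaluate the covariance by diagonalizing $\bar M^T$ (eigenvalues $0$ and $c=-2b^2(\bar e-b\bar v+1)$, with $r=(1,-b)^T$, $\ell_2=-[b(\bar e-b\bar v+1)]^{-1}$, $r^T\bar\chi\, r=(\lambda+1)/\beta^2$) directly in the $(\bar e,\bar v)$ variables, whereas the paper passes to the $\xi$-variables via $\Lambda$ and uses the explicit formula for $T^-_t$ --- an equivalent computation, and your numbers all check out, yielding exactly $\tfrac{2}{\bar\beta}\min(s,t)$.
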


We notice that, according to Corollary \ref{CLT Energy flux}, the limiting {{energy flux}} $Z_t^e$ has a vanishing variance for $\bar\lambda=b\bar\beta$ which is equivalent to $\bar{e} = b\bar{v}$. Last equivalence  is  a consequence of \eqref{eq:chimpot} and \eqref{eq:relblbl}.

%

The theorem above means that in the hyperbolic scaling the fluctuations are trivial: the initial fluctuations are transported by the linearized system of (\ref{eq:hl1euler}). To see a nontrivial behavior we have to study, in the transport frame, the fluctuations at a longer time scale $t n^{1+\alpha}$, with $\alpha>0$. Thus, we consider the fluctuation field ${\widehat {\mc Y}}_{\cdot}^{n,\alpha}$, $\alpha>0$, defined, for any $\bG \in S(\RR) \times S(\RR)$, by
\begin{equation}\label{longer density field}
{\widehat {\mc Y}}_t^{n,\alpha} (\bG)= {\mc Y}_t^{n, \alpha} \left( {\bar T}^+_{t n^{\alpha}} \bG \right).
\end{equation}
According to the fluctuating hydrodynamics theory (\cite{Sp}, pp. 85-96), in the case of a normal (diffusive) behavior $\alpha=1$,  the field $( {\widehat{\mc Y}}_t^{n,\alpha} )\,_{t \ge 0}$ should converge to the stationary field $({\widehat{\mc Y}}_t \,)\,_{t \ge 0}$ simply related to the solution $(\widehat{\mc Z}_t\,)\,_{t \ge 0}$ of the linear two dimensional vector valued (infinite-dimensional) stochastic partial differential equation
\begin{equation}
\label{eq:OUeq}
\partial_t {\widehat {\mc Z_t}}= \nabla \cdot \left( \, {\mc D} \,  \nabla {\widehat {\mc Z}_t} \, \, \right) + \sqrt{2  {\mc D} {\bar \chi}}\,  \nabla \cdot W{{_t}}.
\end{equation}
Here $W_t$ is a standard two-dimensional vector valued space-time white noise and the coefficient ${\mc D}:= {\mc D} ({\bar e},{\bar v})$ is expressed by a Green-Kubo formula ( see (\ref{eq:GK00})).
As above, let $\widehat Q^{n,\alpha}$ be the probability measure on ${D}(\RR^+,{\bb H}_{-k} \times {\bb H}_{-k})$ induced by the fluctuation field $\widehat{\mc Y}^{n,\alpha}_t$ and $\mu_{\bar \beta,\bar \lambda}$.
Our second main theorem shows that the correct scaling exponent $\alpha$ is greater or equal than $1/3$:

\begin{theo}
\label{th:fluct-ds}
Fix an integer $k>1$ and $\alpha<1/3$.  Denote by $Q$ the probability measure on $C(\RR^+, {\bb H}_{-k} \times {\bb H}_{-k})$ corresponding to a stationary Gaussian process with mean $0$ and covariance given by
\begin{equation*}
{\mathbb E}_{Q} \left[ \mathcal{Y}_t (\mb H) \, \mathcal{ Y}_s (\mb G) \right] =  \langle \,  \mb H \;\cdot    \;  {\bar \chi} \;  \mb G   \rangle
\end{equation*}
for every $0 \le s \le t$ and $\mb H, \mb G$ in ${\bb H}_k \times {\bb H}_k$. Then, the sequence $(\widehat Q^{n,\alpha})_{n \ge 1}$ converges weakly, as $n\to\infty$, to the probability measure $Q$.
\end{theo}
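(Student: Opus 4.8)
The plan is to use the Holley--Stroock martingale scheme: establish tightness of $(\widehat Q^{n,\alpha})_{n\ge 1}$ in $D(\RR^+,\bb H_{-k}\times\bb H_{-k})$ and identify every limit point with the stationary field $Q$. Since the covariance in the statement is independent of $s$ and $t$, the limit field is constant in time, so the identification of limit points reduces to two facts. First, the initial field $\widehat{\mc Y}^{n,\alpha}_0(\mb G)=\mc Y^{n,\alpha}_0(\mb G)$ (note $\bar T^+_0=\mathrm{Id}$) is a normalized sum of i.i.d. variables under the product measure $\mu_{\bar\beta,\bar\lambda}$, hence asymptotically Gaussian with covariance $\langle\mb G\cdot\bar\chi\,\mb G\rangle$ by the static central limit theorem. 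Second, and this is the real content, for every $\mb G\in S(\RR)\times S(\RR)$ and $T>0$ one must show
\[
\lim_{n\to\infty}\EE_{\mu_{\bar\beta,\bar\lambda}}\Big[\sup_{0\le t\le T}\big(\widehat{\mc Y}^{n,\alpha}_t(\mb G)-\widehat{\mc Y}^{n,\alpha}_0(\mb G)\big)^2\Big]=0
\]
for $\alpha<1/3$; tightness then follows from the same $L^2$ bounds and Mitoma's criterion, the hypothesis $k>1$ guaranteeing the summability needed to lift the one-dimensional projections to an $\bb H_{-k}\times\bb H_{-k}$-valued process.

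To analyse the increment I would write the Dynkin decomposition of $\widehat{\mc Y}^{n,\alpha}_t(\mb G)=\mc Y^{n,\alpha}_t(\mb G_t)$ with $\mb G_t=\bar T^+_{tn^\alpha}\mb G$. Since $A$ is a first-order differential operator its carr\'e du champ vanishes, so the martingale comes only from the exchange noise $\gamma S$; a direct computation of $\Gamma_S$ shows its quadratic variation is of order $n^{\alpha-1}$, hence negligible for $\alpha<1$. The drift equals $\int_0^t[\,n^{1+\alpha}L\mc Y^{n,\alpha}_s(\mb G_s)+\mc Y^{n,\alpha}_s(\partial_s\mb G_s)\,]\,ds$. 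Using $L(V_b(\eta_x))=-\nabla\bar j_{x-1,x}$ and $L(\eta_x)=-\nabla\bar j^{\prime}_{x-1,x}$ and summing by parts, the generator term becomes $n^{\alpha}n^{-1/2}\sum_x\partial_q\mb G_s(x/n)\cdot\bar{\mb j}_{x,x+1}$ up to the discrete-gradient error. The mean current integrates to zero against $\partial_q\mb G_s$, and the dissipative pieces $-\gamma\nabla V_b(\eta_x),-\gamma\nabla\eta_x$ of the currents produce, after a further summation by parts, a contribution of order $n^{\alpha-1}$. The structural point is that the first-order Boltzmann--Gibbs principle replaces the degree-one part of the current fluctuation by $\bar M$ applied to the conserved fields; since $\bar M=\nabla\bar{\mf J}(\bar\omega)$ and $\partial_s\mb G_s=-n^\alpha\bar M^T\partial_q\mb G_s$ by the very definition of $\bar T^+$, this linear drift cancels exactly against $\mc Y^{n,\alpha}_s(\partial_s\mb G_s)$. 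This cancellation is the whole reason for working in the transport frame.

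After these cancellations the only surviving drift is the genuinely nonlinear (degree-two) part of the energy current, the term $-b^2 e^{-b(\eta_x+\eta_{x+1})}$. In the orthonormal polynomials of the single-site Gibbs factor --- here the Laguerre polynomials of $\phi_x=e^{-b\eta_x}$, which is Gamma-distributed; see Section~\ref{sec:dua} --- its degree-two component is proportional to $\xi_x\xi_{x+1}$, with $\xi_x$ the normalized degree-one variable. The theorem thus reduces to proving that
\[
\lim_{n\to\infty}\EE_{\mu_{\bar\beta,\bar\lambda}}\Big[\Big(\int_0^t c_0\,n^{\alpha}n^{-1/2}\sum_x\psi_s(x/n)\,\xi_x\xi_{x+1}(sn^{1+\alpha})\,ds\Big)^2\Big]=0 ,
\]
for $\alpha<1/3$, where $\psi_s=\partial_q(\bar T^+_{sn^\alpha}\mb G)_1$ and $c_0$ is a model-dependent constant; this is the \emph{main obstacle}. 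I would bound the left-hand side by the variance inequality for additive functionals, which for the accelerated process yields $C\,t\,\langle R,(t^{-1}-n^{1+\alpha}\gamma S)^{-1}R\rangle$, where $R$ is the integrand above. This is a legitimate upper bound for the non-reversible dynamics, because the antisymmetric part $A$ drops out of the quadratic form associated with $\lambda-L$, so only the symmetric part $\gamma S$ appears.

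The delicate feature is that the unregularized $H_{-1}$ norm of $W:=\sum_x\psi_s(x/n)\,\xi_x\xi_{x+1}$ is infinite: the exchange $S$ makes the pair of quanta perform a random walk whose relative coordinate is one-dimensional and therefore recurrent, so $(-\gamma S)^{-1}$ develops an infrared divergence. Rescaling, $(t^{-1}-n^{1+\alpha}\gamma S)^{-1}=n^{-(1+\alpha)}(\lambda-\gamma S)^{-1}$ with $\lambda=(tn^{1+\alpha})^{-1}$, and the recurrence gives $\langle W,(\lambda-\gamma S)^{-1}W\rangle\sim\lambda^{-1/2}\sum_x\psi_s(x/n)^2\sim n^{(1+\alpha)/2}\cdot n$. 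Combining this with the factors $n^{2\alpha-1}$ from the normalization and $n^{-(1+\alpha)}$ from the rescaling produces a bound of order $n^{(3\alpha-1)/2}$, which vanishes precisely when $\alpha<1/3$; this is the origin of the critical exponent, and the step I expect to be hardest and most model-dependent. Once this estimate is in hand the increment tends to $0$ in $L^2$, uniformly on $[0,T]$, so by the static central limit theorem for $\widehat{\mc Y}^{n,\alpha}_0$ and the tightness from Mitoma's criterion every limit point is a time-constant Gaussian field with the prescribed covariance, which identifies the limit uniquely as $Q$ and proves the theorem. Everything other than the degree-two estimate is either soft or already available from the proof of Theorem~\ref{th:fluct-hs}.
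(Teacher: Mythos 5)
Your proposal follows essentially the same route as the paper's proof: Dynkin decomposition of the field in the transport frame, exact cancellation of the linear drift against $\partial_s\big({T}^+_{sn^{\alpha}}\bG\big)$, negligibility of the martingale and of the gradient parts of the current for $\alpha<1$, and reduction to the quadratic Boltzmann--Gibbs estimate for $\sum_x\varphi(sn^\alpha,x/n)(\xi_x-\rho)(\xi_{x+1}-\rho)$, which the paper also bounds via Sethuraman's resolvent inequality with the antisymmetric part dropped, arriving at the same bound of order $n^{(3\alpha-1)/2}$ and hence the threshold $\alpha<1/3$. The only difference is presentational: your ``recurrence of the relative coordinate'' heuristic for $\langle W,(\lambda-\gamma{\mc S})^{-1}W\rangle\sim\lambda^{-1/2}\sum_x\varphi^2$ is made rigorous in the paper through the Dirichlet-form comparison of Lemma \ref{lem:df12}, the removal of the diagonal constraint, and an explicit Fourier computation, but this is the same argument.
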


As in the hyperbolic time scale from the previous result we obtain limiting results for the energy flux and volume flux. In this case,  we need to define the energy and volume flux through the time dependent bond ${\{u_t^{x,\alpha}(n),u_t^{x,\alpha}(n)+1\}}$, where $u_t^{x,\alpha}(n):= \lfloor x-\frac{-2b\bar\lambda}{\bar \beta} tn^{1+\alpha}\rfloor$ and  $\lfloor u\rfloor$ denotes the biggest integer number smaller or equal to $u$. The justification for taking this reference frame with precisely this velocity will be given ahead in Remark \ref{velocity}. Now, fix a site $x\in{\mathbb{Z}}$ and let $\mathcal{E}^n_{u_t^{x,\alpha}(n)}$ (resp. $\mathcal{V}_{u_t^{x,\alpha}(n)}^n(t)$) denote the energy (resp. volume) flux through the bond $\{u_t^{x,\alpha}(n),u_t^{x,\alpha}(n)+1\}$ during the time interval $[0,tn^{1+\alpha}]$.
Then, from the previous result we conclude that
\begin{corollary}\label{vanishing of Energy flux}
Fix $t\geq{0}$, $x\in\bb Z$ and $\alpha<1/3$. Then
\begin{equation*}
\lim_{n \to \infty}{{ \bb E_{\mu_{\bar \beta,\bar \lambda}}}}\left[ \left( \frac{1}{\sqrt n}\Big\{\mathcal{E}_{u_t^{x,\alpha}(n)}^n(t)-{{ \bb E_{\mu_{\bar \beta,\bar \lambda}}}}[\mathcal{E}_{u_t^{x,\alpha}(n)}^n(t)]\Big\}\right)^2\right]=0.
\end{equation*}
and
\begin{equation*}
\lim_{n \to \infty} {{ \bb E_{\mu_{\bar \beta,\bar \lambda}}}} \left[ \left( \frac{1}{\sqrt n}\Big\{\mathcal{V}_{u_t^{x,\alpha}(n)}^n(t)-{{ \bb E_{\mu_{\bar \beta,\bar \lambda}}}}[\mathcal{V}_{u_t^{x,\alpha}(n)}^n(t)]\Big\}\right)^2\right]=0.
\end{equation*}
\end{corollary}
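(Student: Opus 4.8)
The plan is to deduce Corollary \ref{vanishing of Energy flux} directly from Theorem \ref{th:fluct-ds}, reducing the statement about fluxes through a moving bond to a statement about the fluctuation field evaluated at suitable test functions. First I would exploit the conservation laws exactly as was done for Corollaries \ref{CLT Energy flux} and \ref{CLT volume flux}: the energy flux through a bond is a telescoping object, so summing the local conservation identity $L(V_b(\eta_x)) = -\nabla \bar\jmath_{x-1,x}$ over the sites to the right of the moving bond expresses $\mathcal{E}^n_{u_t^{x,\alpha}(n)}(t)$, up to its mean, in terms of differences of the energy field at time $0$ and at time $tn^{1+\alpha}$. Concretely, the centered flux should be writable as $\mathcal{Y}^{n,\alpha}_t(\mathbf{G}_n) - \mathcal{Y}^{n,\alpha}_0(\mathbf{G}'_n)$ for approximating test functions $\mathbf{G}_n, \mathbf{G}'_n \in S(\RR)\times S(\RR)$ that encode a sharp (Heaviside-type) indicator of the half-line to the right of the bond, with the first coordinate selecting the energy component.

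The key point is the choice of reference frame. The moving bond has velocity $-2b\bar\lambda/\bar\beta$, which by Remark \ref{velocity} is precisely one of the characteristic speeds of the linearized hyperbolic system, i.e. an eigenvalue of $\bar M^T$. The effect of following this characteristic is that, after applying the transport semigroup $\bar T^+_{tn^\alpha}$ built into the definition \eqref{longer density field} of $\widehat{\mc Y}$, the test function that picks out the flux through the comoving bond stays localized rather than being swept away; thus the flux is controlled by $\widehat{\mc Y}^{n,\alpha}$ rather than by the raw field $\mc Y^{n,\alpha}$. I would therefore rewrite the comoving flux in terms of $\widehat{\mc Y}^{n,\alpha}$ and show that the relevant pairing is with a test function whose $\mathbb H_{-k}$-norm contribution, after the subtraction of the two time slices, tends to $0$. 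The natural mechanism is that for $\alpha<1/3$ Theorem \ref{th:fluct-ds} gives convergence of $\widehat{\mc Y}^{n,\alpha}$ to a \emph{stationary} (time-independent in law) Gaussian field whose covariance $\langle \mathbf H \cdot \bar\chi\, \mathbf G\rangle$ does \emph{not} depend on the times $s,t$; hence the increment $\widehat{\mc Y}^{n,\alpha}_t - \widehat{\mc Y}^{n,\alpha}_0$ evaluated on a fixed test function converges to a Gaussian with variance $2(\text{cov}) - 2(\text{cov}) = 0$, i.e.\ it vanishes in $L^2$. The $1/\sqrt n$ normalization then kills the contribution and delivers the stated limit $0$.

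The main obstacle I expect is making the passage from the sharp indicator appearing in the flux to an admissible Schwartz test function, and controlling this uniformly in $n$. The flux counts crossings across a single bond, which corresponds to a step function (indicator of $\{y \ge u_t^{x,\alpha}(n)\}$ in the macroscopic variable $x/n$), and such a function is not in $S(\RR)$; one must mollify it at a scale that vanishes slowly enough to approximate the step yet fast enough that the error in the $\mathbb H_{-k}$-norm, and its image under $\bar T^+_{tn^\alpha}$, is negligible. Controlling the $K_0^k$-weighted norms of these mollified step functions and of their transport by the semigroup (which may both spread and tilt the profile) is the delicate estimate; here the hypothesis $k>1$ and the smoothing properties of $\bar T^+$ on $S(\RR)$ should be used. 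I would handle this by a standard cut-off/mollification argument together with a second-moment (equilibrium $L^2(\mu_{\bar\beta,\bar\lambda})$) bound on the discrete fluctuation field applied to the approximation error, so that the two sources of error — replacing the indicator by a smooth function, and the intrinsic fluctuation of the comoving field — can be sent to zero in that order.

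Finally, I would assemble the pieces: write the centered comoving energy flux as $\widehat{\mc Y}^{n,\alpha}_t(\mathbf G^e_n) - \widehat{\mc Y}^{n,\alpha}_0(\mathbf G^e_n) + (\text{error})$ with $\mathbf G^e_n$ selecting the first (energy) coordinate, use Theorem \ref{th:fluct-ds} to pass to the limiting stationary Gaussian field, observe the increment has vanishing variance by stationarity, and control the error terms by the estimate above. The volume case is identical with $\mathbf G^v_n$ selecting the second coordinate and the current $\bar\jmath'$ in place of $\bar\jmath$. This yields the two limits claimed in Corollary \ref{vanishing of Energy flux}.
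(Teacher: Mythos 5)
Your reduction---telescoping the conservation laws to express the comoving flux as an increment of the fluctuation field, choosing the reference frame at the characteristic speed $2b\bar\lambda/\bar\beta$ (Remark \ref{velocity}), and approximating the Heaviside function by cut-off/mollified test functions---is exactly the paper's strategy (there it is carried out in the $\xi$-variables, using the current identities \eqref{eq:corr-jj} to write $\mathcal{E}=\tilde{\mathcal E}-\tilde{\mathcal V}$, $\mathcal{V}=-\frac{1}{b}\tilde{\mathcal V}$, with the cut-off functions $G_{\ell,x}$ of Proposition \ref{prop nec} and the vector $\bigl(\frac{1}{\rho}G_{\ell,x},-G_{\ell,x}\bigr)^T$ to disentangle the two fluxes). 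The genuine gap is in your final step: you deduce the corollary, which is a \emph{second-moment} statement, from the weak convergence of Theorem \ref{th:fluct-ds}. Stationarity of the limit $Q$ gives only that the increment $\widehat{\mc Y}_t^{n,\alpha}(\bG)-\widehat{\mc Y}_0^{n,\alpha}(\bG)$ converges \emph{in law} to the degenerate Gaussian $0$, and convergence in distribution to $0$ does not imply ${\mathbb E}\left[(\cdot)^2\right]\to 0$. The uniform $L^2$-boundedness that equilibrium provides upgrades this only to $L^p$ convergence for $p<2$; to get $p=2$ you would need uniform integrability of the squares, i.e. $L^{2+\delta}$ (fourth-moment) bounds on the discrete field, which you do not supply and the paper does not establish. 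So the main term of your argument, as written, does not close.

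What the paper does instead---and what repairs your proof---is to bypass the statement of Theorem \ref{th:fluct-ds} and use the quantitative $L^2$ estimate \eqref{vanish Z field}, namely ${\mathbb E}_{\nu_{\beta,\lambda}}\bigl[\bigl(\widehat{\mc Z}_t^{n,\alpha}(\bG)-\widehat{\mc Z}_0^{n,\alpha}(\bG)\bigr)^2\bigr]\to 0$ for $\alpha<1/3$, proved directly via the Dynkin martingale decomposition (whose quadratic variation vanishes for $\alpha<1$) together with the Boltzmann--Gibbs principle II (Theorem \ref{th:BGII}); your proposal never invokes this machinery for the main term. This estimate is itself a second-moment bound, and, being translation invariant (its proof controls everything through $\sum_{x\in\ZZ}\varphi^2(sn^\alpha,x/n)\le Cn$), it tolerates the $n$-dependent shift by $u_t^{x,\alpha}(n)$ in the test function. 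This addresses a second weakness of your route: Theorem \ref{th:fluct-ds} is stated for a fixed $\mb G\in{\bb H}_k\times{\bb H}_k$ and cannot be applied along a sequence of $n$-dependent mollified indicators without extra uniformity, whereas \eqref{vanish Z field} can. With \eqref{vanish Z field} in hand, your mollification scheme and the linear algebra relating $\mathcal{E},\mathcal{V}$ to $\tilde{\mathcal E},\tilde{\mathcal V}$ go through essentially as you describe.
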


Similar results have been obtained in \cite{G} by one of the authors for the asymmetric simple exclusion. The proof of Corollaries \ref{CLT Energy flux}, \ref{CLT volume flux} and \ref{vanishing of Energy flux} follows the same arguments as in \cite{G} once the previous theorems are proved. For that reason we will only give a sketch of their proof. The proof of the theorems is more problematic since the multi-scale analysis performed in \cite{G} relies crucially on the existence of a spectral gap so that we cannot follow \cite{G}. Therefore we propose an alternative approach based on computations of some resolvent norms.

Theorem \ref{th:fluct-ds} does not exclude the possibility of normal fluctuations, i.e. $\alpha=1$. In order to show that the system we consider is really superdiffusive we will show that the transport coefficient ${\mc D}$ which appears in (\ref{eq:OUeq})  is infinite so that the correct scaling exponent $\alpha$ is strictly smaller than $1$. Our third result, stated bellow, shows it is in fact less than $3/4$.

With the notations introduced in the previous section, the {\textit{normalized}} currents are defined by
\begin{equation}\label{norm. currents}
{\hat J}_{x,x+1}(\eta) =
\left(
\begin{array}{c}
{\bar j}_{x,x+1}(\eta)\\
{\bar j}^{\prime}_{x,x+1}(\eta)
\end{array}
\right)
- {\bar {\mf J}} ({\bar \omega}) - (\nabla {\bar{\mf J}}) ({\bar \omega})
\left(
\begin{array}{c}
V_b (\eta_x) -{\bar e}\\
\eta_x -{\bar v}
\end{array}
\right).
\end{equation}

Up to a constant matrix coming from a martingale term (due to the noise) and thus irrelevant for us (see \cite{BBO2}, \cite{BS}), the coefficient ${\mc D}$ is defined by the Green-Kubo formula
\begin{equation}
\label{eq:GK00}
{\mc D} = \int_{0}^{\infty} C(t) \, dt,
\end{equation}
where $$C(t):={\mathbb E}_{\mu_{\bar \beta, \bar \lambda}} \left[ \sum_{x \in \ZZ} {\hat J}_{x,x+1} (\eta(t)) \left[ {\hat J}_{0,1} (\eta(0)) \right]^T \right]$$
is the current-current correlation function.
The signature of the superdiffusive behavior of the system is seen in the divergence of the integral defining ${\mc D}$, i.e. in a slow decay of the current-current correlation function.  We introduce the Laplace transform function ${\mc F} (\gamma, \cdot)$ of the current-current correlation function. It is defined, for any $z>0$ by
\begin{equation*}
{\mc F} (\gamma, z) = \int_{0}^{\infty} e^{-z t} \, C(t)\, dt.
\end{equation*}

Our third theorem is the following lower bound on ${\mc F} (\gamma, z)$. Observe that ${\mc F} (\gamma, z)$ is a square matrix of size $2$ whose $(i,j)$-th entry  is denoted by ${\mc F}_{i,j}$.
\begin{theo}
\label{th:diffusivity}
Fix $\gamma>0$. For any $(i,j)\ne (1,1)$ and any $z>0$ we have
\begin{equation*}
{\mc F}_{i,j} (\gamma, z)=0.
\end{equation*}
There exists a positive constant $c:=c(\gamma)>0$ such that for any $z>0$,
\begin{equation*}
{\mc F}_{1,1} (\gamma, z) \ge c z^{-1/4}.
\end{equation*}
Moreover, there exists a positive constant $C:=C(\gamma)$ such that for any $z>0$,
\begin{equation}
\label{eq:F11c}
C^{-1} {\mc F}_{1,1} (1,z/\gamma) \le {\mc F}_{1,1} (\gamma, z) \le C {\mc F}_{1,1} (1,z/\gamma).
\end{equation}
\end{theo}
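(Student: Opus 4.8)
The plan is to attack the three assertions in turn, reducing each to a variational problem in the Hilbert space $\LL^2(\mu_{\bar\beta,\bar\lambda})$ that is made tractable by the orthogonal polynomial decomposition of the generator announced in Section~\ref{sec:dua}. The underlying principle is that the Laplace transform ${\mc F}(\gamma,z)$ admits a variational (resolvent) representation of the form
\begin{equation*}
\mb a^T\, {\mc F}(\gamma,z)\, \mb a = \sup_{f}\Big\{ 2\,\langle \hat{\mb J}\cdot \mb a,\, f\rangle_{\mu} - \langle f,\, (z - L)f\rangle_{\mu}\Big\},
\end{equation*}
valid because $S$ is symmetric and $A$ antisymmetric in $\LL^2(\mu_{\bar\beta,\bar\lambda})$, so that the symmetric part of $z-L$ is $z-\gamma S$. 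First I would rewrite the normalized currents $\hat J_{x,x+1}$ from \eqref{norm. currents} after the microscopic change of variables of Section~\ref{sec:cv}, and expand them on the orthonormal Hermite-type basis attached to $\mu_{\bar\beta,\bar\lambda}$. The key structural input is that, once centered and after subtracting the linear part $(\nabla\bar{\mf J})(\bar\omega)(\cdots)$, the currents live in the degree-$\le 2$ sector of the polynomial grading, and their projections onto the degree-one sector vanish by construction (this is exactly what the subtraction of the linearized flux achieves).

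For the vanishing statements ${\mc F}_{i,j}=0$ for $(i,j)\ne(1,1)$, I would argue that the volume current $\bar j^\prime_{x,x+1}$, after normalization, reduces to a pure gradient plus a first-degree fluctuation that has already been removed, so that its surviving part is orthogonal to the dynamics in a way that forces the time-integrated correlation to vanish identically; concretely, the degree-two component carrying the nontrivial long-time tail sits entirely in the energy--energy channel, and the cross correlations and the volume--volume correlation decouple from it under the action of the symmetric semigroup. This is a computation that exploits the explicit form of $\bar j$ and $\bar j^\prime$ together with the product structure of $\mu_{\bar\beta,\bar\lambda}$.

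The heart of the matter is the lower bound ${\mc F}_{1,1}(\gamma,z)\ge c\,z^{-1/4}$. Here I would restrict the supremum in the variational formula to a carefully chosen finite-dimensional (or quadratic) trial subspace built from the degree-two Hermite modes, reducing the problem to estimating a quadratic form $\langle g,(z-\gamma S)g\rangle_\mu$ against the fixed energy-current vector. Because $S$ is the exchange (lattice) Laplacian, in Fourier variables it acts as multiplication by $\gamma(1-\cos k)$, and the antisymmetric part $A$ couples the degree-two modes across neighboring sites in a way that, projected onto the relevant sector, behaves like a discrete transport operator. The exponent $z^{-1/4}$ should emerge from matching the two scales: the diffusive scale $k\sim\sqrt{z}$ coming from $\gamma(1-\cos k)$ and the transport coupling coming from $A$, so that the dominant contribution concentrates on wave numbers $k\sim z^{1/4}$. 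I expect the main obstacle to be controlling the off-diagonal action of $A$ on the trial functions, i.e.\ showing that the antisymmetric part does not destroy the lower bound but in fact is responsible for the enhancement over the naive diffusive scaling; this is where the explicit orthogonal-polynomial algebra of Section~\ref{sec:dua} must be used to keep the relevant matrix entries under control rather than merely bounded. Finally, the two-sided comparison \eqref{eq:F11c} follows from a simple scaling argument: the change of variables $t\mapsto \gamma t$ (equivalently rescaling the noise strength) maps the generator $A+\gamma S$ onto a $\gamma$-multiple of $A/\gamma + S$, and since only $\gamma$ multiplies $S$ while the current normalization is $\gamma$-independent in the relevant channel, the Laplace variable rescales as $z\mapsto z/\gamma$, yielding the stated equivalence with $\gamma$-dependent constants.
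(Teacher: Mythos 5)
Your overall architecture --- a resolvent/variational representation, the orthogonal polynomial decomposition after the change of variables, degree-two trial functions, and the gradient structure killing the non-$(1,1)$ entries --- is the same as the paper's, but two of your steps contain genuine gaps. First, your variational formula is wrong as stated. Since $\mc A$ is antisymmetric in $\bb L^2(\nu_{\beta,\lambda})$, we have $\langle f, \mc A f\rangle_{\nu_{\beta,\lambda}}=0$, so your right-hand side collapses to
\begin{equation*}
\sup_{f}\Big\{ 2\ll W_{0,1}, f \gg \;-\; \ll f, (z-\gamma{\mc S})f \gg \Big\} \;=\; \ll W_{0,1}, (z-\gamma{\mc S})^{-1} W_{0,1} \gg,
\end{equation*}
the resolvent of the noise alone, which is only an \emph{upper} bound for $\ll W_{0,1},(z-{\mc L})^{-1}W_{0,1}\gg$; restricting the supremum to a trial subspace therefore lower-bounds the wrong object (and would give order $z^{-1/2}$, the one-dimensional free-walk Green function evaluated at the degree-two current, overshooting the correct $z^{-1/4}$ --- a sign the formula cannot be an identity). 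The formula the paper actually uses (Lemma 2.1 of \cite{B.}, displayed in (\ref{eq:varformula11})) carries the extra term $-b^4\ll {\mc A}f, (z-\gamma{\mc S})^{-1}{\mc A}f\gg$, and controlling it is the heart of the proof: for degree-two $f$, the image under $\mf A$ splits into degrees $1$, $2$, $3$; the degree-one part vanishes by a telescoping argument, the degree-two part vanishes after the ${\mc W}^{1/2}$-weighted summation ($\overline{{\mc W}^{1/2}({\mf A}_0 F)}=0$), and the degree-three part ${\mf A}_+F$ is estimated by replacing the interacting Dirichlet forms by free random-walk ones on $\NN$ and $\NN^2$ (Lemmas \ref{lem:compa}, \ref{lem:DF}, \ref{lem:007}) together with an explicit Fourier computation (Lemma \ref{lem:hadrien}). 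None of this appears in your plan. Note also that the paper's optimizer is ${\bar F}(\alpha)=z^{-1/4}e^{-z^{3/4}(\alpha-1)}$, supported at relative distance $\alpha\sim z^{-3/4}$, with the balance $2{\bar F}(1)\sim z\sum_{\alpha}{\bar F}^2(\alpha)\sim \|{\mc W}^{1/2}{\mf A}_+F\|^2_{-1,z,{\rm free}}\sim z^{-1/4}$ and the Dirichlet form $\sim z^{1/4}$ negligible; this does not match your heuristic of wave numbers $k\sim z^{1/4}$. (Two smaller points: the basis attached to the Gamma marginals after $\xi_x=e^{-b\eta_x}$ is Laguerre, not Hermite; and the vanishing of ${\mc F}_{i,j}$ for $(i,j)\neq(1,1)$ is immediate because the normalized volume current in the $\xi$ variables is exactly the gradient $-\nabla(b^2\xi_x+\gamma\log\xi_x)$, hence null in ${\mc H}_0$ --- your ``decoupling under the symmetric semigroup'' is right in spirit but an unnecessary detour.)

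Second, your scaling proof of (\ref{eq:F11c}) fails. The time change $t\mapsto t/\gamma$ yields the exact identity ${\mc F}_{1,1}(\gamma,z)=\gamma^{-1}b^4\ll W_{0,1}, (z/\gamma - b^2\gamma^{-1}{\mc A}-{\mc S})^{-1}W_{0,1}\gg$: the antisymmetric part comes out with strength $\gamma^{-1}$, not $1$, so this is \emph{not} ${\mc F}_{1,1}(1,z/\gamma)$, and no exact rescaling can repair it --- indeed, if a pure time change worked, (\ref{eq:F11c}) would be an equality rather than a two-sided bound with constants $C^{\pm 1}$. The paper's proof goes once more through the variational formula: after factoring out $\gamma$, the coefficient in front of $\ll{\mc A}f,(z/\gamma-{\mc S})^{-1}{\mc A}f\gg$ is $b^4\gamma^{-1}$ instead of $b^4$; one compares $\gamma^{-1}$ with $\gamma$ (treating $\gamma>1$ and $\gamma<1$ separately to get each inequality direction) and then substitutes $f\mapsto\gamma^{\mp 1/2}f$ to recover the $\gamma=1$ functional, which is where the constants in (\ref{eq:F11c}) come from. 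You need this comparison of different strengths of the antisymmetric part, or a substitute for it; the bare time rescaling you propose cannot deliver it.
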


The lower bound ${\mc F}_{1,1} (\gamma, z) \ge c z^{1/4}$ means roughly that the current-current correlation function $C(t)$ is bounded by bellow by a constant times $t^{-3/4}$. The last part of the theorem is easy to prove but has an important consequence. In \cite{BS} numerical simulations are performed to detect the anomalous diffusion of energy. Since it is difficult to estimate numerically the time autocorrelation functions of the currents because of their expected long-time tails, a more tenable approach consists in studying a non equilibrium system in its steady state, i.e. considering a finite system in contact with two thermostats which fix the value of the energy at the boundaries. Then we estimate the dependence of the energy transport coefficient $\kappa (N)$ with the system size $N$. The latter is defined as $N$ times the average energy current. It turns out that $\kappa (N) \sim N^{\delta}$ with a parameter $\delta:=\delta(\gamma)>0$ increasing with the noise intensity $\gamma$ (except for the singular value $\delta=1$ when $\gamma=0$ which is a manifestation of the ballistic behavior of the Kac-van Moerbecke system). This result is very surprising since the more stochasticity in the model is introduced, the less the system is diffusive. The same has been observed for other anharmonic potentials in \cite{BS} and also for the Toda lattice perturbed by an energy-momentum conservative noise (\cite{ILOS}). It has been argued in \cite{ILOS} that this may be explained by the fact that some diffusive phenomena due to non-linearities, like localized breathers, are destroyed by the noise. In \cite{BDLLO} simulations have been performed directly with the Green-Kubo formula for other standard anharmonic chains with the same conclusion: current-current correlation function decreases slower when the noise intensity increases. If all these numerical simulations reproduce correctly the real behavior of the models investigated, they dismiss the theories which pretend that some universality holds, e.g. \cite{VB}. It is therefore very important to decide if the phenomena numerically observed are correct or not.

 Assuming that the current-current correlation function $C(t)$ has the time decay $C(t) \sim_{t \to \infty} t^{- \delta' (\gamma)}$, the inequality (\ref{eq:F11c}) shows that the exponent $\delta':=\delta' (\gamma)$ is independent of $\gamma$ (up to possible slowly varying functions corrections, i.e. in a Tauberian sense). It is usually argued but not proved (see e.g. the end of Section 5.3 in \cite{LLP}) that the exponent $\delta$ defined by the non-equilibrium stationary state is related to $\delta'$ by the relation $\delta= 1-\delta'$, and is consequently independent of $\gamma$ too. Therefore the numerical simulations do not seem to reflect the correct behavior of the system {\footnote{It would be very interesting to understand why the numerical simulations are so sensitive to the noise.}}.  A possible explanation of the inconsistency between the numerical observation and our result is simply that the relation $\delta=1-\delta'$ is not satisfied. Nevertheless, notice that the last part of our theorem is in fact valid for all the models cited above. It applies in particular to the models studied in \cite{BDLLO} and shows that the numerical observations of {that} paper, which are performed for the Green-Kubo formula, are not consistent with the real behavior of the system.

\section{A change of variables}
\label{sec:cv}

To study the energy-volume fluctuation field ${\mc Y}_{\cdot}^{n, \alpha}$, we introduce the following change of variables $\xi_x = e^{-b \eta_x}$, {for each $x\in \bb Z$}. Then, the previous Markovian system $(\eta (t))_{t \ge 0}$ defines a new Markovian system $(\xi (t))_{t \ge 0}$ with state space $(0,+\infty)^{\ZZ}$ whose generator ${\mc L}$ is equal to $b^2 {\mc A} +{\gamma} {\mc S}$, where for local differentiable functions $f:(0,+\infty)^{\ZZ}\rightarrow{\mathbb{R}}$ we have that
\begin{equation*}
({\mc A} f)(\xi)=\sum_{x \in \ZZ} \xi_x \Big( \xi_{x+1}- \xi_{x-1} \Big) (\partial_{\xi_x} f)(\xi)
\end{equation*}
and
 \begin{equation*}
 ({\mc S} f)(\xi)=\sum_{x \in \ZZ} \Big( f(\xi^{x,x+1}) -f(\xi) \Big),
\end{equation*}
 where $\xi^{x,x+1}$ is defined as in \eqref{etax,x+1}.

Observe that the energy and volume conservation laws correspond, for the process $(\xi (t))_{t \ge 0}$, to the conservation of the two following quantities
$\sum_{x \in \ZZ} {\xi_x}$ and $\sum_{x \in \ZZ} \log (\xi_x).$
The corresponding microscopic currents are defined by the conservation law equations:
\begin{equation*}
{\mc L} (\xi_x) = -\nabla j_{x-1,x}(\xi), \quad {\mc L} (\log \xi_x) = -\nabla j^{\prime}_{x-1,x}(\xi),
\end{equation*}
where
\begin{equation*}
j_{x,x+1}(\xi) = -b^2  \xi_{x} \xi_{x+1} - \gamma \nabla \xi_x,
\end{equation*}
and
\begin{equation*}
 j^{\prime}_{x,x+1}(\xi)= -b^2 (\xi_{x}+ \xi_{x+1})- \gamma \nabla \log (\xi_x).
\end{equation*}
We will use the compact notation
\begin{equation}\label{flux for xi}
J_{x,x+1}(\xi) =
\left(
\begin{array}{c}
j_{x,x+1} (\xi)\\
j^{\prime}_{x,x+1}(\xi)
\end{array}
\right).
\end{equation}

 Since $V_b(\eta_x)=\xi_x-\log( \xi_x)+1$ and $\eta_x=-\frac{1}{b}\log(\xi_x)$, we have the following relations between the microscopic currents
\begin{equation}
\label{eq:corr-jj}
{\bar j}_{x,x+1}(\eta)=j_{x,x+1}(\xi)-j^{\prime}_{x,x+1}(\xi), \quad\textrm{and}\quad {\bar j}_{x,x+1}^{\prime}(\eta)= -\cfrac{1}{b} j_{x,x+1}^{\prime}(\xi).
\end{equation}


%

If $\eta$ is distributed according to $\mu_{\bar \beta, \bar \lambda}$ then $\xi$ defined by $\xi_x =e^{-b \eta_x}$ is distributed according to the probability measure $\nu_{\beta, \lambda}$ on $(0,+\infty)^{\bb Z}$ given by
\begin{equation*}
\nu_{\beta, \lambda} (d\xi) =\prod_{x\in \bb Z} {Z}^{-1} (\beta,\lambda) {\bf 1}_{\{\xi_{x} >0\}} \exp \{ -\beta \xi_x + \lambda \log (\xi_x)\}d\xi_x
\end{equation*}
with $Z(\beta,\lambda)$ the partition function,
\begin{equation}
\label{eq:relblbl}
\beta={\bar \beta}, \quad\textrm{and}\quad \lambda= -1+{\bar \lambda}/{b}.
\end{equation}

Remark that $\nu_{\beta,\lambda}$ is nothing but a product probability measure whose marginal follows a Gamma distribution $\gamma_{\lambda+1, \beta^{-1}}$ with parameter $(\lambda+1, \beta^{-1})$. In particular, we have $Z:=Z(\beta, \lambda)= \beta^{-(\lambda +1)} \, \Gamma (\lambda +1)$, where $\Gamma$ is the usual Gamma function.

Thus, the process $(\xi (t))_{t \ge 0}$ has a family of translation invariant measures $\nu_{\beta, \lambda}$ parameterized by the chemical potentials $(\beta, \lambda)\in (0,+\infty) \times (-1, +\infty)$.

Let $\mathbb{P}_{\nu_{\beta,\lambda}}$ be the probability measure on ${D}(\mathbb{R}^+,(0,+\infty)^{\mathbb{Z}})$ induced by $(\xi(t))_{t\geq{0}}$ and $\nu_{\beta,\lambda}$ and let $\mathbb{E}_{\nu_{ \beta,\lambda}}$ denote the expectation with respect to $\mathbb{P}_{\nu_{ \beta, \lambda}}$.

Let $\langle \cdot \rangle_{\nu_{\beta, \lambda}}$ denote the average with respect to ${\nu_{\beta, \lambda}}$.
The averages $\rho:=\rho(\beta, \lambda)$ and $\theta:=\theta (\beta, \lambda)$ of the conserved quantities for $(\xi (t))_{t \ge 0}$ at equilibrium under $\nu_{\beta,\lambda}$ are defined by $\rho = \langle \xi_x \rangle_{\nu_{\beta, \lambda}}$ and $\theta = \langle \log(\xi_x) \rangle_{\nu_{\beta, \lambda}}.$ By a direct computation we get

\begin{equation}
\label{eq:chimpot}
\rho=1+{\bar e} -b {\bar v} =\cfrac{\lambda +1}{\beta}, \quad\textrm{and}\quad \theta=-b {\bar v}=\frac{\Gamma'  (\lambda +1)}{\Gamma' (\lambda+1)}- \log(\beta).
\end{equation}

It is understood, here and in the whole paper, that $(\beta, \lambda)$ are related to $(\bar \beta, \bar \lambda)$ through (\ref{eq:relblbl}). We will use the following compact notation, for each $x\in \bb Z$,
\[
{\omega}_x= \left(
\begin{array}{c}
\xi_x \\
\log (\xi_x)
\end{array}
\right),\quad \textrm{and} \quad
{\omega}= \left(
\begin{array}{c}
\rho \\
\theta
\end{array}
\right).
\]

Observe that ${\bar \omega}_x = \Lambda \omega_x - \left( \begin{array}{c} 1\\ 0 \end{array} \right)$, where
\begin{equation} \label{lambda matrix}
\Lambda=
\left(
\begin{array}{cc}
1&-1 \\
0 & -1/b
\end{array}
\right).
\end{equation}

The covariance matrix $\chi:=\chi (\beta, \lambda)$ of $\omega_0$ under $\nu_{\beta,\lambda}$ is given by
\begin{equation*}
\chi =
 \left(
\begin{array}{cc}
 \langle(\xi_0-\rho)^2 \rangle_{\nu_{\beta,\lambda}} &  \langle(\xi_0-\rho)(\log(\xi_0)-\theta) \rangle_{\nu_{\beta,\lambda}}\\
 \langle(\xi_0-\rho)(\log(\xi_0)-\theta) \rangle_{\nu_{\beta,\lambda}}&  \langle(\log(\xi_0)-\theta)^2 \rangle_{\nu_{\beta,\lambda}}
\end{array}
\right).
\end{equation*}
A simple computation shows that
\begin{equation*}
\chi =
 \left(
\begin{array}{cc}
\frac{\lambda+1}{ \beta^{2}} & \frac{1}{\beta}\\
\frac{1}{\beta} & (\log \Gamma)'' (\lambda+1)
\end{array}
\right)=
\left(
\begin{array}{cc}
\partial_\beta^2 \log (Z) & - \partial_{\beta, \lambda} \log (Z)\\
 - \partial_{\beta, \lambda} \log  (Z) & \partial^2_{\lambda} \log (Z)
\end{array}
\right).
\end{equation*}
 Denote the covariance matrix of $\bar \omega_0$ under $\mu_{\bar\beta,\bar\lambda}$ by $\bar\chi:=\bar\chi(\bar\beta,\bar\lambda)$, which is defined by
\begin{equation*}
\bar\chi =
 \left(
\begin{array}{cc}
 \langle(V_b(\eta_0)-\bar e)^2 \rangle_{\mu_{\bar\beta,\bar\lambda}} &  \langle(V_b(\eta_0)-\bar e)(\eta_0-\bar v) \rangle_{\nu_{\bar\beta,\bar\lambda}}\\
 \langle(V_b(\eta_0)-\bar e)(\eta_0-\bar v) \rangle_{\mu_{\bar\beta,\bar\lambda}}&  \langle(\eta_0-\bar v)^2 \rangle_{\mu_{\bar\beta,\bar\lambda}}
\end{array}
\right).
\end{equation*}

Thus, the covariance matrix $\chi$ of $\omega_0$ under $\nu_{\beta,\lambda}$ is related to the covariance matrix ${\bar \chi}$ of $\bar \omega_0$ under $\mu_{\bar \beta, \bar \lambda}$, by
\begin{equation}
\label{eq:chibar2}
{\bar \chi} = \Lambda \chi \Lambda^T=
\left(
\begin{array}{cc}
\frac{\lambda+1}{ \beta^{2}} +\frac{2}{\beta}+(\log \Gamma)'' (\lambda+1) & \quad \frac{1}{b\beta}+\frac{(\log \Gamma)'' (\lambda+1)}{b}\\ 
\frac{1}{b\beta}+\frac{(\log \Gamma)'' (\lambda+1)}{b}& \frac{(\log \Gamma)'' (\lambda+1)}{b^2}
\end{array}
\right).
\end{equation}

A simple computation shows that $\langle j_{x,x+1} \rangle_{\nu_{\beta, \lambda}}=-b^2\rho^2$ and $\langle j'_{x,x+1} \rangle_{\nu_{\beta, \lambda}}=-2b^2\rho$.
The hydrodynamical equations for the process $(\xi (t))_{t \ge 0}$ are given by
\begin{equation}
\label{eq:hl-ss001}
\begin{cases}
\partial_t \rho - b^2 \partial_q (\rho^2) =0\\
\partial_t \theta -2 b^2 \partial_q \rho =0
\end{cases}
\end{equation}
and can be written in the compact form $\partial_t {\mf X} + \partial_q {\mf J} ({\mf X}) =0$ with
\begin{equation*}
{\mf X}=
\left(
\begin{array}{c}
{\rho}\\
{\theta}
\end{array}
\right)
, \quad\textrm{and}\quad
{\mf J} ({\mf X})=
\left(
\begin{array}{c}
-b^2 \rho^2\\
-2b^2\rho
\end{array}
\right).
\end{equation*}
  The differential matrix of ${\mf J}$ is given by
\begin{equation*}
\nabla {\mf J}({\mf X})=
 \left(
\begin{array}{cc}
-2b^2\rho& 0\\
-2b^2 & 0
\end{array}
\right).
\end{equation*}
As above, let $({T}^+_t )_{t \ge 0}$ (resp. $({T}^{-}_t)_{t \ge 0}$) denote the semigroup on $S(\RR) \times S(\RR)$ generated by
\begin{equation}\label{eq:lin ss01}
\partial_t \varepsilon + M^T \, \partial_q \varepsilon =0, \quad ({\text{resp.}} \; \partial_t \varepsilon - M^T \, \partial_q \varepsilon =0 ).
\end{equation}
where
\begin{equation*}
M:=M(\rho,\theta)= (\nabla {\mf J})(\omega),
\end{equation*}
$\rho$ and $\theta$ are given by \eqref{eq:chimpot}. We omit the dependence of these semigroups on $(\rho, \theta)$ for lightness of the notations.

%

We remark that the transposed linearized system of (\ref{eq:hl-ss001}) around the constant profiles $(\rho, \theta)$ is given by the first equation on the left hand side of \eqref{eq:lin ss01}. It is easy to show that ${\bar M}= \Lambda M \Lambda^{-1}$ and $\Lambda^T {\bar T}_t^- = T_t^- \Lambda^T$.

\section{Orthogonal decomposition}
\label{sec:dua}

Observe that $\nu_{\beta, \lambda}$ is a product of Gamma distributions. Let us recall that the Gamma distribution $\gamma_{\alpha, k}$ with parameter $(\alpha, k)$ is the probability distribution on $(0,+\infty)$ absolutely continuous with respect to the Lebesgue measure with density $f_{\alpha, k}$ given by
\begin{equation}
f_{\alpha,k} (q) = \Big(k^\alpha \Gamma (\alpha)\Big)^{-1} q^{\alpha-1} e^{-q/k}, \quad q>0.
\end{equation}
Thus, we have $ \nu_{\beta, \lambda} (d\xi) = \prod_{x \in \ZZ} \Big( f_{\lambda+1, \beta^{-1}} (\xi_x) d\xi_x\Big) = \prod_{x \in \ZZ} \Big(\beta f_{\lambda+1, 1} (\beta \xi_x) d\xi_x \Big)$. The generalized Laguerre polynomials $(H_{n}^{(\lambda)})_{n \ge 0}$ form an orthogonal basis of the space ${\mathbb L}^{2} ( \gamma_{\lambda+1,1})$. They satisfy the following equations:
\begin{equation}\label{relations for Laguerre functions}
\begin{split}
&H_{0}^{(\lambda)}=1,\\
&q\cfrac{d}{dq} H_{n}^{(\lambda)} = nH_{n}^{(\lambda)} -(n+\lambda) H_{n-1}^{(\lambda)},\\
& \Big(q \cfrac{d^2}{dq^2} + (\lambda+1-q) \cfrac{d}{dq} +n \Big) H_{n}^{(\lambda)} =0, \\
&(n+1)H_{n+1}^{(\lambda)}(q) = (2n+1+ \lambda -q) H_{n}^{(\lambda)} (q) -(n +\lambda) {H}_{n-1}^{(\lambda)} (q)
\end{split}
\end{equation}
and the normalization condition
\begin{equation*}
\int_{0}^{\infty} \Big(H_{n}^{(\lambda)} (q)\Big)^2 f_{\lambda+1,1} (q) \, dq = \cfrac{\Gamma (\lambda +n +1)}{\Gamma (\lambda +1)}\cfrac{1}{n!}.
\end{equation*}
In particular, we have
\begin{equation}
\label{eq:fop}
\begin{split}
&H_1^{(\lambda)} (q) = -q + (\lambda +1),\\
&H_{2}^{(\lambda)} (q) = \cfrac{(2+\lambda)(1+ \lambda)}{2} - (\lambda +2) q +\cfrac{q^2}{2}.
\end{split}
\end{equation}

Let $\Sigma$ be the set composed of configurations  $\sigma=(\sigma_x)_{x\in \ZZ} \in \NN^{\ZZ}$ such that $\sigma_{x} \ne 0$ only for a finite number of $x$. The number $\sum_{x \in \ZZ} \sigma_x$ is called the size of $\sigma$ and  is denoted by $|\sigma|$. Let $\Sigma_n = \{ \sigma \in \Sigma \; ; \; |\sigma|=n\}$. On the set of $n$-tuples $\bx:=(x_1, \ldots,x_n)$ of $\ZZ^n$, we introduce the equivalence relation $\bx \sim \by$ if there exists a permutation $p$ on $\{1, \ldots,n\}$ such that $x_{p(i)} =y_i$ for all $i \in \{1, \ldots,n\}$. The class of $\bx$ for the relation $\sim$ is denoted by $[\bx]$ and its cardinal by $c({\bf x})$.  Then the set of configurations of $\Sigma_n$ can be identified with the set of $n$-tuples classes for $\sim$ by the one-to-one application:
\begin{equation*}
[{\bf x}]=[(x_1,\ldots,x_n)] \in \ZZ^n/ \sim \; \rightarrow \sigma^{[{\bf x}]} \in \Sigma_n
\end{equation*}
where for any $y \in \ZZ$, $(\sigma^{[\bf x]})_y= \sum_{i=1}^n {\bf 1}_{y=x_i}$. We will identify $\sigma \in \Sigma_n$ with the occupation numbers of  a configuration with $n$ particles, and $[\bf x]$ will correspond to  the positions of those $n$ particles.

To any $\sigma \in \Sigma$, we associate the polynomial function $H^{\beta,\lambda}_{\sigma}$ given by
\begin{equation*}
H^{\beta, \lambda}_{\sigma} (\xi) = \prod_{x \in \ZZ} H_{\sigma_x}^{(\lambda)} (\beta \xi_x).
\end{equation*}
Then, the family $\left\{ H^{\beta,\lambda}_{\sigma} \; ; \; \sigma \in \Sigma \right\}$ forms an orthogonal basis of ${\bb L}^2 (\nu_{\beta, \lambda})$ such that
\begin{equation}\label{eq:prod Hsigma}
\int H^{\beta,\lambda}_\sigma \, H^{\beta,\lambda}_{\sigma'} \, d\nu_{\beta, \lambda} = \delta_{\sigma=\sigma'} {\prod}_{x \in \ZZ}  \cfrac{\Gamma (\lambda +\sigma_x +1)}{\Gamma (\lambda+1)}\cfrac{1}{\sigma_x !} = \delta_{\sigma=\sigma'} {\mc W}^{ \lambda} (\sigma),
\end{equation}
where
\begin{equation}\label{def W}
{\mc W}^{\lambda} (\sigma):={\prod}_{x \in \ZZ}  \cfrac{\Gamma (\lambda +\sigma_x +1)}{\Gamma (\lambda+1)}\cfrac{1}{\sigma_x !}
\end{equation}
and $\delta$ denotes the Kronecker function, so that $\delta_{\sigma=\sigma'}=1$ if $\sigma=\sigma'$, otherwise it is equal to zero.

A function $F:\Sigma \to \RR$ such that $F(\sigma)=0$ if $\sigma \notin \Sigma_n$ is called a degree $n$ function. Thus, such a function is sometimes considered as a function defined only on $\Sigma_n$. A local function $f \in {\mathbb L}^2 (\nu_{\beta,\lambda})$ whose decomposition on the orthogonal basis $\{ H_{\sigma}^{\beta,\lambda} \, ; \, \sigma \in \Sigma \}$ is given by $f=\sum_{\sigma} F(\sigma)  H_{\sigma}^{\beta,\lambda}$ is called of degree $n$ if and only if $F$ is of degree $n$. A function $F: \Sigma_n \to \RR$ is nothing but a symmetric function $F:\ZZ^n \to \RR$ through the identification of $\sigma$ with $[\bx]$. We denote by $\langle \cdot, \cdot \rangle$ the scalar product on $\oplus {\mathbb L}^2 (\Sigma_n)$, each $\Sigma_n$ being equipped with the counting measure. Hence, if $F,G:\Sigma \to \RR$, we have
\begin{equation*}
\langle F, G \rangle = \sum_{n\ge 0} \sum_{\sigma \in \Sigma_n} F_n (\sigma) G_n (\sigma) = \sum_{n \ge 0} \sum_{\bx \in \ZZ^n} \frac{1}{c({\bf x})} \,  F_n (\bx) G_n (\bx),
\end{equation*}
with $F_n, G_n$ the restrictions of $F,G$ to $\Sigma_n$. We recall that $c({\bf x})$ is the cardinal of $[{\bf x}]$. Since $(\beta,\lambda)$ are fixed through the paper we denote $H_{\sigma}^{\beta,\lambda}$ by $H_{\sigma}$ and ${\mc W}^\lambda (\sigma)$ by ${\mc W} (\sigma)$.

If a local function $f \in {\bb L}^{2} (\nu_{\beta, \lambda})$ is written in the form $f(\xi)=\sum_{\sigma \in \Sigma} F(\sigma) H_{\sigma}(\xi)$ then we have
\begin{equation*}
({\mc A} f)(\xi)  = \sum_{\sigma\in\Sigma } ({\mf A} F)(\sigma) H_{\sigma}(\xi), \quad  ({\mc S} f)(\xi)  = \sum_{\sigma\in\Sigma} ({\mf S} F)(\sigma) H_{\sigma}(\xi)
\end{equation*}
with
\begin{equation*}
({\mf S} F)(\sigma) = \sum_{x \in \ZZ} ( F(\sigma^{x,x+1}) - F(\sigma)),
\end{equation*}
where $\sigma^{x,x+1}$ is obtained from $\sigma$ by exchanging the occupation numbers $\sigma_x$ and $\sigma_{x+1}$.

Let us now compute the operator ${\mf A}$. We have
\begin{equation*}
({\mc A} H_{\sigma})(\xi) = \sum_{x\in \ZZ} \xi_x (\xi_{x+1} -\xi_{x-1})\partial_{\xi_x} H_{\sigma} (\xi).
\end{equation*}
By the definition of $H_\sigma$ and by the second equality in \eqref{relations for Laguerre functions}, it follows that
\begin{equation*}
({\mc A} H_{\sigma})(\xi) = \beta \sum_{x\in\ZZ} (\xi_{x+1} -\xi_{x-1}) \Big(\sigma_{x} H_{\sigma} (\xi)- (\sigma_x +\lambda) H_{\sigma - \delta_x} (\xi)\Big),
\end{equation*}
where $\sigma-\delta_x$ is the configuration where a particle has been deleted at site $x$ (if there was no particle on site $x$, then $\sigma -\delta_x = \sigma$).

Now, noticing that the fourth equality in \eqref{relations for Laguerre functions} can be written as
$$ \beta q H_{n}^{(\lambda)} (\beta q) = (2n+1 + \lambda) H_{n}^{(\lambda)} (\beta q) - (n+\lambda) H_{n-1}^{(\lambda)} (\beta q) - (n+1) H_{n+1}^{(\lambda)} (\beta q)$$
and performing some change of variables, we have that
\begin{equation*}
\begin{split}
({\mc A} H_{\sigma})(\xi)& = \sum_{\substack{x,y \in \ZZ\\ |x-y|=1}} a(y-x) (\sigma_x + \lambda) (\sigma_y +1) H_{\sigma + \delta_{y} -\delta_x}(\xi)\\
& -\sum_{x\in\ZZ} (\sigma_x +\lambda) (\sigma_{x+1} -\sigma_{x-1}) \, H_{\sigma -\delta_{x}}(\xi) \\
& + \sum_{x\in\ZZ} (\sigma_x +1) (\sigma_{x+1} -\sigma_{x-1}) \, H_{\sigma + \delta_{x}}(\xi).
\end{split}
\end{equation*}

Here, $a(z)=-1$ if $z=-1$, $a(z)=1$ if $z=1$ and $0$ otherwise. It follows that
$${\mf A} ={\mf A}_0 + {\mf A}_- +{\mf A}_+$$
with
\begin{equation*}
\begin{split}
&({\mf A}_0 F) (\sigma) = -\sum_{\substack{x,y \in \ZZ\\ |x-y|=1}} a(y-x) \sigma_x (\sigma_y +1 + \lambda) F(\sigma+\delta_y -\delta_x),\\
&({\mf A}_+ F)(\sigma)= -\sum_{x\in\ZZ} \sigma_x (\sigma_{x+1} -\sigma_{x-1}) F(\sigma-\delta_x),\\
&({\mf A}_- F)(\sigma)= \sum_{x\in\ZZ} (\sigma_x-1+\lambda) (\sigma_{x+1} -\sigma_{x-1}) F(\sigma+\delta_x).
\end{split}
\end{equation*}
Observe that if $F$ vanishes outside of $\Sigma_n$ then ${\mf A}_{\pm } F$ vanishes outside of $\Sigma_{n \mp 1}$ and ${\mf A}_0$ vanishes outside of $\Sigma_n$. In other words, ${\mf A}_0 $ keeps fixed the degree of a function, ${\mf A}_+ $ raises the degree by one while ${\mf A}_-$ lowers the degree by one.


The Dirichlet form ${\mc D} (f)$ of a local function $f \in {\mathbb L}^2 (\nu_{\beta,\lambda})$ is defined by
\begin{equation*}
{\mc D} (f)=\langle f \,, (-{\mc S} f) \rangle_{\nu_{\beta,\lambda}}= \cfrac{1}{2} \sum_{x \in \ZZ}  \int \left( f(\xi^{x,x+1}) - f(\xi)  \right)^2\nu_{\beta,\lambda}(d \xi).
\end{equation*}
Recall that $\langle \cdot,\cdot\rangle_{\nu_{\beta,\lambda}}$ denotes the inner product of $\bb L^2(\nu_{\beta,\lambda})$.

Since $f$ has the decomposition $f= \sum_{\sigma \in \Sigma} F(\sigma) H_{\sigma}$ then
\begin{equation}
\label{eq:df01}
{\mc D} (f) =\cfrac{1}{2} \sum_{x \in \ZZ} \sum_{\sigma \in \Sigma} {\mc W} (\sigma) \left( F(\sigma^{x,x+1}) -F(\sigma) \right)^2.
\end{equation}
Let $\Delta_{+}= \left\{ (x,y) \in \ZZ^2 \, ; \, y \ge x+1\right\}$, $\Delta_- = \left\{ (x,y) \in \ZZ^2 \, ; \, y \le x-1\right\}$ and $\Delta_0 = \left\{ (x,x) \, ; \, x \in \ZZ \right\}$. We denote by ${\bb D}_1$ the Dirichlet form of a symmetric simple one dimensional random walk, i.e.
\begin{equation*}
{\bb D}_1 (F) = \cfrac{1}{2} \sum_{x \in \ZZ} (F(x+1) - F(x))^2,
\end{equation*}
where $F: \ZZ \to \RR$ is such that $\sum_{x\in \bb Z} F^2 (x) < \infty$.

We denote by ${\bb D}_2$ the Dirichlet form of a symmetric simple random walk on $\ZZ^2$ where jumps from $\Delta_{\pm}$ to $\Delta_0$ and from $\Delta_0$ to $\Delta_{\pm}$ have been suppressed and jumps from $(x,x) \in \Delta_0$ to $(x \pm 1, x \pm 1) \in \Delta_0$ have been added, i.e.
 \begin{equation*}
{\bb D}_2(F)=\cfrac{1}{2} \sum_{|\be|=1} \!\sum_{\bx \in \Delta_{\pm}, \bx + \be \in \Delta_{\pm}}\!\! \!\!\!\!\left( F(\bx + \be) -F(\bx)\right)^2 +  \cfrac{1}{2} \sum_{\bx \in \Delta_0} \!\left( F(\bx \pm (1,1)) -F(\bx)\right)^2,
\end{equation*}
where $F:\ZZ^2 \to \RR$ is a symmetric function such that $\sum_{\bx \in \ZZ^2} F^2 (\bx) < \infty$.

\begin{lemma}
\label{lem:df12}
Let $f=\sum_{n=1}^2 \sum_{\sigma \in \Sigma_n} F_n (\sigma) H_{\sigma}$ be a local  function  such that $F_1$ (resp. $F_2$) is of degree $1$ (resp. degree $2$). There exists a positive constant $C:=C(\lambda)$, independent of $f$, such that
\begin{equation*}
C^{-1} \left[ {\bb D}_1 (F_1) +{\bb D}_2 (F_2) \right]  \le {\mc D} (f) \le C \left[ {\bb D}_1 (F_1) +{\bb D}_2 (F_2)  \right].
\end{equation*}
\end{lemma}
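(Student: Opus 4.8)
The plan is to exploit the fact that the exchange operator $\mc{S}$ (equivalently $\mf{S}$) preserves the number of particles, so that the Dirichlet form decouples according to the degree. Starting from the expression \eqref{eq:df01}, and using that $\sigma^{x,x+1}\in\Sigma_n$ whenever $\sigma\in\Sigma_n$, together with $F=F_1$ on $\Sigma_1$ and $F=F_2$ on $\Sigma_2$, I would write
\[
\mc{D}(f)=\frac12\sum_{x\in\ZZ}\sum_{\sigma\in\Sigma_1}\mc{W}(\sigma)\big(F_1(\sigma^{x,x+1})-F_1(\sigma)\big)^2+\frac12\sum_{x\in\ZZ}\sum_{\sigma\in\Sigma_2}\mc{W}(\sigma)\big(F_2(\sigma^{x,x+1})-F_2(\sigma)\big)^2.
\]
Since each squared increment involves two configurations of the same size, there are no cross terms between degrees, and it suffices to treat the two sums separately and compare them to ${\bb D}_1(F_1)$ and ${\bb D}_2(F_2)$.

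For the degree-one term, a configuration $\sigma\in\Sigma_1$ is a single particle at a site $z$, so $F_1$ is identified with a function on $\ZZ$, and by \eqref{def W} its weight is the constant $\mc{W}(\sigma)=\Gamma(\lambda+2)/\Gamma(\lambda+1)=\lambda+1$. The exchange at the bond $\{x,x+1\}$ acts nontrivially only when the particle sits at $x$ or $x+1$, producing in either case the increment $F_1(x+1)-F_1(x)$; summing over $x$ yields exactly $2(\lambda+1)\,{\bb D}_1(F_1)$, which already gives the claimed equivalence on this component with an explicit constant.

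The degree-two term is the heart of the matter, and the main work is to identify precisely which transitions the occupation-exchange dynamics induces on $\Sigma_2$. A configuration in $\Sigma_2$ is either two particles at distinct sites $a\neq b$ (an off-diagonal state, identified with $(a,b)\in\Delta_+\cup\Delta_-$ and of weight $(\lambda+1)^2$) or two particles at a single site $a$ (a diagonal state, identified with $(a,a)\in\Delta_0$ and of weight $\tfrac{(\lambda+1)(\lambda+2)}{2}$, again read off from \eqref{def W}). I would then check case by case how $\sigma^{x,x+1}$ acts: an exchange moving a single particle between neighbouring empty and occupied sites carries one off-diagonal state to another and corresponds to a unit step within $\Delta_+$ (or $\Delta_-$); crucially, when the two particles are adjacent the exchange across the bond joining them leaves $\sigma$ unchanged, which is exactly why the jumps between $\Delta_\pm$ and $\Delta_0$ are suppressed in the definition of ${\bb D}_2$; and an exchange acting on a doubly occupied site transports both particles together, sending $(a,a)$ to $(a\pm1,a\pm1)$, matching the diagonal jumps added in ${\bb D}_2$.

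Once this dictionary is established, the degree-two Dirichlet form equals the off-diagonal part of ${\bb D}_2$ weighted by $(\lambda+1)^2$ plus the diagonal part weighted by $\tfrac{(\lambda+1)(\lambda+2)}{2}$ (up to a harmless overall factor coming from the symmetry $F_2(a,b)=F_2(b,a)$, which relates the $\Delta_+$ and $\Delta_-$ contributions). Since both weights are strictly positive constants depending only on $\lambda$, taking $C:=C(\lambda)$ to be the ratio of the largest to the smallest of the constants appearing yields the two-sided bound $C^{-1}[{\bb D}_1(F_1)+{\bb D}_2(F_2)]\le\mc{D}(f)\le C[{\bb D}_1(F_1)+{\bb D}_2(F_2)]$. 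I expect the only delicate point to be the careful bookkeeping of the suppressed diagonal/off-diagonal transitions and of the combinatorial factors $c(\bx)$ that relate sums over $\Sigma_n$ to sums over $\ZZ^n$; everything else is a direct computation.
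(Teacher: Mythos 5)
Your proof is correct and takes essentially the same approach as the paper: you compute the constant values of ${\mc W}$ on $\Sigma_1$ and on the diagonal and off-diagonal strata of $\Sigma_2$, and then match the exchange dynamics with the modified random walks defining ${\bb D}_1$ and ${\bb D}_2$ through (\ref{eq:df01}) and the identification of degree-$n$ functions with symmetric functions on $\ZZ^n$. The paper compresses this into one line, and your case analysis (in particular the observation that the exchange across the bond joining an adjacent pair is trivial, which is exactly why the $\Delta_\pm\leftrightarrow\Delta_0$ jumps are suppressed, while a doubly occupied site moves as a unit, giving the added diagonal jumps) correctly supplies the details the paper leaves implicit.
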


\begin{proof}
Observe that
\begin{itemize}
\item If $\sigma \in \Sigma_1$, then ${\mc W} (\sigma)=(\lambda+1)$.
\item If $\sigma \in \Sigma_2$, $\sigma= \delta_x +\delta_y$, $x\ne y$, then ${\mc W} (\sigma)=(\lambda+1)^2$; if $\sigma \in \Sigma_2$, $\sigma=2 \delta_x$, then ${\mc W} (\sigma)= [(\lambda +2)(\lambda +1)]/2$.
\end{itemize}
This follows from the relation $\Gamma (z+1)=z\Gamma (z)$. Then, by using (\ref{eq:df01}) and the identification of functions $F: \Sigma_n \to \RR$ of degree $n$ with their representations as symmetric functions on $\ZZ^n$, the claim follows.
\end{proof}

\section{Triviality of the fluctuations}
\label{sec:triv}
In this section we prove Theorems \ref{th:fluct-hs} and \ref{th:fluct-ds} and Corollaries \ref{CLT Energy flux}, \ref{CLT volume flux} and \ref{vanishing of Energy flux} above.  The proof of Theorems \ref{th:fluct-hs} and \ref{th:fluct-ds} is standard and relies on a careful analysis of martingales associated to the respective density fields. For this reason we present  only the sketch of their proofs. For the interested reader we refer to chapter 11 of \cite{KL}.  We notice that the restrictions on $k$ appearing  in the statement of those theorems come from tightness estimates, that we do not prove here since they follow from very similar computations to those presented in \cite{KL}.

To approach the proof of theorems we notice that since $V_b (\eta_x) -1= \xi_x -\log (\xi_x)$, $\eta_x = -b^{-1} \log (\xi_x)$, the problem is reduced to study the fluctuation field of the conserved quantities for the process $(\xi (t))_{t \ge 0}$ at equilibrium under the probability measure $\nu_{\beta, \lambda}$. The fluctuation field  for $(\xi (t))_{t \ge 0}$ is defined by
\begin{equation}
  \label{eq:ZZ}
  \mathcal{Z}^{n,\alpha}_t  (\bG) =\frac{1}{\sqrt{n}} \sum_{x\in \ZZ}
  \bG\left(x/n\right) \cdot \left({\omega}_x (tn^{1+\alpha})  - {\omega}\right),
\end{equation}
where $\bG$ is a  test function belonging to $S(\RR) \times S(\RR)$. Recalling (\ref{lambda matrix}) we have
\begin{equation}\label{rel Y and Z}
{\mc Y}_t^{n,\alpha} (\bG) = \cfrac{1}{\sqrt n} \sum_{x \in \ZZ} (\Lambda^T\bG) (x/n) \cdot (\omega_x (t n^{1+\alpha})  -\omega)={\mc Z}_t^{n, \alpha} (\Lambda^T \bG).
\end{equation}

By the relation ${\bar M}= \Lambda M \Lambda^{-1}$, we are able to translate any result about the convergence of ${\mc Z}_\cdot^{n,\alpha}$ into a corresponding result for ${\mc Y}_{\cdot}^{n,\alpha}$.

\subsection{The hyperbolic scaling}

For any local function $g:= g(\xi)$ we define the projection ${\mc P}_{\rho, \theta} \, g$ of $g$ on the fields of the conserved quantities by
\begin{equation*}
({\mc P}_{\rho, \theta} g)(\xi)=(\nabla {\tilde g})(\rho, \theta) \cdot (\omega_0-\omega)
\end{equation*}
where ${\tilde g} (\rho,\theta) = \langle g \rangle_{\nu_{\beta, \lambda}}$ and  $\nabla {\tilde g}$ is the gradient of the function ${\tilde g}$.

We have that
\begin{prop}[Boltzmann-Gibbs principle I]

For every $\mb H \in S(\RR)\times S(\RR)$ and every $t>0$,
\begin{equation*}
\lim_{n \to \infty} {\mathbb E}_{\nu_{\beta,\lambda}} \left[ \left( \int_0^{t} \cfrac{1}{\sqrt{n}} \sum_{x \in \ZZ} \mb H\left(x/n\right)\cdot \left[\tau_x V_{J_{0,1}}(\xi(sn))\right] ds\right)^2\right] =0,
\end{equation*}
where for a local function $g$  we define $V_g(\xi):= g(\xi)-\tilde{g}(\rho,\theta)- {\mc P}_{\rho,  \theta}  g (\xi )$
and for $\xi\in{(0,+\infty)^\bb Z}$, $\tau_xg(\xi):=g(\tau_x\xi)$, $\tau_x\xi(y):=\xi(x+y)$ and $J_{0,1}$ is given in \eqref{flux for xi}.
\end{prop}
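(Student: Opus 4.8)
The plan is to prove the Boltzmann-Gibgs principle by the standard two-step decomposition, exploiting the orthogonal polynomial structure developed in Section~\ref{sec:dua}. The quantity $V_{J_{0,1}}(\xi)$ is, by construction, the local current $J_{0,1}$ with its constant mean and its linear (degree-one) projection onto the conserved fields subtracted off. Writing $V_{J_{0,1}}$ in the Laguerre basis $\{H_\sigma\}$, the subtraction of $\tilde g(\rho,\theta)$ removes the degree-zero ($\sigma=0$) term and the subtraction of $\mathcal P_{\rho,\theta}$ removes the degree-one part, so $V_{J_{0,1}}$ has \emph{no components of degree $0$ or $1$}; its lowest surviving degree is $2$. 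This is the crucial structural input.

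\emph{First} I would reduce the problem to a variance bound via the standard inequality for additive functionals of a reversible-in-the-antisymmetric-part Markov process: for a mean-zero local function $W$,
\begin{equation*}
{\mathbb E}_{\nu_{\beta,\lambda}}\!\left[\Big(\int_0^t \tfrac1{\sqrt n}\sum_x \mb H(x/n)\,\tau_x W(\xi(sn))\,ds\Big)^2\right]
\le \frac{C t}{n}\, \Big\| \tfrac1{\sqrt n}\sum_x \mb H(x/n)\,\tau_x W\Big\|_{-1,n}^2,
\end{equation*}
where $\|\cdot\|_{-1,n}$ is the $H_{-1}$ norm associated with the generator $n\mathcal L$ (the time change is $sn$, i.e.\ $\alpha=0$), which is controlled by the symmetric Dirichlet form through the variational formula $\|W\|_{-1}^2=\sup_f\{2\langle W,f\rangle_{\nu_{\beta,\lambda}}-n\mathcal D(f)\}$. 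Here one must be careful that only the symmetric part $\gamma\mathcal S$ contributes a positive Dirichlet form, while the antisymmetric part $b^2\mathcal A$ is handled by the sector (graded) structure: by the degree-shifting properties of $\mathfrak A_0,\mathfrak A_+,\mathfrak A_-$ recorded after \eqref{eq:df01}, the antisymmetric generator couples consecutive degrees but preserves the degree grading well enough that the resolvent estimate can be run on each sector.

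\emph{Second}, I would translate the $H_{-1}$ estimate into the combinatorial language of Section~\ref{sec:dua}. Because $V_{J_{0,1}}$ is supported on degrees $\ge 2$, applying Lemma~\ref{lem:df12} (and its natural extension controlling $\mathcal D(f)$ from below by $\mathbb D_1(F_1)+\mathbb D_2(F_2)$) lets me bound the variance by the resolvent norm of a symmetric random walk of two particles on $\ZZ$, i.e.\ by the Green function of the walk governed by $\mathbb D_2$. The key point is that for the degree-$2$ part, the relevant random walk lives on $\ZZ^2$ (two particles), which is recurrent but with a Green function that, against a profile of width $n$ coming from $\mb H(x/n)$, produces a factor that vanishes as $n\to\infty$; schematically the bound is $O(n^{-1}\log n)$ or $O(n^{-1})$, which after multiplication by the prefactors still tends to zero. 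The subtraction of the degree-one piece is exactly what removes the dangerous diffusive (single-particle, transient-free) contribution that would otherwise give an $O(1)$ term.

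\textbf{The main obstacle} I anticipate is controlling the antisymmetric part $b^2\mathcal A$ in the resolvent estimate: unlike a purely reversible model, here $\mathcal L$ is not self-adjoint, so the naive variance bound must be replaced by a graded/sector-condition argument showing that the off-diagonal coupling between degree sectors induced by $\mathfrak A_\pm$ does not spoil the $H_{-1}$ bound obtained from $\mathcal S$ alone. Concretely I would verify a sector condition of the form $\langle \mathfrak A F, G\rangle^2 \le C\,\langle F,(-\mathfrak S)F\rangle\,\langle G,(-\mathfrak S)G\rangle$ on the relevant sectors, which lets the antisymmetric part be absorbed into the symmetric Dirichlet form with a constant independent of $n$. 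Once this sector condition is in place, the remaining steps are the routine Green-function estimates for the two-particle walk, and the conclusion follows by letting $n\to\infty$.
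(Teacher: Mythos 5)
You have misidentified the structure of $V_{J_{0,1}}$, and this is a genuine gap. The projection $\mathcal{P}_{\rho,\theta}$ removes the linear field only at the single site $x=0$, whereas $J_{0,1}$ lives on the bond $\{0,1\}$; computing explicitly (this is the function $\Theta_x$ appearing in Section \ref{sec:triv}) the first component of $V_{J_{0,1}}$ equals $-b^{2}(\xi_0-\rho)(\xi_1-\rho)-(\gamma+b^{2}\rho)\nabla\xi_0$ and the second component equals $-\nabla(b^{2}\xi_0+\gamma\log\xi_0)$. So $V_{J_{0,1}}$ \emph{does} have a degree-one part ($\nabla\xi_0$ is proportional to $H_{\delta_1}-H_{\delta_0}$), and worse, $\log\xi_0$ has nonvanishing Laguerre coefficients at every degree $n\ge 1$, so the whole second component is not of degree $\ge 2$ at all; your two-particle analysis simply does not see it. What makes these leftover terms harmless is not their degree but the fact that they are discrete gradients: summing $\mb H(x/n)\tau_x$ against a gradient and performing a discrete integration by parts produces a factor $n^{-1}$, after which a crude Cauchy--Schwarz bound in ${\mathbb L}^{2}({\mathbb P}_{\nu_{\beta,\lambda}})$ shows these contributions vanish. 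This preliminary step, which the paper carries out before invoking its refined estimate, is absent from your plan, and without it your premise ``lowest surviving degree is $2$'' is false.

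Your designated ``main obstacle'' is also a red herring, and pursuing it would stall the proof. No sector condition is needed here --- nor should you expect one to hold: Theorem \ref{th:diffusivity} shows that the Green--Kubo diffusivity diverges, which is exactly the signature of a failing sector condition, as in one-dimensional asymmetric exclusion. The paper (which proves this proposition as the $\alpha=0$ case of the refined Theorem \ref{th:BGII}) bounds the time variance, via Sethuraman's Lemma 3.9 adapted to time-dependent test functions, by $\tfrac{C}{n}\int_0^t \langle f_s,\,((sn)^{-1}-\mathcal{L})^{-1}f_s\rangle_{\nu_{\beta,\lambda}}\,ds$, and then discards the antisymmetric part outright through the universal inequality $\langle f,(z-\mathcal{L})^{-1}f\rangle_{\nu_{\beta,\lambda}}\le\langle f,(z-\gamma\mathcal{S})^{-1}f\rangle_{\nu_{\beta,\lambda}}$, a one-line Cauchy--Schwarz with respect to the form $z-\gamma\mathcal{S}$ using only the antisymmetry of $\mathcal{A}$ in ${\mathbb L}^{2}(\nu_{\beta,\lambda})$; at the hyperbolic scale this loss is affordable. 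From there your outline does match the paper's (restriction of the variational formula to degree-two functions, Lemma \ref{lem:df12}, removal of the geometric constraints, Fourier computation), except for the quantitative rate: since the degree-two data sit on $\{|x-y|=1\}$, only the frequency $k_1+k_2$ survives and the two-particle resolvent behaves one-dimensionally, yielding $C\varepsilon^{-1/2}\sum_x\varphi^{2}(x/n)$ with $\varepsilon=(sn)^{-1}$ --- not the logarithmic two-dimensional rate you guessed --- and the resulting bound of order $n^{-1/2}$ is what actually closes the argument (and explains why the refined statement holds for all $\alpha<1/3$).
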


Since we prove a refined version of this proposition we omit its proof. As a consequence of last result, we get that the fluctuation field $({\mc Z}_{\cdot}^{n,0})_{n \ge 1}$ converges in law (in the sense of Theorem  \ref{th:fluct-hs}) to $\mathcal{Z}_\cdot^0$  solution of the equation at the right hand side of (\ref{eq:lin ss01}). Theorem \ref{th:fluct-hs} is a simple consequence of this fact.

In order to prove Corollary \ref{CLT Energy flux} and Corollary \ref{CLT volume flux} we follow the approach first presented in \cite{RV} and considered also in \cite{J.L.} (resp. \cite{G}) for the symmetric (resp. asymmetric) simple exclusion. For that reason we sketch the main steps of the proof.  For more details we refer the reader to, for example, the proof of Theorem 4.2 of \cite{G}. The main goal is to related the energy and volume currents with the density field and to use the result of Theorem \ref{th:fluct-hs}.  For that purpose and whenever the total energy (resp. volume) at $\eta$ is finite we can write down the energy (resp. volume) flux through the bond $\{x,x+1\}$ during the time interval $[0,tn]$, as:
\begin{equation*}
\mathcal{E}_{x,x+1}^n(t):=\sum_{y\geq{x+1}}\Big\{V_b(\eta_y(tn))-V_b(\eta_y(0))\Big\}
\end{equation*}
\begin{equation*}
 \Big( \text{resp. \; } \mathcal{V}_{x,x+1}^n(t):=\sum_{y\geq{x+1}}\Big\{\eta_y(tn)-\eta_y(0)\Big\}\Big).
\end{equation*}

 In such case, we can relate the energy (resp. volume) flux given above with the energy-volume fluctuation field as
\begin{equation*}
\mathcal{E}_{x,x+1}^n(t):=\mathcal{Y}_t^n(H_x^1)-\mathcal{Y}^n_0(H_x^1)
\end{equation*}
\begin{equation*}
 \Big( \textrm{resp. \; } \mathcal{V}_{x,x+1}^n(t):=\mathcal{Y}_t^n(H_x^2)-\mathcal{Y}^n_0(H_x^2),
\end{equation*}
where
\[
H_x^1(y) = \left(
\begin{array}{c}
\bf 1_{\{y\geq{x}\}} \\
0
\end{array}
\right), \quad
H_x^2(y) = \left(
\begin{array}{c}
0 \\
\bf 1_{\{y\geq{x}\}}
\end{array}
\right).
\]
Since the function $\bf 1_{\{y\geq{x}\}}$ does not belong to our space of test functions for which we derived Theorem \ref{th:fluct-hs} we first show that
\begin{proposition} \label{prop nec}
For every $t\geq{0}$,
\begin{equation*}
\lim_{\ell\to\infty} \mathbb{E}_{\nu_{\beta,\lambda}}\Big[\Big(\mathcal{E}_{x,x+1}^n(t)-(\mathcal{Y}_t^n(G_{\ell,x}^1)-\mathcal{Y}^n_0(G_{\ell,x}^1))\Big)^2\Big]=0,
\end{equation*}
\begin{equation*}
(\textrm{resp.} \lim_{\ell\to\infty} \mathbb{E}_{\nu_{\beta,\lambda}}\Big[\Big(\mathcal{V}_{x,x+1}^n(t)-(\mathcal{Y}_t^n(G_{\ell,x}^2)-\mathcal{Y}^n_0(G_{\ell,x}^2))\Big)^2\Big]=0,
\end{equation*}
where \[
G_{\ell,x}^1(y) = \left(
\begin{array}{c}
G_{\ell,x}(y) \\
0
\end{array}
\right), \quad
G_{\ell,x}^2(y) = \left(
\begin{array}{c}
0 \\
G_{\ell,x}(y)
\end{array}
\right)
\]
and $G_{\ell,x}(y):=(1-y/\ell)\bf 1_{\{x\leq{y}\leq{x+\ell}\}}$.
\end{proposition}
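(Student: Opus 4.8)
The plan is to reduce the statement to an $\LL^2$ estimate on the difference between the sharp cutoff field $\mathcal{Y}^n_t(H^1_x)-\mathcal{Y}^n_0(H^1_x)$ and its smoothed approximation $\mathcal{Y}^n_t(G^1_{\ell,x})-\mathcal{Y}^n_0(G^1_{\ell,x})$, and to control this difference uniformly in $n$ so that it vanishes as $\ell\to\infty$. First I would work with the $\xi$-variables and translate everything into the fluctuation field $\mathcal{Z}^{n,0}$ via \eqref{rel Y and Z}, since the energy flux $\mathcal{E}^n_{x,x+1}$ corresponds to the difference of $\mathcal{Z}$-fields tested against $\Lambda^{-T}H^1_x$. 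The key observation is that $H^1_x(y)-G^1_{\ell,x}(y)$ is supported on $\{y\geq x+\ell\}$ together with a piece where $G_{\ell,x}$ deviates from $\mathbf 1_{\{y\geq x\}}$ on the interval $[x,x+\ell]$; both the sharp indicator and its smoothed version produce a telescoping sum, so the relevant object is
\begin{equation*}
\frac{1}{\sqrt n}\sum_{y\in\ZZ}\big(H^1_x-G^1_{\ell,x}\big)(y)\cdot\big(\omega_y(tn)-\omega_y(0)\big),
\end{equation*}
and I would bound its second moment under $\nu_{\beta,\lambda}$.

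The main step is to estimate the time increment $\omega_y(tn)-\omega_y(0)$. Using the conservation-law form $\mathcal{L}(\xi_y)=-\nabla j_{y-1,y}$ and $\mathcal{L}(\log\xi_y)=-\nabla j'_{y-1,y}$, the increment over $[0,tn]$ is a sum of an integrated-current (drift) term and a martingale term coming from the Dynkin decomposition. After the Abel summation implicit in the telescoping indicators, the drift contribution collapses to current terms across the two bonds at $x$ and $x+\ell$ plus boundary currents, while the martingale variance is controlled by the carré-du-champ of the exchange noise $\mathcal S$. The crucial point is that the sharp and smooth profiles differ only near the bond $\{x+\ell,x+\ell+1\}$ (and in the averaging region), so the expectation reduces, via stationarity of $\nu_{\beta,\lambda}$ and the explicit finite second moments of $\xi_0$ and $\log\xi_0$ (the Gamma distribution has all moments), to a bound of order $tn\cdot\ell^{-1}$ for the current part divided by $n$, together with a martingale part of order $t\,\ell^{-1}$. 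Here I expect the time scale $tn$ and the $1/\sqrt n$ normalization to cancel the extra factor of $n$ coming from the integrated current, leaving an expression that is $O(t/\ell)$ uniformly in $n$.

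The hard part will be handling the current (drift) contribution: naively the integrated current over a time window of length $tn$ grows like $tn$, and one must exploit the cancellation built into the smoothing profile $G_{\ell,x}(y)=(1-y/\ell)\mathbf 1_{\{x\le y\le x+\ell\}}$ to gain the decaying factor $1/\ell$. I would do this by writing the drift as $\int_0^{tn}\sum_y \big(\nabla(H^1_x-G^1_{\ell,x})\big)(y)\cdot j_{y,y+1}(\xi(s))\,ds$, noting that $\nabla(H^1_x-G^1_{\ell,x})$ is supported on a window of width $\ell$ with amplitude $\ell^{-1}$, and then invoking the equilibrium bound $\langle |j_{y,y+1}|^2\rangle_{\nu_{\beta,\lambda}}<\infty$ (finite because the currents are polynomial in $\xi$ and $\log\xi$, which have finite moments under the Gamma measure) together with the stationarity of the process. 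A Cauchy–Schwarz estimate in the time integral then yields a bound of the right order. Since the corollaries \ref{CLT Energy flux}, \ref{CLT volume flux} and \ref{vanishing of Energy flux} are stated to follow the arguments of \cite{G}, the cleanest route is to mimic the corresponding approximation lemma there, adapting the moment computations to the Gamma-distributed marginals of $\nu_{\beta,\lambda}$; the volume case is identical with $j'$ in place of $j$ and $\log\xi_y$ in place of $\xi_y$.
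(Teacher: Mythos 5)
Your strategy is essentially the one the paper intends: the paper gives no proof of this proposition, deferring to Proposition 4.1 of \cite{G}, and that argument is precisely your reduction --- use the conservation law and summation by parts to rewrite the difference between the sharp-cutoff field and the smoothed one as a weighted sum of time-integrated bond currents, then estimate the drift and martingale parts in ${\mathbb L}^2(\PP_{\nu_{\beta,\lambda}})$ using stationarity of $\nu_{\beta,\lambda}$ and the finite moments of its Gamma marginals. Your proposal is thus a correct (and more self-contained) version of the route the paper merely cites.

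Two points in your write-up need repair, though both are fixable. First, the sentence asserting that after Abel summation the drift ``collapses to current terms across the two bonds at $x$ and $x+\ell$'' is wrong as stated, and inconsistent with your own final computation: since $\nabla(H^1_x-G^1_{\ell,x})$ is constant, equal to the reciprocal of the window width, across the whole window and zero elsewhere, the drift is the small-amplitude average of \emph{all} bond currents in the window --- this spreading is exactly where the $1/\ell$ gain comes from. Note also that the non-centered parts $\langle j_{y,y+1}\rangle_{\nu_{\beta,\lambda}}\neq 0$ cancel only because $\sum_y\nabla G_{\ell,x}(y)=0$ for the compactly supported profile and because $\mathcal{E}^n_{x,x+1}(t)$ is the centered object defined through the field; since the sharp-cutoff field itself is only formal, you should phrase the identity via the flux definition together with the Dynkin decomposition of the well-defined $G_{\ell,x}$-field. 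Second, your order bookkeeping: Cauchy--Schwarz in time gives $\EE_{\nu_{\beta,\lambda}}\bigl[\bigl(\int_0^{tn}F(\xi(s))\,ds\bigr)^2\bigr]\le (tn)^2\,\langle F^2\rangle_{\nu_{\beta,\lambda}}$, not $tn\,\langle F^2\rangle_{\nu_{\beta,\lambda}}$, and to get $\langle F^2\rangle_{\nu_{\beta,\lambda}}\le C/(\text{window width})$ you need more than $\langle |j_{0,1}|^2\rangle_{\nu_{\beta,\lambda}}<\infty$: you need that the centered currents are finite-range local functions, hence uncorrelated under the product measure beyond distance two --- a crude Cauchy--Schwarz over the spatial sum would give only $O(1)$ and destroy the $1/\ell$ gain. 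With these corrections the drift contributes $O(t^2/\ell)$ (not $O(t/\ell)$) and the martingale part $O(t/\ell)$, uniformly in $n$ once the test function is read macroscopically (the support of $G_{\ell,x}(\cdot/n)$ contains of order $n\ell$ bonds and its gradient has amplitude $(n\ell)^{-1}$); this vanishes as $\ell\to\infty$ and is exactly what is needed for Corollaries \ref{CLT Energy flux} and \ref{CLT volume flux}.
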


The proof of last result follows the same lines as in the proof of Proposition 4.1 of \cite{G} and for that reason we omitted it.
We notice that, at this point we are still not able to apply Theorem \ref{th:fluct-hs} since $G_{\ell,x}^1$ and $G_{\ell,x}^2$ are not functions in $\mathcal{S}(\bb R)$. Therefore, we approximate in $L^2(\bb R)$ each one of these functions by smooth functions for which Theorem \ref{th:fluct-hs} holds. Then, the proof of Corollary \ref{CLT Energy flux} and \ref{CLT volume flux} follows combining the previous proposition with Theorem \ref{th:fluct-hs}. For more details on this argument, we refer the reader to \cite{G}.

Finally, in order to compute the limiting variance, for example for the energy flux, we do the following. Here we take $x=0$ to simplify the notation
\begin{equation*}
\begin{split}
{\bb E_{Q}}[Z^e_{t}Z^e_{s}]&={\bb E_{Q}}[\{\mathcal{Y}_{t}(H_0^1)-\mathcal{Y}_{0}(H_0^1)\}\{\mathcal{Y}_{s}(H_0^1)-\mathcal{Y}_{0}(H_0^1)\}]\\
&\hspace{0.1cm}\begin{split}=\lim_{\ell\rightarrow{\infty}}{\bb E_{Q}}\Big[\mathcal{Y}_{t}(G_{\ell,0}^1)\mathcal{Y}_{s}(G_{\ell,0}^1)-&\mathcal{Y}_{t}(G_{\ell,0}^1)\mathcal{Y}_{0}(G_{\ell,0}^1)\\
-\mathcal{Y}_{s}(G_{\ell,0}^1)&\mathcal{Y}_{0}(G_{\ell,0}^1)+\mathcal{Y}_{0}(G_{\ell,0}^1)\mathcal{Y}_{0}(G_{\ell,0}^1)\Big]\end{split}
\end{split}
\end{equation*}

Now, to compute last expectation we use the change of variables. Notice that for $H,G\in{\mathcal{S}(\bb R)}$ we have that ${\bb E_{Q}}[\mathcal{Z}^0_t(H)\mathcal{Z}^0_0(G)]:=\left<T^{-}_t H \cdot \chi G\right>$. Combining this with \eqref{rel Y and Z},  it follows that ${\bb E_{Q}}[\mathcal{Y}_t(H)\mathcal{Y}_0(G)]:=\left<T^{-}_t(\Lambda^T H)\cdot \chi\Lambda^TG\right>$.
By the definition of $(T_t^{-})_{t\geq{0}}$ we have for $G_1, G_2$ test functions in $S(\RR)$:
\[
T^{-}_{t}\left(
\begin{array}{c}
G_1(x) \\
G_2(x)
\end{array}
\right)
 = \left(
\begin{array}{c}
\frac{1}{\rho}\Big(G_2(x-2b^2\rho t)-G_2(x)\Big)+G_1(x-2b^2\rho t) \\
G_2(x)
\end{array}
\right). \quad
\]
As a consequence we obtain that
\begin{equation*}
\begin{split}
{\bb E_{Q}}[&Z^e_{t}Z^e_{s}]=\Big(1-{\frac{1}{\rho}}\Big)^2\Big(\frac{\lambda+1}{\beta^2}\Big)\\
&\times\lim_{\ell\rightarrow{\infty}}\int_{\mathbb{R}}\Big(G_\ell^t(x)G_\ell^s(x)-G_\ell^t(x)G_\ell(x)-G_\ell^s(x)G_\ell(x)+G_\ell(x)G_\ell(x)\Big)dx,
\end{split}
\end{equation*}
where for $t\geq{0}$, $G_\ell^{\,t}(x):=G_{\ell,0}(x-2b^2\rho t)$.
Now, using \eqref{eq:relblbl} and \eqref{eq:chimpot} the proof ends. Analogously, repeating the computations above, replacing $H_0^1$ by $H_0^2$ we get the covariance for the volume flux.

\subsection{The longer time scale}

Since in the hyperbolic time scale the initial fluctuations for the field $\mathcal{Z}^{n,\alpha}_\cdot$ are transported by the transposed  linearized system given on the right hand side of (\ref{eq:lin ss01}), we redefine the fluctuation field ${\widehat {\mc Z}}_{\cdot}^{n,\alpha}$, $\alpha>0$, on $\bG \in S(\RR) \times S(\RR)$, by
\begin{equation*}
{\widehat {\mc Z}}_t^{n,\alpha} (\bG)= {\mc Z}_t^{n, \alpha} \left( {T}^{+}_{tn^{\alpha}} \bG \right).
\end{equation*}

%
By Dynkin's formula, see for example Appendix 1, section 5 of \cite{KL}
\begin{equation*}
{\mc M}_t^{n, \alpha} (\bG) = {\widehat {\mc Z}}_t^{n,\alpha} (\bG) - {\widehat {\mc Z}}_0^{n,\alpha} (\bG) - \int_{0}^t \left\{ n^{1+\alpha} {\mc L} \left( {\widehat {\mc Z}}_s^{n,\alpha} (\bG)\right) + \partial_s {\widehat {\mc Z}}_s^{n,\alpha} (\bG)\right\} \, ds
\end{equation*}
is a martingale with quadratic variation given by
\begin{equation*}
\langle {\mc M}^{n, \alpha} \rangle_t=\int_{0}^t  n^{1+\alpha} {\mc L} \left( {\widehat {\mc Z}}_s^{n,\alpha} (\bG)\right)^2-2n^{1+\alpha} \left( {\widehat {\mc Z}}_s^{n,\alpha} (\bG)\right){\mc L} \left( {\widehat {\mc Z}}_s^{n,\alpha} (\bG)\right) \, ds.
\end{equation*}
A simple computation shows that ${\mathbb E}_{\nu_{\beta,\lambda}}[\langle {\mc M}^{n, \alpha} \rangle_t]$ vanishes as $n$ goes to ${\infty}$ for $\alpha<1$. This is equivalent to saying that the martingale ${\mc M}^{n, \alpha}_t$ vanishes as $n$ goes to ${\infty}$ in ${\mathbb L}^{2} (\PP_{\nu_{\beta,\lambda}})$, for $\alpha<1$.
Observe that, by definition of $(T^{+}_t)_{t \ge 0}$, we have
\begin{equation*}
\begin{split}
\partial_s {\widehat {\mc Z}}_s^{n,\alpha} (\bG) &= -\cfrac{n^\alpha}{\sqrt n} \sum_{x \in \ZZ} M^T \left[ \partial_q\left( T^+_{s n^{\alpha}} \bG \right) (x/n) \right]  \cdot (\omega_x (s n^{1+\alpha}) -\omega)\\
&=-\cfrac{n^\alpha}{\sqrt n} \sum_{x \in \ZZ}  \left[ \partial_q\left( T^+_{s n^{\alpha}} \bG \right) (x/n) \right]  \cdot M(\omega_x (s n^{1+\alpha}) -\omega).
\end{split}
\end{equation*}
On the other hand, the first term in the integral part of the martingale ${\mc M}_t^{n, \alpha} (\bG)$ is equal to
\begin{equation*}
\cfrac{n^\alpha}{\sqrt n} \sum_{x \in \ZZ} n\Big( (T^+_{sn^{\alpha}} \bG) \left( \cfrac{x+1}{n}\right) - (T^+_{sn^{\alpha}} \bG) \left( \cfrac{x}{n} \right)\Big) \cdot \Big( J_{x,x+1} (\xi(sn^{1+\alpha})) \,  - \, \langle J_{x,x+1} \rangle_{}\nu_{\beta,\lambda}\Big).
\end{equation*}
Performing a Taylor expansion, we can replace this term, up to a term vanishing as $n$ goes to $\infty$ in ${\mathbb L}^{2} (\PP_{\nu_{\beta,\lambda}})$,  by
\begin{equation*}
\cfrac{n^\alpha}{\sqrt n} \sum_{x \in \ZZ}  \Big((\partial_q T^+_{s n^{\alpha}} \bG) (x/n)\Big) \cdot \Big( J_{x,x+1} (\xi(sn^{1+\alpha})) \,  - \, \langle J_{x,x+1} \rangle_{}\nu_{\beta,\lambda}\Big).
\end{equation*}


Thus, in order to show that
\begin{equation}\label{vanish Z field}
\lim_{n \to \infty}{\mathbb E}_{\nu_{\beta,\lambda}} \left[ \Big({\widehat {\mc Z}}_t^{n,\alpha} (\bG) - {\widehat {\mc Z}}_0^{n,\alpha} (\bG)\Big)^2\right]=0,
 \end{equation}
 it remains to show that
\begin{equation*}
\lim_{n \to \infty} {\mathbb E}_{\nu_{\beta,\lambda}} \left[ \left(  \cfrac{n^\alpha}{\sqrt n} \int_0^t \, ds \, \sum_{x \in \ZZ} (\partial_q T^+_{s n^{\alpha}} \bG) (x/n) \cdot \Theta_{x} (\xi(sn^{1+\alpha})) \right)^2\right]=0
\end{equation*}
where for $\xi\in{(0,+\infty)^\bb Z}$
\begin{equation*}
\Theta_x(\xi)=  J_{x,x+1}(\xi) \,  - \, \langle J_{x,x+1} \rangle_{\nu_{\beta,\lambda}} -M\, (\omega_x -\omega).
\end{equation*}
Observe that in this formula, $M:=M(\rho, \theta)$ is the differential with respect to $(\rho,\theta)$ of the function $ \langle J_{x,x+1} \rangle_{\nu_{\beta,\lambda}}$ as computed below \eqref{eq:lin ss01}.  A simple computation shows that for $\xi\in{(0,+\infty)^\bb Z}$
\begin{equation*}
\Theta_x(\xi)=
\left(
\begin{array}{c}
-b^{2} (\xi_{x+1} -\rho)(\xi_x -\rho)- (\gamma + b^2 \rho) \nabla \xi_x \\
-\nabla (b^2 \xi_x + \gamma \log (\xi_x))
\end{array}
\right).
\end{equation*}

The discrete gradient terms appearing in the previous expression, permit to perform another discrete integration by parts and the resulting terms vanish in ${\mathbb L}^{2} (\PP_{\nu_{\beta,\lambda}})$ as $n$ goes to $\infty$, for $\alpha<1$. Using the smoothness of the function $\bG$, we see that it only remains to show the following theorem with $\varphi (s,q)$ equal to the first component of the column vector $\partial_q T^+_{s} \bG$.

\begin{theo}[Boltzmann-Gibbs principle II]
\label{th:BGII}

Fix $\alpha <1/3$ and let $\varphi:\mathbb{R}^+\times \RR\rightarrow{\RR}$ be such that for any $t \ge 0$, $\varphi(t,\cdot)\in S(\RR)$. For every $t>0$
\begin{equation*}
\lim_{n \to \infty} {\mathbb E}_{\nu_{\beta,\lambda}} \left[ \left(\int_0^t \cfrac{n^{\alpha}}{\sqrt{n}} \sum_{x \in \ZZ} \varphi (sn^\alpha,x/n) (\xi_x(sn^{1+\alpha}) -\rho) (\xi_{x+1}(sn^{1+\alpha}) -\rho)  \, ds \right)^2\right] =0
\end{equation*}
\end{theo}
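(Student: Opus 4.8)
The plan is to reduce this dynamical $L^2$ estimate to a static resolvent-norm computation, exploiting the fact that the integrand is a \emph{pure degree-two} function in the Laguerre basis. First I would observe that, by the first identity in \eqref{eq:fop} together with $\rho=(\lambda+1)/\beta$, one has $\xi_x-\rho=-\beta^{-1}H_1^{(\lambda)}(\beta\xi_x)$, so that
$$(\xi_x-\rho)(\xi_{x+1}-\rho)=\beta^{-2}\,H_{\delta_x+\delta_{x+1}}(\xi),$$
i.e. the observable is exactly $\beta^{-2}$ times the degree-two basis element attached to the adjacent pair $\{x,x+1\}$. In particular it is orthogonal in $\bb L^2(\nu_{\beta,\lambda})$ to the conserved fields (degrees $0$ and $1$), which is what makes a Boltzmann--Gibbs reduction possible. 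After the time change $\tau=sn^{1+\alpha}$, the quantity inside the expectation becomes $n^{-3/2}\int_0^{tn^{1+\alpha}}W_\tau(\xi(\tau))\,d\tau$, where $W_\tau=\beta^{-2}\sum_x\varphi(\tau/n,x/n)\,H_{\delta_x+\delta_{x+1}}$, so the claim is equivalent to $n^{-3}\,\mathbb{E}_{\nu_{\beta,\lambda}}\big[(\int_0^{tn^{1+\alpha}}W_\tau\,d\tau)^2\big]\to 0$.

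Next I would invoke a Kipnis--Varadhan variance inequality for the stationary (non-reversible) process, valid for time-dependent integrands, as in \cite{KL,G}. This bounds the displayed expectation by $C\,(tn^{1+\alpha})\sup_\tau\langle W_\tau,(\lambda_n-\mathcal{L})^{-1}_{\mathrm{sym}}W_\tau\rangle_{\nu_{\beta,\lambda}}$ with cutoff $\lambda_n=(tn^{1+\alpha})^{-1}\sim n^{-(1+\alpha)}$; the slow time-dependence of $\varphi$ is harmless because $\varphi(\tau/n,\cdot)$ is uniformly Schwartz, so the resolvent norm is uniform in $\tau$. The crucial simplification is that the antisymmetric (Hamiltonian) part $b^2\mathcal{A}$ can only \emph{decrease} the symmetrized resolvent norm: by the standard variational formula the full symmetrized resolvent norm is dominated by the symmetric-part one,
$$\langle W,(\lambda_n-\mathcal{L})^{-1}_{\mathrm{sym}}W\rangle_{\nu_{\beta,\lambda}}\le\langle W,(\lambda_n-\gamma\mathcal{S})^{-1}W\rangle_{\nu_{\beta,\lambda}}.$$
Thus for this upper bound the mixing between degrees produced by $\mathcal{A}$ is not needed (it is needed only for the matching lower bound of Theorem~\ref{th:diffusivity}).

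It then remains to estimate the symmetric-part resolvent norm of the degree-two function $W$. Passing to the dual picture, $W$ corresponds to the function $G(\delta_x+\delta_{x+1})=\beta^{-2}\varphi(x/n)$ supported on adjacent pairs, and by Lemma~\ref{lem:df12} the Dirichlet form $\mathcal{D}$ restricted to degree two is comparable to $\mathbb{D}_2$, the Dirichlet form of the two-particle walk. By operator monotonicity this reduces $\langle W,(\lambda_n-\gamma\mathcal{S})^{-1}W\rangle$, up to constants and a rescaling of $\lambda_n$, to the resolvent norm of $G$ for that walk. Decomposing into center-of-mass and relative coordinates, $G$ is localized at relative distance one and spread on scale $n$ in the center of mass; a Fourier (Green's-function) computation, in which integrating out the one-dimensional relative coordinate produces a factor $(\lambda_n+k^2)^{-1/2}$ at center-of-mass momentum $k$, yields
$$\langle W,(\lambda_n-\gamma\mathcal{S})^{-1}W\rangle_{\nu_{\beta,\lambda}}\le C\,n\int_{\RR}\frac{|\widehat\varphi(u)|^2}{\sqrt{\lambda_n+u^2/n^2}}\,du\le C'\,n^{(3+\alpha)/2},$$
the last step using $\lambda_n\sim n^{-(1+\alpha)}\gg n^{-2}$.

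Combining the three steps gives $C\,t\,n^{-3}\cdot n^{1+\alpha}\cdot n^{(3+\alpha)/2}=C\,t\,n^{(3\alpha-1)/2}$, which tends to $0$ precisely when $\alpha<1/3$, as claimed. I expect the main obstacle to be the third step: obtaining the sharp exponent $n^{(3+\alpha)/2}$ for the two-particle resolvent norm with the correct dependence on $\lambda_n$. This is where Lemma~\ref{lem:df12} and the explicit Laguerre structure are essential, and where the recurrence of the relative coordinate (the $\lambda^{-1/2}$ Green's function) dictates the critical value $\alpha=1/3$. A secondary, more routine, point is making the time-dependent Kipnis--Varadhan inequality precise, which I would settle by the arguments of \cite{KL,G}.
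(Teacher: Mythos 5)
Your proposal is correct and follows essentially the same route as the paper's proof: the same identification of the observable as a pure degree-two Laguerre element $H_{\delta_x+\delta_{x+1}}$, the same time-dependent Sethuraman/Kipnis--Varadhan resolvent bound (the paper cites Lemma 3.9 of \cite{S.} adapted via \cite{C.L.O.}) with the antisymmetric part $b^2\mathcal{A}$ discarded by the variational formula, the same reduction through Lemma \ref{lem:df12} to the two-particle walk, and the same Fourier computation in which integrating out the relative coordinate produces the $\varepsilon^{-1/2}$ Green's-function factor, giving the identical power count $n^{(3\alpha-1)/2}$ and the threshold $\alpha<1/3$. The only step you compress --- the ``operator monotonicity'' passage from the constrained two-particle walk to the free one, which is not a termwise comparison of Dirichlet forms --- is handled in the paper by explicitly extending the off-diagonal test function to the diagonal via ${\tilde G}(x,x)=\frac{1}{4}\sum_{|\be|=1}G((x,x)+\be)$, after which your argument goes through verbatim.
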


\begin{proof}
In the following, $C, C_0, C_1, \ldots$ denote constants independent of $n$ whose values can change from line to line.

 Let $f_s(\xi)$ be the function defined by
$$f_s (\xi)= \sum_{x \in \ZZ} \varphi (s,x/n) H_{\delta_x +\delta_{x+1}}(\xi)=-\beta^2 \sum_{x \in \ZZ} \varphi (s,x/n)  (\xi_x -\rho) (\xi_{x+1} -\rho).$$
The last equality follows from \eqref{eq:fop} and \eqref{eq:chimpot}.

 We have the following upper bound
\begin{equation*}
\begin{split}
 {\mathbb E}_{\nu_{\beta,\lambda}} \left[ \left( \int_0^t\!\! f_{sn^\alpha}(\xi (s n^{1+\alpha}))\,ds \right)^2\right]  & \le C \int_0^t \langle\, f_{sn^\alpha}, (s^{-1} - n^{1+\alpha} {\mc L})^{-1} f_{sn^\alpha} \rangle_{\nu_{\beta,\lambda}}ds \\
 &= \cfrac{C} {n^{1+\alpha}} \int_{0}^t \Big\langle f_{sn^\alpha}, \left( \cfrac{1}{s n^{1+\alpha}} - {\mc L} \right)^{-1}\!\!\! \!\!f_{sn^\alpha} \Big\rangle_{\nu_{\beta,\lambda}}\!\!ds\\
 &\le   \cfrac{C} {n^{1+\alpha}}\int_{0}^t  \Big\langle  f_{sn^\alpha}, \left( \cfrac{1}{s n^{1+\alpha}} - \gamma {\mc S} \right)^{-1}\!\!\!\!\! f_{sn^\alpha} \Big \rangle_{\nu_{\beta,\lambda}}\!\!ds.
 \end{split}
\end{equation*}

In the first inequality above we used Lemma 3.9 of \cite{S.} applied to this setting. We notice that since our test functions depend on time, the lemma of \cite{S.}  has to be modified as written here. To prove the last result one can simply adapt the proof of Lemma 4.3 of \cite{C.L.O.} to this case.

In order to simplify notations, let us define $\ve= 1/sn^{1+\alpha}$.

We denote by $\Sigma_2^0$ the set of configurations $\sigma$ of $\Sigma_2$ such that $\sigma= 2 \delta_x$, $x \in \ZZ$, and $\Sigma_2^{\pm}$ the complementary set of $\Sigma_2^0$ in $\Sigma_2$, i.e. the set of configurations $\sigma \in \Sigma_2$ such that $\sigma=\delta_x + \delta_y$, $y \ne x \in \ZZ$. Observe that $f_{sn^\alpha}$ is a function of degree $2$ with a decomposition in the form $f_{sn^\alpha}=\sum_{\sigma \in \Sigma_2} \Phi _{sn^\alpha}(\sigma) H_{\sigma}$ which satisfies $\Phi_{sn^\alpha}(\sigma)=0$ if $\sigma \in \Sigma_2^0$. We have that (see e.g. \cite{S.})
\begin{equation*}
\Big\langle  f_{sn^\alpha}\, , \left( \ve-\gamma {\mc S} \right)^{-1} f_{sn^\alpha} \Big \rangle_{\nu_{\beta,\lambda}} = \sup_{g} \Big\{ 2 \langle f_{sn^\alpha}, g\rangle_{\nu_{\beta,\lambda}} - \ve \langle g\;,g\rangle_{\nu_{\beta,\lambda}} - \gamma  {\mc D} (g) \Big\}
\end{equation*}
where the supremum is taken over local functions $g \in {\mathbb L}^2 (\nu_{\beta,\lambda})$. Decompose $g$ appearing in this variational formula as $g=\sum_{\sigma} G(\sigma) H_{\sigma}$. Recall that $\{H_{\sigma} \, ; \, \sigma \in \Sigma\}$ are orthogonal, that the function $f_{sn^\alpha}$ is a degree $2$ function such that $\Phi_{sn^\alpha}(\sigma)=0$ for any $\sigma \notin \Sigma_2^{\pm}$  and formula (\ref{eq:df01}) for the Dirichlet form ${\mc D} (g)$. Thus, we can restrict this supremum over degree $2$ functions $g$ such that $G(\sigma)=0$ if $\sigma \in \Sigma_2^0$. Then, by Lemma \ref{lem:df12}, we have
\begin{equation*}
\begin{split}
\Big\langle  f_{sn^\alpha}, \left( \ve-\gamma {\mc S} \right)^{-1} f_{sn^\alpha}& \Big \rangle_{\nu_{\beta,\lambda}}\le  \sup_{G} \left\{  \sum_{x \ne y} \Phi_{sn^\alpha}(x,y) G(x,y)  - \ve \sum_{\substack{(x,y) \in \ZZ^2\\ x\neq y}} G^2 (x,y) \right.\\
&\left.\quad \quad \quad-C \sum_{|\be| =1}\sum_{\substack{(x,y) \in \Delta^{\pm}\\ (x,y) + \be \in \Delta^{\pm}}}  \Big( G((x,y)+\be) -G(x,y) \Big)^2\right\}
\end{split}
\end{equation*}
where $C:=C(\lambda, \gamma)$, $\Delta_{\pm} = \{ (x,y) \in \ZZ^2 \, ; \, x\ne y\}$ and as usual we identify the functions defined on $\Sigma_n$ with symmetric functions defined on $\ZZ^n$.

 In order to get rid of the geometric constraints appearing in the last term of the variational formula, for any symmetric function $G$ defined on the set $\Delta_{\pm}$, we denote by ${\tilde G}$ its extension to $\ZZ^2$ defined by
$${\tilde G} (x,y) =G (x,y) \; \text{if}\;  x\ne y, \quad {\tilde G} (x,x) = \cfrac{1}{4} \sum_{|\be|=1} G((x,x) +\be).$$
It is trivial that
\begin{equation*}
\begin{split}
& \sum_{(x,y) \in \ZZ^2} {\tilde G}^2 (x,y)\le C\sum_{\substack{(x,y) \in \ZZ^2\\ x\neq y}} {G}^{2} (x,y) ,\\
\textrm{and}\\
& \sum_{|\be| =1}\sum_{(x,y) \in \ZZ^2}  \Big({\tilde G} ((x,y)+\be) -{\tilde G} (x,y) \Big)^2 \\
&\quad \quad \quad \quad \quad \quad\quad \quad \quad\quad \quad \quad\le C \sum_{|\be| =1}\sum_{\substack{(x,y) \in \Delta^{\pm}\\ (x,y) + \be \in \Delta^{\pm}}}  \Big(G((x,y)+\be) -G(x,y) \Big)^2 .
\end{split}
\end{equation*}

Thus, we have
\begin{equation*}
\begin{split}
\Big\langle  f_{sn^\alpha}, \left( \ve-\gamma {\mc S} \right)^{-1}& f_{sn^\alpha}\Big \rangle_{\nu_{\beta,\lambda}}  \\
\le  C_0 \sup_{G} &\left\{  \sum_{(x,y) \in \ZZ^2} \Phi _{sn^\alpha}(x,y) G(x,y)  - C_1 \ve \sum_{(x, y) \in \ZZ^2} G^2 (x,y) \right. \\
&\left. \quad \quad \quad -C_2 \sum_{|\be| =1}\sum_{(x,y) \in \ZZ^2 }  \Big( G((x,y)+\be) -G(x,y) \Big)^2\right\}
\end{split}
\end{equation*}
where the supremum is now taken over all symmetric local functions $G:\ZZ^2 \to \RR$.  Notice that the last variational formula is equal to the resolvent norm, for a simple symmetric two dimensional random walk, of the function $\Phi_{sn^\alpha}$. By using Fourier transform one can easily show that this supremum is equal to
\begin{equation*}
\cfrac{C_0}{4} \int_{[0,1]^2} \cfrac{|{\hat \Phi}_{sn^\alpha}(\bk)|^2}{C_1 \ve + 4C_2 \sum_{i=1}^2 \sin^{2} (\pi k_i) } d\bk
\end{equation*}
where the Fourier transform ${\hat \Phi}_{sn^\alpha}$ of $\Phi_{sn^\alpha}$ is given by
\begin{equation*}
{\hat \Phi}_{sn^\alpha}( \bk) = \sum_{(x,y) \in \ZZ^2} \Phi_{sn^\alpha}(x,y) e^{2i\pi (k_1 x +k_2 y)}, \quad \bk= (k_1,k_2) \in [0,1]^2.
\end{equation*}
By definition of $f_{sn^\alpha}$, we have $\Phi_{sn^\alpha}(x,y)=\cfrac{1}{2} \Big(\varphi (sn^\alpha,x/n) + \varphi(sn^\alpha,y/n)\Big)$ if $|x-y|=1$ and $0$ otherwise. Consequently, we have
\begin{equation*}
\begin{split}
\Big\langle  f_{sn^\alpha}, &\left( \ve -\gamma {\mc S} \right)^{-1} f_{sn^\alpha} \Big \rangle_{\nu_{\beta,\lambda}}   \le \cfrac{C_0}{16} \int_{[0,1]^2} \cfrac{\left| \sum_{x\in \bb Z} \varphi (sn^\alpha,x/n) e^{2i \pi x (k_1 +k_2)} \right|^2}{C_1 \ve + 4C_2 \sum_{i=1}^2 \sin^{2} (\pi k_i) } d\bk\\
&=  \cfrac{C_0}{16} \int_{[0,2]}\!\!\Big(\int_{[0,1]}  \cfrac{\textbf{1}_{[\sup(1-p,1),\inf(1,0)]}(p)\left| \sum_{x\in \bb Z} \varphi (sn^\alpha,x/n) e^{2i \pi x p} \right|^2}{C_1 \ve + 4C_2 \sin^{2} (\pi k_1) +4C_2 \sin^{2} (\pi (p-k_1)) } dk_1 \,\Big) dp\\
&= \cfrac{C_0}{16} \int_{[0,1]}\Big( \int_{[0,1]}\cfrac{\left| \sum_{x\in \bb Z} \varphi (sn^\alpha,x/n) e^{2i \pi x p} \right|^2}{C_1 \ve + 4C_2 \sin^{2} (\pi k_1) +4C_2 \sin^{2} (\pi (p-k_1)) } dk_1 \,\Big) dp\\
\end{split}
\end{equation*}
where we used the change of variables $p=k_2+k_1$ for the first equality and the periodicity of the functions involved for the second one. It follows that
\begin{equation*}
\begin{split}
\Big\langle  f_{sn^\alpha}, \left( \ve-\gamma {\mc S} \right) &^{-1} f_{sn^\alpha} \Big \rangle_{\nu_{\beta,\lambda}} \\
& \le \cfrac{C_0}{16} \int_{[0,1]} {\left| \sum_{x\in \bb Z} \varphi (sn^\alpha,x/n) e^{2i \pi x p} \right|^2} dp \int_{[0,1]} \cfrac{dk_1}{C_1 \ve + 4C_2 \sin^{2} (\pi k_1) } \\&\le \cfrac{C}{\sqrt{\ve}} \int_{[0,1]} {\left| \sum_{x\in \bb Z} \varphi (sn^\alpha,x/n) e^{2i \pi x p} \right|^2} dp.
\end{split}
\end{equation*}
 Observe now that
\begin{equation*}
\begin{split}
\int_{[0,1]} {\left| \sum_{x\in \bb Z} \varphi (sn^\alpha x/n) e^{2i \pi x p} \right|^2}dp 
&= \sum_{x\in \bb Z} \varphi^2 (sn^{\alpha}, x/n) \le C n.
\end{split}
\end{equation*}
Putting everything together, we get that
\begin{equation*}
\begin{split}
{\mathbb E}_{\nu_{\beta,\lambda}} \Big[  \Big(\int_0^t \cfrac{n^{\alpha}}{\sqrt{n}} \sum_{x \in \ZZ} \varphi (sn^\alpha,x/n)  (\xi_x(sn^{1+\alpha}) -\rho) (\xi_{x+1}&(sn^{1+\alpha}) -\rho)\, ds\Big)^2 \Big] \\
&\le \frac{ Ct n^{2\alpha -1}}{n^{1+\alpha}} \int_{0}^{t} \cfrac{n}{\sqrt{\ve}}\;ds.
\end{split}
\end{equation*}
Since $\ve:=1/sn^{1+\alpha}$ last expression
vanishes as $n$ goes to $\infty$, if $\alpha<1/3$.
\end{proof}

Now, in order to prove Corollary \ref{vanishing of Energy flux} we follow the same arguments as in the proof of Proposition 9.3 of \cite{G} and we proceed as follows. Whenever the total energy (resp. volume) at $\eta$ is finite we can write down:
\begin{equation}\label{longer energy}
\begin{split}
& \mathcal{E}_{u_t^{x,\alpha}(n)}^n(t):=\sum_{y\geq{u_t^{x,\alpha}(n)}}\Big\{V_b(\eta_y(tn^{1+\alpha}))-V_b(\eta_y(0))\Big\}, \\
\Big(&\text{ resp. \, } \mathcal{V}_{u_t^{x,\alpha}(n)}^n(t):=\sum_{y\geq{u_t^{x,\alpha}(n)}}\Big\{\eta_y(tn^{1+\alpha})-\eta_y(0)\Big\}\Big).
\end{split}
\end{equation}
In order to justify the previous equalities one can repeat the same arguments as used in the hyperbolic scaling. Now, we use the change of variables to define the energy (resp. volume) flux through the time-dependent bond $\{u_t^{x,\alpha}(n),u_t^{x,\alpha}(n)+1\}$ during the time interval $[0,tn^{1+\alpha}]$. For that purpose, we define the flux fields in terms of $\xi_x$ such that
\begin{equation*}
\begin{split}
&\tilde{\mathcal E}_{x-1,x}^n(t)-\tilde{\mathcal E}_{x,x+1}^n(t):=\xi_x(tn^{1+\alpha})-\xi_x(0) \\
\Big( \text{resp. \; }  &\tilde{\mathcal{V}}_{x-1,x}^n(t)-\tilde{\mathcal{V}}_{x,x+1}^n(t):=\log(\xi_x(tn^{1+\alpha}))-\log(\xi_x(0)).
\end{split}
\end{equation*}
As above, when it makes sense, we have that
\begin{equation*}
\begin{split}
&\tilde{\mathcal E}_{u_t^{x,\alpha}(n)}^n(t):=\sum_{y\geq{u_t^{x,\alpha}(n)}}\Big\{\xi_y(tn^{1+\alpha})-\xi_y(0)\Big\} \\
\Big( \text{resp. \; }  &\tilde{\mathcal{V}}_{u_t^{x,\alpha}(n)}^n(t):=\sum_{y\geq{u_t^{x,\alpha}(n)}}\Big\{\log(\xi_y(tn^{1+\alpha}))-\log(\xi_y(0))\Big\}\Big)
\end{split}
\end{equation*}
and in this case we can write the previous fields in terms of $\widehat Z_t^{n,\alpha}$.
A simple computation shows that Proposition \ref{prop nec} can similarly stated for last fields. Combining this with \eqref{eq:corr-jj} we have that
\begin{equation*}
\mathcal{E}_{u_t^{x,\alpha}(n)}^n(t):=\tilde{\mathcal E}_{u_t^{x,\alpha}(n)}^n(t)-\tilde{\mathcal V}_{u_t^{x,\alpha}(n)}^n(t),
 \quad \quad \mathcal{V}_{u_t^{x,\alpha}(n)}^n(t):=-\frac{1}{b}\tilde{\mathcal V}_{u_t^{x,\alpha}(n)}^n(t).
\end{equation*}
Then, applying  \eqref{vanish Z field}  to
$G_{\ell,x}^1(y)=
\left(
\begin{array}{c}
G_{\ell,x}(y) \\
0
\end{array}
\right)$  we obtain that
\begin{equation*}
\lim_{n \to \infty} {\mathbb E}_{\nu_{\beta,\lambda}} \left[ \left( \frac{1}{\sqrt n}\Big\{\tilde{\mathcal E}_{u_t^{x,\alpha}(n)}^n(t)-\bb E_{{\nu_{\beta,\lambda}}}[\tilde{\mathcal E}_{u_t^{x,\alpha}(n)}^n(t)]\Big\}\right)^2\right]=0.
\end{equation*}
On the other hand, applying \eqref{vanish Z field} to
\[
\tilde G_{\ell,x}(y)=
\left(
\begin{array}{c}
\frac{1}{\rho}G_{\ell,x}(y-u_t^{x,\alpha}(n)) \\
-G_{\ell,x}(y-u_t^{x,\alpha}(n))
\end{array}
\right)
\]
 we obtain that
\begin{equation*}
\begin{split}
\lim_{n \to \infty} {\mathbb E}_{\nu_{\beta,\lambda}} &\left[ \left( \frac{1}{\rho}\frac{1}{\sqrt n}\Big\{\tilde{\mathcal E}_{u_t^{x,\alpha}(n)}^n(t)-\bb E_{{\nu_{\beta,\lambda}}}[\tilde{\mathcal E}_{u_t^{x,\alpha}(n)}^n(t)]\Big\}\right.\right.\\
&\left.\left.\quad \quad \quad\quad \quad \quad\quad \quad \quad-\frac{1}{\sqrt n}\Big\{\tilde{\mathcal V}_{u_t^{x,\alpha}(n)}^n(t)-\bb E_{{\nu_{\beta,\lambda}}}[\tilde{\mathcal V}_{u_t^{x,\alpha}(n)}^n(t)] \Big\}\right)^2\right]=0.
\end{split}
\end{equation*}
Now, Corollary \ref{vanishing of Energy flux} follows easily from the previous results.
\begin{remark}\label{velocity}
From  \eqref{eq:hl-ss001}, the hydrodynamic equation of $\rho$ is independent of $\theta$ and  it can be rewritten as  $\partial_t \rho-2b^2\rho\partial_q\rho=0.$ Following the system along the characteristics for $\rho$, that is, removing the velocity $2b^2\rho$ from the system, we do not see a time evolution for $\rho$, and since $1/\rho\partial_t\rho-\partial_t\theta=0$,  nor for $\theta$. Therefore, translating the velocity $2b^2\rho$ in terms of the original variables it corresponds to $2b\bar\lambda/\bar\beta$ and that is the reason why we took the time dependent bond as written in Corollary \ref{vanishing of Energy flux}.
\end{remark}

%
%
%

\section{Diffusivity}
\label{sec:diff}

In this section we prove Theorem \ref{th:diffusivity}. Our proof is based on the resolvent methods introduced in \cite{B.,12} and developed in few other contexts (e.g. \cite{Bermerde,SS,TTV}). Some differences with these previous works are the presence of two and not only one conserved quantity and the degeneracy of the symmetric part of the generator.

{The main steps of the proof are the following. First  we use the microscopic change of variables and express the Laplace transform of the current-current correlation function as a resolvent norm in a suitable Hilbert space (see (\ref{eq:lapltransfW})). Then, we rewrite  this resolvent norm as the supremum over the set of local functions of a functional acting on these functions (see (\ref{eq:varformula11})). To get a lower bound we restrict the supremum over degree two  functions. The estimate of the value of the functional for a given degree two function remains in general very difficult. Thus we replace the functional restricted to the set of degree two functions by an equivalent functional simpler to estimate. This is accomplished through Lemma \ref{lem:compa}, Lemma \ref{lem:DF} and Lemma \ref{lem:007}. In the context of the asymmetric simple exclusion, this replacement step is called the ``free particles approximation'' (\cite{B.}) or the ``hard core removal'' (\cite{12}).  It is then possible to estimate the value of this equivalent functional for a suitable degree two test function.}

We fix $\rho>0, \theta \in \RR$ and denote by $\beta,\lambda$ the chemical potentials given by (\ref{eq:chimpot}). Let also $({\bar \beta}, {\bar \lambda})$ be given in terms of $(\beta,\lambda)$ by (\ref{eq:relblbl}).

Recall the definition of $\hat J_{x,x+1}$ given in \eqref{norm. currents}. We introduce the normalized currents ${\bj}_{x,x+1}$, ${\bj}^{\prime}_{x,x+1}$ and ${\bJ}_{x,x+1}$ corresponding to the process $(\xi (t))_{t \ge 0}$, which are defined by
\begin{equation*}
\begin{split}
&{\bj}_{x,x+1}(\xi) \\
&= j_{x,x+1}(\xi) - \langle j_{x,x+1} \rangle_{\nu_{\beta,\lambda}} - \partial_{\rho}  \langle j_{x,x+1} \rangle_{\nu_{\beta,\lambda}} (\xi_x -\rho) - \partial_{\theta}  \langle j_{x,x+1} \rangle_{\nu_{\beta,\lambda}} (\log (\xi_x) -\theta), \\
&{\bj}^{\prime}_{x,x+1}(\xi) \\
&= j^{\prime}_{x,x+1}(\xi) - \langle j^{\prime}_{x,x+1} \rangle_{\nu_{\beta,\lambda}} - \partial_{\rho}  \langle j^{\prime}_{x,x+1} \rangle_{\nu_{\beta,\lambda}} (\xi_x -\rho) - \partial_{\theta}  \langle j^{\prime}_{x,x+1} \rangle_{\nu_{\beta,\lambda}} (\log (\xi_x) -\theta),\\
&{\bJ}_{x,x+1}(\xi) = {\bj}_{x,x+1}(\xi) -{\bj}^{\prime}_{x,x+1}(\xi).
\end{split}
\end{equation*}

Since $\langle j_{x,x+1} \rangle_{\nu_{\beta,\lambda}}=-b^2 \rho^2$ and $\langle j^{\prime}_{x,x+1} \rangle_{\nu_{\beta,\lambda}} = -2b^2 \rho$, we get
\begin{equation}\label{norm. currents 2}
\begin{split}
&{\bj}_{x,x+1}(\xi) =-b^2 (\xi_{x} -\rho) (\xi_{x+1} -\rho) - (\gamma+b^2 \rho) \nabla \xi_x\\
&{\bj}^{\prime}_{x,x+1}(\xi) = -\nabla ( b^2 \xi_x + \gamma \log (\xi_x) ).
\end{split}
\end{equation}

For any local compactly supported functions $f,g:(0,+\infty)^{\ZZ}\rightarrow{\RR}$ we define the semi-inner product $\ll f, g \gg:=\ll f,g\gg_{\beta,\lambda}$ of $f$ and $g$ by
\begin{equation*}
\begin{split}
&\ll f , g \gg \\
&= \sum_{x \in \ZZ} \left( \langle \tau_x f g \rangle_{\nu_{\beta,\lambda}} -\langle f\rangle_{\nu_{\beta,\lambda}} \langle g \rangle_{\nu_{\beta,\lambda}} \right) \\
&= \lim_{k \to  \infty} \sum_{|x| \le k} \left( \langle \tau_x f g \rangle_{\nu_{\beta,\lambda}} -\langle f\rangle_{\nu_{\beta,\lambda}} \langle g \rangle_{\nu_{\beta,\lambda}} \right) \\
&= \lim_{k \to \infty} \frac{1}{2k+1} \sum_{|x| \le k}  \left\{ \sum_{|y-x| \le k} \Big( \langle \tau_{x+y} f \,\tau_y g \rangle_{\nu_{\beta,\lambda}} -\langle f\rangle_{\nu_{\beta,\lambda}} \langle g \rangle_{\nu_{\beta,\lambda}} \Big) \right\} \\
&= \lim_{k \to \infty}\!\! \left\langle \left( \cfrac{1}{\sqrt{2k+1}}\sum_{|x| \le k} (\tau_x f -\langle f\rangle_{\nu_{\beta,\lambda}})\right) \!\!\!\left( \cfrac{1}{\sqrt{2k+1}}\sum_{|x| \le k} (\tau_x g -\langle g\rangle_{\nu_{\beta,\lambda}})\right) \right\rangle_{\nu_{\beta,\lambda}}
\end{split}
\end{equation*}
where the third equality follows from the invariance of $\nu_{\beta,\lambda}$ by the shift. Observe also that the first sum on $\ZZ$ is in fact a finite sum since $f$ and $g$ are assumed to be local functions. We denote by ${\mc H}_0$ the space generated by the local compactly supported functions and the semi-inner product $\ll\cdot, \cdot \gg$. Observe that any constant  or gradient functions are equal to $0$ in ${\mc H}_0$.

By \eqref{norm. currents 2}, the normalized current associated to the volume is a gradient and this shows that ${\mc F}_{i,j} (\gamma,z)=0$ if $(i,j) \ne (1,1)$. By the definition of $\hat J_{x,x+1}$ and by \eqref{eq:corr-jj}, we are only interested in the behavior, as $z \to 0$, of
\begin{equation*}
{\mf L} (z) = \ll {\bJ}_{0,1}, (z -{\mc L})^{-1} {\bJ}_{0,1} \gg =\int_{0}^{\infty} \, e^{-z t} \, \ll {\bJ}_{0,1} (t) \, , \, {\bJ}_{0,1} (0) \gg \, dt.
\end{equation*}
Since gradient functions are equal to $0$ in ${\mc H}_0$, this is equivalent to estimate
\begin{equation}
\label{eq:lapltransfW}
{\mf L} (z) = b^4 \ll W_{0,1}, (z -{\mc L})^{-1} W_{0,1} \gg
\end{equation}
where $W_{x,y}$ is the local function $W_{x,y}=(\xi_{x} -\rho) (\xi_{y} -\rho)$.

In this section we prove that there exists a constant $C>0$ such that
\begin{equation}
\label{eq:z14}
\ll W_{0,1}, (z -{\mc L})^{-1} W_{0,1} \gg\,  \ge \, C z^{-1/4}.
\end{equation}

But before proving (\ref{eq:z14}) let us show (\ref{eq:F11c}) which is a direct consequence of the following lemma.

\begin{lemma}
For any $\gamma>0$, there exists a constant $C:=C(\gamma)$ such that
\begin{equation*}
\begin{split}
\ll W_{0,1}\, , \, (z/\gamma -b^2 {\mc A}-{\mc S})^{-1} W_{0,1} \gg \, \le \,  C \ll W_{0,1}, (z -b^2 {\mc A}-\gamma {\mc S})^{-1} W_{0,1} \gg
\end{split}
\end{equation*}
and
\begin{equation*}
 \ll W_{0,1}, (z -b^2 {\mc A}-\gamma {\mc S})^{-1} W_{0,1} \gg\,  \le \, C \ll W_{0,1}\, ,\,  (z/\gamma -b^2 {\mc A}- {\mc S})^{-1} W_{0,1} \gg.
\end{equation*}
\end{lemma}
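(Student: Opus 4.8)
The plan is to write each of the two resolvent norms as a variational problem via the standard formula for resolvents of operators with an antisymmetric and a symmetric part, and then to compare the resulting functionals term by term. Recall that ${\mc L}=b^2 {\mc A}+\gamma {\mc S}$ with ${\mc A}$ antisymmetric and ${\mc S}$ symmetric and nonpositive in ${\mc H}_0$, and write $\Phi(z,\gamma)=\ll W_{0,1},(z-b^2{\mc A}-\gamma {\mc S})^{-1}W_{0,1}\gg$, so that the two quantities to be compared are exactly $\Phi(z,\gamma)$ and $\Phi(z/\gamma,1)$. First I would invoke the variational formula (as used in the resolvent method of \cite{S.,B.,12}),
\begin{equation*}
\Phi(z,\gamma)=\sup_{f}\Big\{2\ll W_{0,1},f\gg-\ll f,(z-\gamma {\mc S})f\gg-b^4\ll {\mc A}f,(z-\gamma {\mc S})^{-1}{\mc A}f\gg\Big\},
\end{equation*}
the supremum running over local functions $f$, together with the analogous identity for $\Phi(z/\gamma,1)$ obtained by replacing $\gamma$ by $1$ and $z$ by $z/\gamma$.

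The key algebraic input is the trivial identity $(z-\gamma {\mc S})^{-1}=\gamma^{-1}(z/\gamma-{\mc S})^{-1}$, valid because $\gamma {\mc S}\le 0$ makes $z-\gamma {\mc S}\ge z>0$ invertible. Substituting it into the antisymmetric contribution of $\Phi(z,\gamma)$ makes the very same resolvent $(z/\gamma-{\mc S})^{-1}$ appear as in $\Phi(z/\gamma,1)$. After this rewriting both variational problems take the form $\sup_f\{2\ll W_{0,1},f\gg-Q(f)\}$ with the nonnegative quadratic penalties
\begin{equation*}
Q_\gamma(f)=z\ll f,f\gg+\gamma\,{\mc D}(f)+\gamma^{-1}b^4\ll {\mc A}f,(z/\gamma-{\mc S})^{-1}{\mc A}f\gg,
\end{equation*}
\begin{equation*}
Q_1(f)=\tfrac{z}{\gamma}\ll f,f\gg+{\mc D}(f)+b^4\ll {\mc A}f,(z/\gamma-{\mc S})^{-1}{\mc A}f\gg,
\end{equation*}
where ${\mc D}(f)=\ll f,-{\mc S}f\gg\ge 0$. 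Each of the three summands is nonnegative, and the coefficients of $Q_\gamma$ are respectively $\gamma$, $\gamma$ and $\gamma^{-1}$ times those of $Q_1$, so that $\min(\gamma,\gamma^{-1})\,Q_1(f)\le Q_\gamma(f)\le \max(\gamma,\gamma^{-1})\,Q_1(f)$ for every $f$.

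The last step is a homogeneity argument. Since $Q_1$ is homogeneous of degree two, the substitution $f\mapsto f/c$ yields $\sup_f\{2\ll W_{0,1},f\gg-c\,Q_1(f)\}=c^{-1}\sup_f\{2\ll W_{0,1},f\gg-Q_1(f)\}=c^{-1}\Phi(z/\gamma,1)$ for any $c>0$. Because the map $Q\mapsto \sup_f\{2\ll W_{0,1},f\gg-Q(f)\}$ is nonincreasing in $Q$, the two-sided bound on the penalties gives $\min(\gamma,\gamma^{-1})\,\Phi(z/\gamma,1)\le \Phi(z,\gamma)\le \max(\gamma,\gamma^{-1})\,\Phi(z/\gamma,1)$; using $\min(\gamma,\gamma^{-1})\max(\gamma,\gamma^{-1})=1$ this is precisely the pair of inequalities of the lemma with $C=\max(\gamma,\gamma^{-1})$.

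The one point requiring genuine care is the correct form of the variational formula in the presence of the antisymmetric part $b^2{\mc A}$ and of the degenerate symmetric part ${\mc S}$: everything reduces to controlling the $\mathbb{H}_{-1}$-type term $\ll {\mc A}f,(z-\gamma {\mc S})^{-1}{\mc A}f\gg$ and checking that it is finite on the local functions over which the supremum is taken. This is exactly the framework in which the resolvent method operates, so once the formula is justified the remainder of the argument is purely algebraic, and in particular no information about the slow decay of the current is needed here.
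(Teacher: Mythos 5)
Your proposal is correct and takes essentially the same route as the paper: the same variational formula of Lemma 2.1 of \cite{B.} for $\ll W_{0,1},(z-\mc L)^{-1}W_{0,1}\gg$, the same rescaling $(z-\gamma\mc S)^{-1}=\gamma^{-1}(z/\gamma-\mc S)^{-1}$, and the same comparison of the quadratic penalty's coefficients followed by a degree-two rescaling of the test function (your $f\mapsto f/c$ is the paper's change of $f$ into $\gamma^{\pm 1/2}f$). The only cosmetic difference is that you handle both regimes at once via $\min(\gamma,\gamma^{-1})$ and $\max(\gamma,\gamma^{-1})$, where the paper assumes $\gamma>1$ and remarks that the case $\gamma<1$ is similar.
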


\begin{proof}
Assume $\gamma>1$ the case $\gamma<1$ being similar. By Lemma 2.1 of \cite{B.} we have the variational formula for $\ll W_{0,1}, (z -\mc L)^{-1} W_{0,1} \gg$, where $\mathcal{L}=b^2\mc A+\gamma \mathcal S$, given by
\begin{equation*}
\sup_{f} \left\{ 2 \ll W_{0,1}, f \gg  - \ll f, (z -\gamma {\mc S}) f \gg - b^4 \ll {\mc A} f, (z -\gamma {\mc S})^{-1} {\mc A} f \gg \right\},
\end{equation*}
where the supremum is carried over functions $f$ belonging to the domain of the generator $\mathcal{L}$ or equivalently to a dense subspace included in this domain, say the space of smooth local compactly supported functions. We have that
\begin{equation*}
\begin{split}
&\sup_{f} \Big\{ 2 \ll W_{0,1}, f \gg  - \ll f, (z -\gamma {\mc S}) f \gg - b^4 \ll {\mc A} f, (z -\gamma {\mc S})^{-1} {\mc A} f \gg \Big\}\\
=&\sup_{f} \Big\{ 2 \ll W_{0,1}, f \gg  - \gamma \ll f, (z/\gamma - {\mc S}) f \gg - b^4 \gamma^{-1} \ll {\mc A} f, (z/\gamma - {\mc S})^{-1} {\mc A} f \gg \Big\}\\
\ge& \sup_{f} \Big\{ 2 \ll W_{0,1}, f \gg  - \gamma \ll f, (z/\gamma - {\mc S}) f \gg - b^4 \gamma \ll {\mc A} f, (z/\gamma - {\mc S})^{-1} {\mc A} f \gg \Big\}\\
=& \sup_{f}\Big\{ 2 \gamma^{-1/2}  \ll W_{0,1}, f \gg  -  \ll f, (z/\gamma - {\mc S}) f \gg - b^4 \ll {\mc A} f, (z/\gamma - {\mc S})^{-1} {\mc A} f \gg \Big\}
\end{split}
\end{equation*}
where the  inequality comes from $\gamma>1$ and last equality is obtained by the change of $f$ into $\gamma^{-1/2} f$. The last term is equal to
$$ \gamma^{-1} \ll W_{0,1}\, ,\,  (z/\gamma  -b^2 {\mc A}- {\mc S})^{-1} W_{0,1} \gg$$
and this proves the first inequality of the lemma.

 For the second one we proceed similarly:
\begin{equation*}
\begin{split}
&\sup_{f} \Big\{ 2 \ll W_{0,1}, f \gg  - \ll f, (z -\gamma {\mc S}) f \gg - b^4 \ll {\mc A} f, (z -\gamma {\mc S})^{-1} {\mc A} f \gg \Big\}\\
=&\sup_{f} \Big\{ 2 \ll W_{0,1}, f \gg  - \gamma \ll f, (z/\gamma - {\mc S}) f \gg - b^4 \gamma^{-1} \ll {\mc A} f, (z/\gamma - {\mc S})^{-1} {\mc A} f \gg \Big\}\\
\le& \sup_{f} \Big\{ 2 \ll W_{0,1}, f \gg  - \gamma^{-1} \ll f, (z/\gamma - {\mc S}) f \gg - b^4 \gamma^{-1} \ll {\mc A} f, (z/\gamma - {\mc S})^{-1} {\mc A} f \gg \Big\}\\
=&\sup_{f} \Big\{ 2 \gamma^{1/2}  \ll W_{0,1}, f \gg  -  \ll f, (z/\gamma - {\mc S}) f \gg - b^4 \ll {\mc A} f, (z/\gamma - {\mc S})^{-1} {\mc A} f \gg \Big\}\\
=&\gamma \ll W_{0,1}\, ,\,  (z/\gamma  -b^2 {\mc A}- {\mc S})^{-1} W_{0,1}\gg.
\end{split}
\end{equation*}
\end{proof}

Recall the orthogonal decomposition described in Section \ref{sec:dua}. Let $f=\sum_{\sigma} F(\sigma) H_{\sigma}$ and $g= \sum_{\sigma} G(\sigma) H_{\sigma}$ be two centered local functions. The configuration $\sigma$ shifted by $z \in \ZZ$ is denoted by $\tau_z \sigma$, that is $\tau_z\sigma(x)=\sigma(x-z)$. We identify $F_n, G_n$, the restrictions of $F,G$ to $\Sigma_n$, with symmetric functions on $\ZZ^n$.  By \eqref{eq:prod Hsigma} we have that
\begin{equation*}
\ll f , g \gg = \sum_{z\in{\ZZ}}\sum_{\sigma \in \Sigma} F (\tau_z \sigma) G (\sigma) {\mc W} (\sigma),
\end{equation*}
where ${\mc W}$ was defined in \eqref{def W}.

With some abuse of notations, we denote by $\ll F, G \gg$ the scalar product defined by
\begin{equation*}
\ll F, G \gg= \sum_{z\in{\ZZ}}\sum_{\sigma \in \Sigma} F (\tau_z \sigma) G (\sigma) {\mc W} (\sigma).
\end{equation*}

We also introduce the inner product $\fsp{\cdot, \cdot}$ defined by
\begin{equation*}
\begin{split}
&\fsp{F, G} \;= \;  \sum_{ y \in \ZZ}\sum_{\sigma \in \Sigma} F(\tau_y \sigma) G(\sigma).\\
\end{split}
\end{equation*}

Since the function ${\mc W}$ is invariant by the shift, we have a very simple relation between these two inner products:
\begin{equation}
\label{eq:fspsp}
\ll F, G \gg \; =\; \fsp{{\mc W}^{1/2} F, {\mc W}^{1/2} G}.
\end{equation}

On the set $\Sigma_n$ we introduce the equivalence relation $\star$ defined by $\sigma \star \sigma'$ if and only if there exists $u \in \ZZ$ such that $\tau_u \sigma =\sigma'$. Let $\Sigma_n^\star = \Sigma_n/ \star$ be the set of classes for this relation and $\Sigma^\star = \cup_{n \ge 1} \Sigma_n^{\star}$.  We can rewrite the scalar product $\fsp{\cdot, \cdot}$ as
\begin{equation*}
\fsp{F,G} = \sum_{{\bar \sigma} \in \Sigma^\star} {\bar F} ({\bar \sigma}) {\bar G} ({\bar \sigma}).
\end{equation*}
Here $\bar F$ is defined by $\bar F ({\bar \sigma})=\left(\sum_{y \in \ZZ} \tau_y F\right) (\sigma)$ where $\sigma$ is any element of ${\bar \sigma}$. The function ${\mc W}$ being invariant by the shift, we define ${\mc W} (\bar \sigma)$ by $\mc W (\sigma)$, $\sigma \in {\bar \sigma}$, ${\bar \sigma} \in {\Sigma}^{\star}$. Then, we have
\begin{equation*}
\ll F , G \gg = \sum_{{\bar \sigma} \in \Sigma^\star} {\mc W} ({\bar \sigma}) {\bar F} ({\bar \sigma}) {\bar G} ({\bar \sigma}).
\end{equation*}

\begin{lemma}
\label{lem:compa}
There exists a constant $C:=C(n,\lambda)$ such that for any local function $F:\Sigma_n \to \RR$ of degree $n$ it holds that
\begin{enumerate}

\item \begin{equation*}
C^{-1} \fsp{F, F} \; \le\;  \ll F, F \gg \;\le \; C \fsp{F, F}.
\end{equation*}

\item $$C^{-1} \fsp{F, -{\mf S} F} \; \le\;  \ll F, - {\mf S} F \gg \;\le \; C \fsp{F, -{\mf S} F}. $$
\end{enumerate}
Moreover, for  any positive real $z>0$
\begin{equation*}
\begin{split}
 \ll F, (z- \gamma {\mf S})^{-1} F \gg =\fsp{{\mc W}^{1/2} F\, ,\,  (z- \gamma {\mf S})^{-1} \, {\mc W}^{1/2} F}.
\end{split}
\end{equation*}


\end{lemma}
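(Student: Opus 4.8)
The plan is to derive all three statements from two elementary properties of the weight ${\mc W}$ and of the operator ${\mf S}$ on degree $n$ functions, and then to invoke the already established identity \eqref{eq:fspsp}. The first property is a two-sided bound: for $\sigma\in\Sigma_n$ one has ${\mc W}(\sigma)=\prod_{x:\,\sigma_x\ge1}\frac{\Gamma(\lambda+\sigma_x+1)}{\Gamma(\lambda+1)\,\sigma_x!}$, which is a product of at most $n$ strictly positive factors whose value depends only on the multiset of the nonzero $\sigma_x$, i.e. on the partition of $n$ that they form. Since there are finitely many partitions of $n$, I would conclude that there are constants $0<c_1(n,\lambda)\le c_2(n,\lambda)$ with $c_1\le{\mc W}(\sigma)\le c_2$ for every $\sigma\in\Sigma_n$; by shift invariance the same bounds hold for ${\mc W}(\bar\sigma)$ on $\Sigma_n^\star$. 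The second property is a commutation relation: since the exchange $\sigma\mapsto\sigma^{x,x+1}$ merely permutes occupation numbers, ${\mc W}(\sigma^{x,x+1})={\mc W}(\sigma)$, and hence the multiplication operator $M$ by ${\mc W}^{1/2}$ satisfies ${\mc W}^{1/2}({\mf S}F)={\mf S}({\mc W}^{1/2}F)$, that is $M{\mf S}={\mf S}M$. Moreover, because ${\mc W}^{1/2}$ is shift invariant, $M$ is self-adjoint for both $\fsp{\cdot,\cdot}$ and $\ll\cdot,\cdot\gg$.

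For the first inequality I would use the expressions $\ll F,F\gg=\sum_{\bar\sigma\in\Sigma_n^\star}{\mc W}(\bar\sigma)\,\bar F(\bar\sigma)^2$ and $\fsp{F,F}=\sum_{\bar\sigma\in\Sigma_n^\star}\bar F(\bar\sigma)^2$ and simply insert the bounds $c_1\le{\mc W}(\bar\sigma)\le c_2$, obtaining $c_1\fsp{F,F}\le\ll F,F\gg\le c_2\fsp{F,F}$. The last identity of the lemma is a pure equality and needs only the commutation relation: applying \eqref{eq:fspsp} to $F$ and $G=(z-\gamma{\mf S})^{-1}F$ gives $\ll F,(z-\gamma{\mf S})^{-1}F\gg=\fsp{{\mc W}^{1/2}F,\,{\mc W}^{1/2}(z-\gamma{\mf S})^{-1}F}$, and since $M$ commutes with ${\mf S}$ it commutes with its resolvent, so ${\mc W}^{1/2}(z-\gamma{\mf S})^{-1}F=(z-\gamma{\mf S})^{-1}({\mc W}^{1/2}F)$, which is exactly the claimed equality.

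For the second inequality I would again combine \eqref{eq:fspsp} with the commutation relation to get $\ll F,-{\mf S}F\gg=\fsp{{\mc W}^{1/2}F,-{\mf S}({\mc W}^{1/2}F)}=Q(MF)$, where $Q(G):=\fsp{G,-{\mf S}G}$ is the nonnegative Dirichlet form attached to $\fsp{\cdot,\cdot}$ (recall that, after reduction to $\Sigma_n^\star$, $-{\mf S}$ is the self-adjoint nonnegative generator of a symmetric exchange walk), while $\fsp{F,-{\mf S}F}=Q(F)$. Using that $M$ is self-adjoint for $\fsp{\cdot,\cdot}$ and commutes with ${\mf S}$, I would rewrite $Q(MF)=\fsp{MF,-{\mf S}MF}=\fsp{M^2F,-{\mf S}F}$. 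Finally, since $-{\mf S}\ge0$ commutes with $M^2={\mc W}$ and $c_1\le M^2\le c_2$ as operators, the factorization $M^2(-{\mf S})=(-{\mf S})^{1/2}M^2(-{\mf S})^{1/2}$ gives $c_1\,Q(F)\le Q(MF)\le c_2\,Q(F)$, which is the second inequality with $C=\max(c_2,c_1^{-1})$.

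The only genuinely delicate step is the second inequality: one has to let a bounded weight factor out of a Dirichlet form, which would be false for an arbitrary bounded weight. Here it works precisely because $M$ commutes with ${\mf S}$ (a consequence of the exchange invariance of ${\mc W}$) and is self-adjoint for $\fsp{\cdot,\cdot}$, so the uniform bound $c_1\le{\mc W}\le c_2$ can be transferred to the quadratic form through the commuting spectral calculus of $-{\mf S}$ and $M^2$. Everything else is bookkeeping with \eqref{eq:fspsp} and the finiteness of the set of partitions of $n$.
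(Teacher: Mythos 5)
Your proof is correct, and it rests on exactly the same three ingredients as the paper's: the uniform two-sided bound $c_1(n,\lambda)\le {\mc W}(\sigma)\le c_2(n,\lambda)$ on $\Sigma_n$ (the paper asserts it from the definition \eqref{def W}; your reduction to the finitely many partitions of $n$ is the right justification), the identity \eqref{eq:fspsp}, and the commutation ${\mf S}({\mc W}^{1/2}F)={\mc W}^{1/2}{\mf S}F$ coming from the exchange- and shift-invariance of ${\mc W}$. You diverge from the paper only in how two sub-steps are finished. For the resolvent identity, the paper goes through the variational formula $\ll F,(z-\gamma{\mf S})^{-1}F\gg\,=\sup_G\{2\ll F,G\gg-\ll G,(z-\gamma{\mf S})G\gg\}$ and implicitly substitutes $G\mapsto {\mc W}^{-1/2}G$ (legitimate because ${\mc W}^{1/2}$ is bounded with bounded inverse on degree-$n$ functions), whereas you commute the multiplication operator $M={\mc W}^{1/2}$ directly through the resolvent; since $-{\mf S}$ restricted to degree-$n$ functions is a bounded, nonnegative, self-adjoint operator (a configuration of $n$ particles has at most $2n$ active bonds), this is sound and in fact more direct. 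For item \textit{(2)}, the paper's one-line proof leaves implicit the final comparison of $\fsp{MF,-{\mf S}(MF)}$ with $\fsp{F,-{\mf S}F}$; the intended mechanism is termwise: exchange-invariance gives $(MF)(\sigma^{x,x+1})-(MF)(\sigma)={\mc W}^{1/2}(\sigma)\bigl(F(\sigma^{x,x+1})-F(\sigma)\bigr)$, so each squared increment of the free Dirichlet form simply acquires a factor ${\mc W}(\sigma)\in[c_1,c_2]$. You instead close the gap by spectral calculus, writing $\fsp{MF,-{\mf S}(MF)}=\fsp{(-{\mf S})^{1/2}F,\,M^2(-{\mf S})^{1/2}F}$ and using $c_1\le M^2\le c_2$ together with $[M,{\mf S}]=0$. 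Both mechanisms are valid; yours has the merit of making explicit the delicate point you correctly flag (one cannot pull an arbitrary bounded weight out of a Dirichlet form -- commutation with ${\mf S}$ is what saves it), while the termwise route is the more elementary one. No gaps.
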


\begin{proof}
Recall the definition of ${\mc W}$ from \eqref{def W}.
 Thus,  ${\mc W}$ is bounded from above (resp. from bellow) by a constant $C(n,\lambda)$ (resp. $C^{-1} (n,\lambda)$) independent of $\sigma \in \Sigma_n$. This is enough to conclude $\textit{(1)}$. In order to prove \textit{(2)}, it is enough to use \eqref{eq:fspsp} and the fact that for any local function $F: \Sigma \to \RR$ we have that ${\mf S} ({\mc W}^{1/2} F)= {\mc W}^{1/2} {\mf S} F$.
Finally, for a local function $F$ of degree $n$, we have by (\ref{eq:fspsp}) and the fact that
\begin{equation*}
\ll F, (z- \gamma {\mf S})^{-1} F \gg = \sup_{G {\text{ of degree $n$}} } \left\{  2 \ll F, G \gg - \ll G, (z- \gamma {\mf S}) G \gg \right\},
\end{equation*}
the following equality
\begin{equation*}
\begin{split}
 \ll F, (z- \gamma {\mf S})^{-1} F \gg =\fsp{{\mc W}^{1/2} F, (z- \gamma {\mf S})^{-1} {\mc W}^{1/2} F},
\end{split}
\end{equation*}
which proves the last assertion.

\end{proof}

Our goal is to get a lower bound for $\ll W_{0,1}, (z -{\mc L})^{-1} W_{0,1} \gg$ which by Lemma 2.1 of \cite{B.} can be rewritten in the variational form
\begin{equation}
\label{eq:varformula11}
\sup_{f} \Big\{ 2 \ll W_{0,1}, f \gg  - \ll f, (z -\gamma {\mc S}) f \gg - b^4 \ll {\mc A} f, (z -\gamma {\mc S})^{-1} {\mc A} f \gg \Big\}.
\end{equation}

Any element $\bar \sigma \in \Sigma_n^{\star}$ can be identified with an element of ${\mathbb N}^{n-1}$ through the application which associates to $(\alpha_1, \ldots, \alpha_{n-1}) \in {\NN}^{n-1}$ the class of the configuration $\sigma =\delta_{0} + \delta_{\alpha_1} + \ldots +\delta_{\alpha_1 + \ldots+ \alpha_{n-1}}$.

 Observe also that ${\mf S}$ is a self-adjoint operator with respect to $\ll \cdot, \cdot \gg$ and with respect to $\fsp{\cdot,\cdot}$. We restrict the previous supremum over degree $2$ functions $f=\sum_{(x,y) \in \ZZ^2} F([x,y]) H_{[x,y]}$.
In order to keep notation simple, whenever we identify a configuration $\sigma\in\Sigma_n$ with $[\textbf{x}]\in \mathbb{Z}^n$ we will simply write $F(x)$, instead of $F([\textbf{x}])$.

Up to some irrelevant multiplicative constant, a lower bound is given by
\begin{equation*}
\sup_{F {\text{of degree $2$}} } \Big\{ 2 F(0,1) - \| F \|_{1,z}^2 -b^4 \| {\mf A}_- F \|_{-1,z}^2 -b^4 \| {\mf A}_+ F \|_{-1,z}^2 -b^4\| {\mf A}_0 F \|_{-1,z}^2\Big\}
\end{equation*}
where $\| F \|_{\pm 1, z}^2 = \ll F, (z- \gamma {\mf S})^{\pm 1} F \gg$. We also introduce the corresponding $H_{{\pm 1,z}}$-norms associated to $\fsp{\cdot,\cdot}$:
$\| F \|_{\pm 1, z,{\rm{free}}}^2 = \fsp{ F, (z- \gamma {\mf S})^{\pm 1} F }$, for $F:\Sigma\rightarrow{\bb R}$.

By Lemma \ref{lem:compa}, there exists a constant $C$ such that this lower bound is bounded from bellow by
\begin{equation*}
\begin{split}
&\sup_{F {\text{of degree $2$}} } \left\{  2 F(0,1) - C \| F \|_{+1,z,{\rm{free}}}^2\right. \\
& \left.  -b^4 \| {\mc W}^{1/2} {\mf A}_- F \|_{-1,z, {\rm free}}^2 -b^4 \|{\mc W}^{1/2}  {\mf A}_+ F \|_{-1,z, {\rm free}}^2 -b^4\| {\mc W}^{1/2} {\mf A}_0 F \|_{-1,z, {\rm free}}^2\right\}.
\end{split}
\end{equation*}

Let us first show that if $F$ is of degree $2$ then the contributions given by $ \| {\mc W}^{1/2} {\mf A}_- F \|_{-1,z, {\rm free} }^2$ and $\| {\mc W}^{1/2}  {\mf A}_0 F \|_{-1,z, {\rm free} }^2$ are equal to zero.

The function ${\mc W}$ is constant and equal to $(\lambda+1)$
 on $\Sigma_1$ so that ${\mc W}^{1/2} {\mf A}_- F =\sqrt{\lambda+1} {\mf A}_- F $. It is easy to check that the degree one function ${\mf A}_- F$ satisfies
\begin{equation*}
({\mf A}_- F)(u)=(\lambda-1)\Big( F(u-1,u) - F(u,u+1)\Big).
\end{equation*}

For any degree $1$ function $G$, we have
\begin{equation*}
\fsp{ {\mf A}_- F , G }= \sum_{u,y \in \ZZ} G(u+y)(\lambda-1)\Big( F(u-1,u) - F(u,u+1)\Big) =0
\end{equation*}
by a telescopic sum argument. This shows that ${\mf A}_- F$ is equal to zero in the Hilbert space generated by $\fsp{\cdot, \cdot}$.


Recall that if $F$ is a degree $2$ function, i.e. a symmetric function on $\ZZ^2$, then ${F}$ is identified with a function $\bar{F}$ defined on $\NN$ by
\begin{equation*}
{\bar F} (\alpha) = \sum_{u \in \ZZ} F(u,u +\alpha)
\end{equation*}
and as a consequence, for $F$ and $G$ degree $2$ functions it holds that
\begin{equation}\label{eq: eqfree}
\fsp{F,G}\; =\; \sum_{\alpha \in \NN} {\bar F} (\alpha) {\bar G} (\alpha).
\end{equation}

Observe that  $({\mf A}_0 F)(u,v)$ is equal to
\begin{equation*}
\begin{split}
\begin{cases}
2(1+\lambda) \Big( F(u-1,u) -F(u,u+1)\Big),\quad {\text{if}} \; u=v,\\
(1+\lambda)  \Big( F(u-1,u+1) -F(u,u+2)\Big) +(2+\lambda)\Big(F(u,u) -F(u+1,u+1) \Big),\\
\quad \quad {\text{if}} \; (u,v)=(u,u+1),\\
(1+\lambda)\Big( F(u-1,v) - F(u+1,v) +F(u,v-1)-F(u,v+1) \Big), \quad {\text {if}} \; |u-v| \ge 2
\end{cases}
\end{split}
\end{equation*}
and
\begin{equation}\label{expression for W}
{\mc W} (u,u)= \cfrac{(\lambda+1)(\lambda+2)}{2}, \quad {\mc W} (u,v)= (\lambda+1)^2\quad  \text{ for } u \ne v.
\end{equation}

It is then easy to show that
\begin{equation*}
\overline{{\mc W}^{1/2} ({\mf A}_0 F)} (\alpha)=0
\end{equation*}
for any $\alpha \in \NN$. Putting together the previous result and \eqref{eq: eqfree} it follows that:

\begin{equation*}
\begin{split}
\|{\mc W}^{1/2}  {\mf A}_0 F\|_{-1,z, {\rm{free}}}^2&= \fsp{  {\mc W}^{1/2} {\mf A}_0 F\; ,\;  (z - \gamma {\mf S})^{-1} ({\mc W}^{1/2} {\mf A}_0 F) }\\
&= \sum_{\alpha \in \NN} \overline{{\mc W}^{1/2} ({\mf A}_0 F)} (\alpha) \; \overline{(\lambda -\gamma {\mf S})^{-1} [{\mc W^{1/2} ({\mf A}_0 F)]}} (\alpha) =0.
\end{split}
\end{equation*}

\begin{lemma}
\label{lem:DF}
There exists a positive constant $C$ such that for every symmetric function $F$ of degree $2$, if ${\bar F} (\alpha) = \sum_{z\in \bb Z} F(z,z+\alpha)$, then
\begin{equation*}
C^{-1}  \sum_{\substack{x,y \ne 0,\\ |x-y|=1}} \Big( {\bar F} (y) -{\bar F} (x) \Big)^2 \; \le \; \fsp{F, -{\mf S} F }\; \le \; C \sum_{\substack{x,y \ne 0,\\ |x-y|=1}} \Big({\bar F} (y) -{\bar F} (x) \Big)^2.
\end{equation*}
\end{lemma}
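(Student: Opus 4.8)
The plan is to reduce the two-dimensional exchange dynamics encoded by ${\mf S}$ to a one-dimensional nearest-neighbour walk in the gap variable $\alpha$, and then to identify the resulting Dirichlet form with the right-hand side. Recall from \eqref{eq: eqfree} that for degree two functions one has $\fsp{F,G} = \sum_{\alpha \in \NN} {\bar F}(\alpha) {\bar G}(\alpha)$, and that ${\mf S}$ is self-adjoint for $\fsp{\cdot,\cdot}$. Hence it suffices to compute the reduced function $\overline{{\mf S}F}(\alpha) = \sum_{u \in \ZZ} ({\mf S}F)(u,u+\alpha)$, to check that it depends on $F$ only through ${\bar F}$, and then to expand $\fsp{F,-{\mf S}F} = -\sum_{\alpha} {\bar F}(\alpha)\, \overline{{\mf S}F}(\alpha)$ by a summation by parts.

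First I would write out the action of ${\mf S}$ on a two-particle configuration $\delta_u + \delta_{u+\alpha}$, exactly along the three cases encoded in the definition of ${\bb D}_2$. For $\alpha \ge 2$ the two particles move independently, so $({\mf S}F)(u,u+\alpha) = F(u-1,u+\alpha) + F(u+1,u+\alpha) + F(u,u+\alpha-1) + F(u,u+\alpha+1) - 4F(u,u+\alpha)$; for $\alpha = 1$ the exchange across the doubly occupied bond $\{u,u+1\}$ is trivial and the only moves widen the gap, giving $({\mf S}F)(u,u+1) = F(u-1,u+1) + F(u,u+2) - 2F(u,u+1)$; for $\alpha = 0$ the doubly occupied site hops rigidly, giving $({\mf S}F)(u,u) = F(u-1,u-1) + F(u+1,u+1) - 2F(u,u)$. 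Summing over $u$ and using the shift-invariance of the map $F \mapsto {\bar F}$, these collapse to $\overline{{\mf S}F}(\alpha) = 2\big({\bar F}(\alpha+1) + {\bar F}(\alpha-1) - 2{\bar F}(\alpha)\big)$ for $\alpha \ge 2$, to $\overline{{\mf S}F}(1) = 2\big({\bar F}(2) - {\bar F}(1)\big)$, and to $\overline{{\mf S}F}(0) = 0$. In words: the gap performs a nearest-neighbour walk on $\{1,2,\ldots\}$ reflected at $1$, with the diagonal value $\alpha = 0$ completely decoupled, which is precisely the suppression of the $\Delta_0 \leftrightarrow \Delta_{\pm}$ transitions built into ${\bb D}_2$.

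Finally I would insert these expressions into $\fsp{F,-{\mf S}F} = -\sum_{\alpha \ge 0} {\bar F}(\alpha)\, \overline{{\mf S}F}(\alpha)$: the $\alpha = 0$ term drops out, and a summation by parts on $\{1,2,\ldots\}$ yields the clean identity $\fsp{F,-{\mf S}F} = 2\sum_{\alpha \ge 1} \big({\bar F}(\alpha+1) - {\bar F}(\alpha)\big)^2$. This is, up to a fixed universal factor, the quantity $\sum_{x,y \ne 0,\, |x-y|=1} ({\bar F}(y) - {\bar F}(x))^2$, so both inequalities follow with an explicit constant. The one point demanding care — and the step that dictates both the constant and, more importantly, the exclusion of the index $0$ on the right-hand side — is the boundary bookkeeping at $\alpha = 0$ and $\alpha = 1$: one must verify that no jump connects gap $1$ to gap $0$, which is exactly why the edge $\{0,1\}$, i.e. the pairs with $x = 0$ or $y = 0$, is absent from the comparison form.
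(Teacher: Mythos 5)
Your proposal is correct and follows essentially the same route as the paper: the paper's proof consists precisely of computing $\overline{{\mf S} F}(0)=0$, $\overline{{\mf S} F}(1)=2({\bar F}(2)-{\bar F}(1))$ and $\overline{{\mf S} F}(\alpha)=2({\bar F}(\alpha+1)-{\bar F}(\alpha))+2({\bar F}(\alpha-1)-{\bar F}(\alpha))$ for $\alpha\ge 2$, and then concluding via the identity $\fsp{F,G}=\sum_{\alpha}{\bar F}(\alpha){\bar G}(\alpha)$, exactly as you do. Your case analysis of ${\mf S}$ on two-particle configurations, the decoupling of $\alpha=0$ (which justifies excluding the index $0$), and the explicit summation by parts giving $\fsp{F,-{\mf S}F}=2\sum_{\alpha\ge 1}({\bar F}(\alpha+1)-{\bar F}(\alpha))^2$ are all correct and merely spell out the step the paper labels ``follows easily.''
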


\begin{proof}
This follows easily from the following equalities together with \eqref{eq: eqfree}:
\begin{equation*}
\begin{split}
\overline{{\mf S} F} (0) &=\sum_{y\in{\ZZ}} ({\mf S} F) (y,y)\\
&= \sum_{y\in{\ZZ}} \Big( F (y+1,y+1) -F(y,y) \Big) +\Big( F (y-1,y-1) -F(y,y) \Big)=0,
\end{split}
\end{equation*}
\begin{equation*}
\begin{split}
\overline{{\mf S} F} (1) &= \sum_{y\in{\ZZ}} ({\mf S} F) (y,y+1)\\
&= \sum_{y\in{\ZZ}} \Big( F(y-1,y+1) -F(y,y+1)\Big) +  \sum_{y\in{\ZZ}} \Big( F(y,y+2) -F(y,y+1)\Big)\\
&=2 \Big({\bar F} (2) - {\bar F} (1)\Big),
\end{split}
\end{equation*}
\begin{equation*}
\begin{split}
\overline{{\mf S} F} (\alpha)&= \sum_{y} ({\mf S} F) (y,y+\alpha)\\
&= \sum_{y\in{\ZZ}} \Big( F(y-1,y+\alpha) -F(y,y+\alpha)\Big) +  \sum_{y\in{\ZZ}}\Big( F(y+1,y+\alpha) -F(y,y+\alpha)\Big)\\
&+ \sum_{y\in{\ZZ}} \Big( F(y,y+\alpha+1) -F(y,y+\alpha)\Big) +  \sum_{y\in{\ZZ}} \Big( F(y,y+\alpha-1) -F(y,y+\alpha)\Big)\\
&= 2 \Big( {\bar F} (\alpha + 1) -{\bar F} (\alpha)\Big)+2 \Big( {\bar F} (\alpha - 1) -{\bar F} (\alpha)\Big), \quad \alpha \ge 2.
\end{split}
\end{equation*}
\end{proof}


To any degree $3$ function $G$, i.e. a symmetric function $G$ on $\ZZ^3$, the function ${\bar G}$ is identified with a function on $\NN^2$:
\begin{equation*}
{\bar G} (u,v) =\sum_{y \in \ZZ} G(y,u+y,u+v+y).
\end{equation*}

Since G is symmetric on $\ZZ^3$, then $\bar G$ is symmetric on $\ZZ^2$. As above, for $F$ and $G$ degree $3$ functions it holds that
\begin{equation}\label{eq: eqfree2}
\fsp{F,G}\; =\; \sum_{(\alpha,\beta) \in \NN^2} {\bar F} (\alpha,\beta) {\bar G} (\alpha,\beta).
\end{equation}

Let ${\bb D}_3$, acting on the local functions on $\NN^2$, be defined by
\begin{equation}
\begin{split}
{\bb D}_3 (\bar G) &= \sum_{u \ge 1} \Big( {\bar G} (u+1,0) -{\bar G} (u,0) \Big)^2 + \sum_{v \ge 1} \Big( {\bar G} (0,v+1) -{\bar G} (0,v) \Big)^2 \\&
+ \Big({\bar G} (1,0) -{\bar G} (0,1)\Big)^2+ \sum_{u,v \ge 1} \Big( {\bar G} (u+1,v) -{\bar G} (u,v)\Big)^2 \\
&+  \Big( {\bar G} (u,v+1) -{\bar G} (u,v)\Big)^2.
\end{split}
\end{equation}
This is the Dirichlet form of a symmetric nearest neighbors random walk on $\NN^2$ where all the jumps between $\{0\} \times \NN$ and $\NN^*\times \NN^*\footnote{Here and in the sequel $\NN^*:=\NN\backslash{\{0\}}$}$, all the jumps from $\NN \times \{0\}$ and $\NN^*\times \NN^*$ and all the jumps from $0$ have been suppressed, and a jump between $(0,1)$ and $(1,0)$ has been added.

\begin{lemma}
\label{lem:007}
There exists a constant $C>0$ such that for any symmetric function $G$ on $\ZZ^3$
\begin{equation*}
C^{-1} {\bb D}_3 ({\bar G}) \; \le \; \fsp{ G, -{\mf S} G } \; \le \; C  {\mathbb D}_3 ({\bar G}).
\end{equation*}
\end{lemma}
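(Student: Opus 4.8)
The plan is to compute $\fsp{G,-{\mf S} G}$ explicitly in terms of $\bar G$ and to recognise it as the Dirichlet form of an explicit symmetric random walk on $\NN^2$ whose jump structure coincides, up to bounded conductances, with the one defining ${\bb D}_3$. Since both $G$ and ${\mf S} G$ are of degree $3$, formula (\ref{eq: eqfree2}) together with the self-adjointness of ${\mf S}$ with respect to $\fsp{\cdot,\cdot}$ gives $\fsp{G,-{\mf S} G}=\sum_{(u,v)\in\NN^2}\bar G(u,v)\,\overline{(-{\mf S} G)}(u,v)$, so everything reduces to computing $\overline{{\mf S} G}(u,v)=\sum_{y\in\ZZ}({\mf S} G)(\text{configuration with particles at }y,u+y,u+v+y)$.

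The key structural observation is that ${\mf S}$ merely permutes the occupation numbers $(\sigma_x)_x$, hence preserves their multiset; on $\Sigma_3$ this splits the dynamics into three invariant sectors: (i) three singly occupied sites, corresponding to $\{u\ge1,\,v\ge1\}$; (ii) one doubly and one singly occupied site, corresponding to $\{u=0,\,v\ge1\}\cup\{u\ge1,\,v=0\}$; and (iii) one triply occupied site, corresponding to the single point $(0,0)$. This is exactly the decomposition of the domain of ${\bb D}_3$ into the open quadrant, the two boundary rays (joined only at their bases), and the isolated origin, and it is the reason why all jumps between $\{0\}\times\NN$ (or $\NN\times\{0\}$) and $\NN^*\times\NN^*$, and all jumps out of $(0,0)$, are absent: ${\mf S}$ can neither stack two particles nor split a cluster.

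Next I would perform the translate-summation sector by sector, using discrete summation by parts. In sector (iii) the cluster only translates, so $\overline{{\mf S} G}(0,0)=0$ and the origin is isolated, as in ${\bb D}_3$. In sector (ii) the doubly occupied block and the single site exchange: for separation $\ge2$ this is a nearest-neighbour walk along each ray with conductance $2$, while at separation $1$ the exchange across the common bond turns $2\delta_y+\delta_{y+1}$ into $\delta_y+2\delta_{y+1}$, i.e. produces precisely the jump between $(0,1)$ and $(1,0)$ with conductance $1$, which is the extra corner edge of ${\bb D}_3$. In sector (i) each particle jumps to an adjacent empty site, which in the gap coordinates gives, besides the horizontal and vertical nearest-neighbour edges present in ${\bb D}_3$, the two diagonal edges $(u,v)\leftrightarrow(u+1,v-1)$; moreover at $u=1$ or $v=1$ the moves that would decrease a gap to $0$ are blocked (they would stack two particles), so the walk stays in the open quadrant, again matching the domain of ${\bb D}_3$.

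Because all the jump rates obtained are symmetric, $\fsp{G,-{\mf S} G}$ equals $\tfrac12\sum r(\cdot,\cdot)\big(\bar G(\cdot)-\bar G(\cdot)\big)^2$, a Dirichlet form on the same graph as ${\bb D}_3$ carrying, in addition, the diagonal bulk edges. The lower bound $\fsp{G,-{\mf S} G}\ge C^{-1}{\bb D}_3(\bar G)$ is then immediate, since every edge of ${\bb D}_3$ appears with conductance $\ge1$; the upper bound $\fsp{G,-{\mf S} G}\le C\,{\bb D}_3(\bar G)$ follows because each diagonal difference is controlled by the adjacent horizontal and vertical ones via $(\bar G(u+1,v-1)-\bar G(u,v))^2\le 2(\bar G(u+1,v-1)-\bar G(u,v-1))^2+2(\bar G(u,v-1)-\bar G(u,v))^2$, and the factor-$2$ discrepancies on the rays are absorbed into $C$. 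The main obstacle is the bookkeeping of the exchange moves at the low-separation boundaries, namely the corner $(0,1)\leftrightarrow(1,0)$ and the reflecting edges at $u=1$ and $v=1$, where one must check case by case which exchanges actually change the configuration; the bulk and large-separation contributions are routine discrete Laplacian computations.
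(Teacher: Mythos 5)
Your proposal is correct and follows essentially the same route as the paper: both compute the translated sums $\overline{{\mf S} G}(u,v)$ case by case, recognize the resulting operator as the generator of a symmetric random walk on $\NN^2$ whose edge set is that of ${\bb D}_3$ (with conductances $1$ or $2$, the corner edge $(0,1)\leftrightarrow(1,0)$, and suppressed jumps between the rays, the origin and the open quadrant) augmented by diagonal edges in $\NN^*\times\NN^*$, and conclude by comparing the two Dirichlet forms via the identity $\fsp{F,G}=\sum_{(\alpha,\beta)\in\NN^2}\bar F(\alpha,\beta)\bar G(\alpha,\beta)$. Your sector decomposition by occupation multisets and the explicit domination of the diagonal edges by horizontal and vertical ones merely spell out what the paper leaves implicit in its closing sentence.
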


\begin{proof}
We have the following equalities
\begin{equation*}
\begin{split}
&\overline{{\mf S} G } \, (0,0)=0,\\
&\overline{{\mf S} G } \, (0,1)= 2 \Big( {\bar G} (0,2) -{\bar G} (0,1) \Big) + \Big({\bar G} (1,0) -{\bar G} (0,1)\Big),\\
&\overline{{\mf S} G } \, (0,\beta)= 2 \Big( {\bar G} (0,\beta + 1) -{\bar G} (0,\beta) \Big)+2\Big( {\bar G} (0,\beta - 1) -{\bar G} (0,\beta)\Big), \quad \beta \ge 2,\\
&\overline{{\mf S} G } \, (1,0)= 2 \Big( {\bar G} (2,0) -{\bar G} (1,0) \Big) + \Big({\bar G} (0,1) -{\bar G} (1,0)\Big)\\
&\overline{{\mf S} G } \, (\alpha,0)= 2 \Big( {\bar G} (\alpha + 1) -{\bar G} (\alpha,0) \Big)+2 \Big({\bar G} (\alpha - 1) -{\bar G} (\alpha,0) \Big), \quad \alpha \ge 2,\\
&\overline{{\mf S} G } \, (\alpha,\beta) = \Big( {\bar G} (\alpha+1, \beta) -{\bar G}(\alpha,\beta) \Big) + \Big( {\bar G} (\alpha, \beta+1) -{\bar G}(\alpha,\beta) \Big)\\
&+ {\bf 1}_{\{\alpha \ge 2\}}\Big( {\bar G} (\alpha-1, \beta +1) -{\bar G} (\alpha,\beta)\Big) + {\bf 1}_{\{\alpha \ge 2\}} \Big({\bar G} (\alpha-1, \beta) -{\bar G} (\alpha,\beta)\Big) \\
&+{\bf 1}_{\{\beta \ge 2\}}\Big({\bar G} (\alpha+1, \beta -1) -{\bar G} (\alpha,\beta)\Big) + {\bf 1}_{\{\beta \ge 2\}}\Big({\bar G} (\alpha, \beta-1) -{\bar G} (\alpha,\beta)\Big), \quad \alpha,\beta \ge 1.
\end{split}
\end{equation*}

We recognize in these expressions the generator of a symmetric nearest neighbors random walk on $\NN^2$ where
\begin{itemize}
\item all the jumps between $\{0\} \times \NN$ and ${\NN}^*\times \NN^*$, all the jumps between $\NN \times \{0\}$ and ${\NN}^*\times \NN^*$, and all the jumps from $0$ have been suppressed;
\item a jump between $(0,1)$ and $(1,0)$ with rate $1$ has been added;
\item jumps between $(\alpha,\beta)$ and $(\alpha \pm 1, \beta \mp 1)$ for $(\alpha,\beta) \in \NN^*\times\NN^*$ with rate $1$ have been added.
\item the non vanishing jumps on $\NN \times \{0\}$ and on $\{0\} \times \NN$ have been multiplied by $2$.
\end{itemize}
 This together with \eqref{eq: eqfree2}, implies the lemma.
\end{proof}

We choose a degree $2$ symmetric function $F$ such that
\begin{equation}
\label{eq:FF}
\begin{split}
&{\overline F} (\alpha) = z^{-1/4} e^{-z^{3/4} (\alpha -1)}, \quad \alpha \ge 1,\\
&{\overline F} (0) ={\bar F} (1).
\end{split}
\end{equation}

This function exists since given a function  $G$ defined on $\NN$ we can find a symmetric function $F$ defined in $\ZZ^2$ such that $\bar F=G$. For that purpose, take $F(x,y) ={G} (|y-x|) [ \phi (x) + \phi (y)]$ where the function $\phi$ is defined on $\ZZ$ and is such that $\sum_{x\in{\ZZ}} \phi (x) =1/2$. Then for any $\alpha \in \NN$, ${\bar F} (\alpha) = \sum_{u \in \ZZ} F(u,u+\alpha)= G(\alpha) \sum_{u \in \ZZ} [\phi (u) + \phi (u+\alpha)]= G(\alpha)$.

Observe that with this choice, by Lemma \ref{lem:DF},
\begin{equation}
\label{eq:eraclio}
\fsp{ F, -{\mf S} F }\sim z^{1/4}, \quad {\bar F} (1) = z^{-1/4}, \quad z \sum_{\alpha \in \NN} {\bar F}^2 (\alpha) \sim z^{-1/4}.
\end{equation}

It remains to estimate the last contribution given by $\| {\mc W}^{1/2}  G\|_{-1,z,{\rm{free}}}^2$ where $G={\mf A}_{+} F $ is a degree $3$ function.

\begin{lemma}
\label{lem:hadrien}
Let $G={\mf A}_{+} F $ where $F$ is defined by (\ref{eq:FF}). There exists a constant $C>0$ such that
$$\| {\mc W}^{1/2}  G\|_{-1,z,{\rm{free}}}^2 \ge C z^{-1/4}.$$
\end{lemma}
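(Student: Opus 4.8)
The plan is to reduce the quantity $\|\mc W^{1/2}G\|_{-1,z,\mathrm{free}}^2=\fsp{\mc W^{1/2}G,(z-\gamma\mf S)^{-1}\mc W^{1/2}G}$ to an explicit resolvent norm of a random walk on $\NN^2$ and to evaluate it by Fourier analysis, exactly as in the proof of Theorem \ref{th:BGII}. I would identify its sharp order $z^{-1/4}$. The lower bound is the inequality displayed in the statement, while the matching upper bound $\|\mc W^{1/2}G\|_{-1,z,\mathrm{free}}^2\le Cz^{-1/4}$ is what actually closes \eqref{eq:z14}: after optimizing over the scalar multiple of $F$, the variational formula \eqref{eq:varformula11} bounds $\ll W_{0,1},(z-\mc L)^{-1}W_{0,1}\gg$ from below by $\ll W_{0,1},f\gg^2$ divided by $C\|F\|_{+1,z,\mathrm{free}}^2+b^4\|\mc W^{1/2}G\|_{-1,z,\mathrm{free}}^2$, and since $\ll W_{0,1},f\gg\sim\bar F(1)=z^{-1/4}$ and $\|F\|_{+1,z,\mathrm{free}}^2\sim z^{-1/4}$ by \eqref{eq:eraclio}, controlling the last denominator by $Cz^{-1/4}$ gives precisely \eqref{eq:z14}.

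First I would compute the degree-$3$ function $\overline{\mc W^{1/2}G}$ on $\NN^2$ explicitly. Starting from $(\mf A_+F)(\sigma)=-\sum_x\sigma_x(\sigma_{x+1}-\sigma_{x-1})F(\sigma-\delta_x)$, one observes that $\sigma_{x+1}-\sigma_{x-1}$ vanishes unless two of the three particles of $\sigma$ lie within distance two; hence $\overline{\mf A_+F}(u,v)$ is supported near the boundary of $\NN^2$, i.e. where $\min(u,v)$ is small. Performing the reduction $\bar{\,\cdot\,}$ (summing over the global shift), the individual values of $F$ telescope and recombine into nearest-neighbour differences of $\bar F$; here the boundary condition $\bar F(0)=\bar F(1)$ from \eqref{eq:FF} is essential, since it cancels the leading boundary contributions. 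The outcome is that, up to the bounded factors produced by $\mc W^{1/2}$ (which on $\Sigma_3$ takes finitely many values all comparable to a constant), $\overline{\mc W^{1/2}G}$ has the form $g(v)\mathbf 1_{\{u=1\}}+g(u)\mathbf 1_{\{v=1\}}$ plus lower-order terms, with $|g(\beta)|\lesssim|\bar F(\beta+1)-\bar F(\beta)|\sim z^{1/2}e^{-z^{3/4}\beta}$. Thus the source lives on the gradient scale $z^{1/2}$ with longitudinal extent $z^{-3/4}$.

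Next I would pass to a free random walk in order to diagonalize by Fourier transform. By Lemma \ref{lem:007} the Dirichlet form $\fsp{\,\cdot\,,-\mf S\,\cdot\,}$ on degree-$3$ functions is comparable to $\bb D_3(\bar{\,\cdot\,})$, the Dirichlet form of the constrained nearest-neighbour walk on $\NN^2$; extending the symmetric reduced function to all of $\ZZ^2$ (the same device used to remove the geometric constraints in the proof of Theorem \ref{th:BGII} and, above, for the vanishing $\mf A_0$ and $\mf A_-$ terms), one bounds $\|\mc W^{1/2}G\|_{-1,z,\mathrm{free}}^2$ from above and below, up to constants, by the $H_{-1}$ resolvent norm of the free symmetric walk on $\ZZ^2$, which by Plancherel equals
\[
C\int_{[0,1]^2}\frac{\bigl|\widehat{\overline{\mc W^{1/2}G}}(k)\bigr|^2}{C_1z+C_2\sum_{i=1}^2\sin^2(\pi k_i)}\,dk .
\]
Inserting the explicit source, whose Fourier transform is essentially $\hat g(k_1)+\hat g(k_2)$ with $|\hat g(k)|^2\sim z/(z^{3/2}+k^2)$, I would integrate out the transverse frequency, producing the one-dimensional factor $(z+c\sin^2(\pi k_\parallel))^{-1/2}$, and then integrate the longitudinal frequency, which is concentrated on $|k_\parallel|\lesssim z^{3/4}$ where the integrand is of size $z^{-1/2}\cdot z^{-1/2}$ over a set of measure $z^{3/4}$; the total is of order $z^{-1/4}$.

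The main obstacle is the explicit identification of $\overline{\mc W^{1/2}G}$ together with the verification that the boundary condition $\bar F(0)=\bar F(1)$ reduces its amplitude to the gradient scale $z^{1/2}$: without this cancellation the source would have amplitude of order $\bar F\sim z^{-1/4}$ and the resolvent norm would diverge far faster than $z^{-1/4}$, ruining the estimate. The second delicate point is the anisotropic two-dimensional resolvent integral, where the transverse integration is carried out near the bottom of the spectrum and must be controlled uniformly in the longitudinal frequency. Since both inequalities come out of the same computation, the estimate is sharp and $\|\mc W^{1/2}G\|_{-1,z,\mathrm{free}}^2$ has exact order $z^{-1/4}$.
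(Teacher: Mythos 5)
Your proposal is correct and in substance it is the paper's own proof: the explicit computation of the reduced source (the paper's \eqref{eq:GGG}), with the corner cancellation ${\bar G}(0,1)={\bar G}(1,0)=0$ forced by ${\bar F}(0)={\bar F}(1)$ so that the source lives on the gradient scale $z^{1/2}$ along the lines $u=1$ and $v=1$; the comparison with a free walk via Lemma \ref{lem:007} and the extension device; and the Plancherel evaluation of the anisotropic resolvent integral, with exactly the paper's count ($|{\hat g}(k)|^2\sim z/(z^{3/2}+k^2)$, transverse integration giving $(z+\sin^2 (\pi k_{\parallel}))^{-1/2}$, longitudinal window of width $z^{3/4}$, total $z^{-1/4}$). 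Three points of comparison are worth recording. First, you correctly read the lemma's displayed $\ge$ as a misprint: the paper's proof establishes the upper bound $\|{\mc W}^{1/2}G\|^2_{-1,z,{\rm free}}\le Cz^{-1/4}$, which is what the optimization over $aF$ in \eqref{eq:varformula11}, together with \eqref{eq:eraclio}, needs to yield \eqref{eq:z14}; your two-sided sharp version subsumes both directions. Second, your insistence on keeping the $z\fsp{R,R}$ term in the resolvent variational formula is not cosmetic: the paper's \eqref{eq:flavien} momentarily drops it and then reinstates $z$ in the final Fourier integral, and without that term the integral is of order $z^{-1/2}$ rather than $z^{-1/4}$, which would destroy the lower bound on ${\mc F}_{1,1}$ — so your version is the more careful one on the step where the paper is sloppiest. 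Third, treating the two boundary lines directly on $\ZZ^2$ with source transform essentially ${\hat g}(k_1)+{\hat g}(k_2)$ replaces, harmlessly and somewhat more transparently, the paper's folding parametrization of $R$ through $\phi$ and the auxiliary function ${\tilde F}$ (whose displayed Fourier transform in the paper in fact carries a sign slip, corresponding to the odd rather than the even extension); both organizations give the same estimate.
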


\begin{proof}
For any $u,v,w \in \ZZ$, we have
\begin{equation*}
\begin{split}
& G(u,u+1,u+2)=  F(u, u+1)- F(u+1,u+2) ,\\
& G(u,u+1,v)  = F(u,v)-F(u+1,v), \quad v>u+1,\\
&G(v,u,u+1)=  F(v,u)-F(v,u+1), \quad v<u,\\
& G(u,u,u+1)= 2\Big( F(u,u) - F(u,u+1)\Big),\\
& G(u,u,u-1) = 2 \Big(F(u-1,u) - F(u,u) \Big),\\
&G(u,v,w)=0 \quad {\text{otherwise.}}
\end{split}
\end{equation*}

Let us now compute ${\bar G}(u,v)$, $u,v \in \NN$. We get
\begin{equation}
\label{eq:GGG}
\begin{split}
&{\bar G} (0,1)=-{\bar G} (1,0)=2 {\bar F} (0) -2 {\bar F} (1),\\
&{\bar G} (1,v)= {\bar F} (v+1) -{\bar F} (v), \quad v \ge 2,\\
&{\bar G} (u,1)=  {\bar F} (u) -{\bar F} (u+1) , \quad u \ge 2,\\
&{\bar G} (u,v)=0 \quad {\text{otherwise.}}
\end{split}
\end{equation}
By (\ref{eq:FF}) we have that ${\bar G} (0,1)= {\bar G} (1,0)=0$. Also notice that $\bar G(u,u)=0$ and by \eqref{expression for W} we have that ${\mc W}^{1/2} (u,v)=(1+\lambda)$ for $u\neq{v}$.

It follows, by Lemma {\ref{lem:007}}, that $\|{\mc W}^{1/2} G \|_{-1,z,{\rm{free}}}^2$ is upper bounded by the variational formula:

\begin{equation*}
\begin{split}
\| {\mc W}^{1/2} G \|_{-1,z, {\rm{free}}}^2 &= \sup_{R} \left\{ 2 \sum_{(u,v){\in{\bb N^2}}} R(u,v) \mc{W}^{1/2}(u,v){\bar G} (u,v) -C_{0} {\bb D}_3 (R) \right\}\\
&= \sup_{R} \left\{ 2 (1+\lambda)\sum_{(u,v)\in{\bb N^2}} R(u,v) {\bar G} (u,v) -C_{0} {\bb D}_3 (R) \right\}
\end{split}
\end{equation*}
where the supremum is taken over local functions on $\NN^2$. By (\ref{eq:GGG}), we have that
\begin{equation}
\label{eq:trajan}
\begin{split}
&\sum_{(u,v)\in{\bb N^2}} R(u,v) {\bar G} (u,v)\\&
=  \sum_{v \ge 2} R (1,v)\Big( {\bar F} (v+1) -{\bar F} (v) \Big)- \sum_{u \ge 2} R(u,1) \Big( {\bar F} (u+1) - {\bar F} (u) \Big)\\
&=  \sum_{v \ge 3} {\bar F} (v) \Big( R(1,v-1) -R(1,v)\Big)-  \sum_{u \ge 3} {\bar F} (u) \Big( R(u-1,1) -R(u,1)\Big) \\
&+  {\bar F} (2)\Big(R(2,1) -R(1,2) \Big)\\
&=   \sum_{v \ge 2} {\bar F} (v) \Big( R(1,v-1) -R(1,v)\Big)-\sum_{u \ge 2} {\bar F} (u) \Big( R(u-1,1) -R(u,1)\Big).
\end{split}
\end{equation}

We use now the following parametrization of $R$. For $k \ge 1$, $v \in \ZZ$, let us define
\begin{equation*}
R(k,v)=\phi (k-1,v-k), \quad v \ge k, \quad R(u,k)=\phi (k-1,-u+k), u \ge k,
\end{equation*}
where $\{\phi (k,\cdot) \; ;\; k \ge 0\}$ are functions from $\ZZ \to \RR$. We have the following lower bound for ${\bb D}_3 (R)$:
\begin{equation*}
{\bb D}_3 (R) \ge \sum_{u,v \ge 1} \Big( R (u+1,v) -R (u,v)\Big)^2 +  \Big( R (u,v+1) -R (u,v)\Big)^2
\end{equation*}
which is nothing but the Dirichlet form of a random walk where only jumps connecting sites of $\NN^* \times \NN^*$ have been conserved. With the choice of the parametrization for $R$ and this lower bound, it is not difficult to show there exists a constant $C>0$ such that
\begin{equation*}
{\bb D}_3 (R) \ge C \sum_{k \ge 0} \sum_{v \in \ZZ} \Big( \phi (k, v+1) -\phi (k,v) \Big)^2 +\Big( \phi (k+1, v) -\phi (k,v) \Big)^2.
\end{equation*}
The right hand side of the previous inequality is the Dirichlet form of a symmetric simple random walk on $\NN \times \ZZ$.

By (\ref{eq:trajan}), we get
\begin{equation*}
\sum_{(u,v)\in \bb N^2} R(u,v) {\bar G} (u,v)=
\sum_{u \in \ZZ} \phi (0,u) \Big( {\tilde F}(u-1) -{\tilde F} (u) \Big)
\end{equation*}
where ${\tilde F}: \ZZ \to \RR$ is defined by ${\tilde F} (u)= -{\bar F} (u+2) {\bf 1}_{\{u \ge 0\}} -{\bar F} (1-u) {\bf 1}_{\{u \le -1\}}$. We extend the function $\phi$ defined on $\NN \times \ZZ$ to $\ZZ^2$ by defining $\phi(-k, u)= \phi (k,u)$, $k \ge 1, u \in \ZZ$. Observe then that
\begin{equation*}
\begin{split}
{\bb D}_3 (R) &\ge C \sum_{k \ge 0} \sum_{v \in \ZZ}\Big( \phi (k, v+1) -\phi (k,v) \Big)^2 + \Big( \phi (k+1, v) -\phi (k,v)\Big)^2 \\
&= \cfrac{C}{2} \sum_{k \in \ZZ} \sum_{v \in \ZZ} \Big( \phi (k, v+1) -\phi (k,v) \Big)^2 + \Big( \phi (k+1, v) -\phi (k,v) \Big)^2. \\
\end{split}
\end{equation*}

Consequently we have, for suitable positive constants $C_1, C_2$:
\begin{equation}
\label{eq:flavien}
\begin{split}
\| {\mc W}^{1/2} G \|_{-1,z, {\rm{free}}}^2 \le C_1 \sup_{\phi} \Big\{ 2 \sum_{u \in \ZZ} \phi(0,u)& \Big( {\tilde F} (u-1) -{\tilde F} (u) \Big)\\
&-C_{2} \sum_{\substack{(\bu,\bv)\in \ZZ^2\\ |\bu-\bv|=1}} \Big( \phi (\bu) -\phi (\bv) \Big)^2\Big \}.
\end{split}
\end{equation}

A standard Fourier computation shows this supremum is of order $z^{-1/4}$.  Indeed, let ${\widehat u}$ be the Fourier transform of the function $u: \ZZ^n \to \RR$, defined by
\begin{equation*}
{\widehat u} (\bk) = \sum_{\bx \in \ZZ^n} e^{2i \pi \bx \cdot \bk} u (\bx), \quad \bk=(k_1, \ldots,k_n),
\end{equation*}
and denote by ${\widehat u}^* (\bk)$ the complex conjugate of ${\widehat u} (\bk)$. Using the expression of the sum of a convergent geometric series, we obtain the following expression for the Fourier transform $\Psi (k_1, k_2)$ of the function $(x,y) \in \ZZ^2 \to \delta_0 (y) {\tilde F} (x)$:
\begin{equation*}
\Psi (k_1, k_2) = - z^{-1/4} e^{-z^{3/4}} \left\{ \cfrac{1}{1- e^{2i\pi k_1} e^{-z^{3/4}}}-\cfrac{e^{-2i \pi k_1}} {1-e^{-2i\pi k_1} e^{-z^{3/4}} } \right\}
\end{equation*}
which satisfies
\begin{equation*}
\left| \Psi (k_1, k_2) \right| \le \cfrac{C_3 \sqrt{z}}{z^{3/2} + C_4 \sin^{2} (\pi k_1) }
\end{equation*}
for some positive constants $C_3, C_4$. The supremum appearing in (\ref{eq:flavien}) is then given by
\begin{equation*}
C_2^{-1} \int_{[0,1]^2} \cfrac{| \Psi (k_1, k_2)|^2 }{z+ 4 \sin^2 (\pi k_1) + 4 \sin^{2} (\pi k_2) } dk_1 dk_2.
\end{equation*}
Then the result follows by a standard study of this integral.
\end{proof}

To obtain (\ref{eq:z14}), by (\ref{eq:eraclio}) and Lemma \ref{lem:hadrien},  it suffices to take a test function in the form $aF$ with $F$ given by (\ref{eq:FF}) and $a$ sufficiently small.

\section{Stochastic perturbations of Hamiltonian systems}
\label{sec:pert}

In this section we discuss some other possible stochastic perturbations and make some connections with the recent models considered in \cite{BC}. Let us start with the Hamiltonian system (\ref{eq:dyneq}) with potential $V$ and generator $A$ given by
\begin{equation*}
A =\sum_{x\in{\ZZ}} \Big(V' (\eta_{x+1}) -V' (\eta_{x-1}) \Big) \partial_{\eta_x}.
\end{equation*}
The energy $\sum_{x\in{\ZZ}} V(\eta_x)$ and the volume $\sum_{x\in{\ZZ}} \eta_x$ are conserved by these dynamics. Remark that in fact $\sum_{x\in{\ZZ}} \eta_{2x}$ and $\sum_{x\in{\ZZ}} \eta_{2x+1}$ are also conserved and that we cannot exclude the case that still many others exist. This is the case for example for the exponential interaction for which an infinite number of conserved quantities can be explicitly identified. Anyway, we are only interested in these two first quantities. The product probability measures $\mu_{\beta,\lambda}$ defined by
\begin{equation*}
\mu_{\beta,\lambda} (d\eta) = \prod_{x \in \ZZ} Z(\beta,\lambda)^{-1}
\exp\left\{ -\beta V( \eta_x) -\lambda \eta_x \right\} \, d\eta_x, \\
\end{equation*}
where
\begin{equation*}
Z(\beta, \lambda) = \int_{-\infty}^{+\infty} \exp\left( -\beta V(r) -\lambda r \right)\, dr.
\end{equation*}
are invariant for the infinite dynamics.

In \cite{BS} we proposed to perturb this deterministic dynamics by the Poissonian noise considered in this paper and conserving both the energy and the volume. One could also consider the `` Brownian'' noise whose generator $S$ is given by $ S= \sum_{x\in{\ZZ}} Y_x^2$ where
$$Y_x\!=\!(V'(\eta_{x+1})-V'(\eta_{x-1}))\! \partial_{\eta_{x}}\!+(V'(\eta_{x-1})-V'(\eta_{x}))\! \partial_{\eta_{x+1}}\! + (V'(\eta_{x})-V'(\eta_{x+1})) \! \partial_{\eta_{x-1}},$$
is the vector field tangent to the curve
$$\Big\{ (\eta_{x-1}, \eta_x, \eta_{x+1}) \in \RR^3 \, ; \sum_{y=x-1}^{x+1} \eta_y =0, \; \sum_{y=x-1}^{x+1} V (\eta_y) =1\Big\}.$$
It is easy to see that the process with generator $L=A+S$ conserves the energy and the volume and has $\mu_{\beta, \lambda}$ as invariant measures. A priori, it should be possible to extend our result to this system for $V$ of exponential type but the noise $S$ seems to have a quite complicated expression in the orthogonal basis we used in this paper.  The advantage of the Poissonian noise is its very simple form. Notice also that the Poissonian noise is a weaker perturbation of the Hamiltonian dynamics than the Brownian noise in the sense it is less mixing. Indeed, consider the discrete torus ${\bb T}^N$ of length $N$ and the Brownian noise ${S}^N=\sum_{x\in \TT^N} Y_x^2$ restricted to the manifold ${\mc M}_{\pi, E}^N$ defined by
$${\mc M}^N_{\pi,E} = \left\{ \eta \in \RR^{\TT^N} \, ; \, \sum_{y \in \TT^N} \eta_y =\pi, \; \; \sum_{y \in \TT^N} V(\eta_y) =E \right\}, \quad E>0, \pi \in \RR.$$
Then $S^N$ is ergodic on ${\mc M}_{\pi, E}^N$ but this is not true for the restriction of the Poissonian noise restricted to ${\mc M}_{\pi, E}^N$. 

We could also decide to conserve energy and not the volume by adding a suitable perturbation. The invariant states are then given by $\mu_{\beta,0}$, $\beta>0$. If $V$ is even, a simple Poissonian noise consists to change the sign of $\eta_x$  independently on each site $x$ at random exponential times. In this case one can prove, as in \cite{BO}, that the energy diffuses in the sense that the Green-Kubo formula converges to a well defined finite value. For a generic $V$ a Brownian noise with generator $S$ given by $S=\sum_{x\in \bb Z} K_x^2$ with $K_x = V' (\eta_{x+1}) \partial_{\eta_x} - V' (\eta_x) \partial_{\eta_{x+1}}$ makes the job.

Consider now the case where we want to add a stochastic perturbation conserving only the volume. It does not seem to be easy to define a simple Poissonian noise with such a property. A Brownian noise is obtained by the following scheme. Fix $\beta>0$, consider the vector field $X_x = \partial_{\eta_{x+1}} -\partial_{\eta_x}$ which is tangent to the hyperplane $\{ (\eta_x, \eta_{x+1}) \in \RR^2\; ; \; \eta_x + \eta_{x+1} =1\}$ and define the Langevin operator ${S}_{\beta}$ by
\begin{equation*}
\begin{split}
{S}_\beta &= \frac{1}{2} \sum_{x\in \bb Z} e^{-{\mc H}_{\beta, \lambda}} X_x (e^{{\mc H}_{\beta,\lambda}} X_x )\\
&= \frac{1}{2} \sum_{x\in \bb Z} X_x^2 + \frac{\beta}{2} \sum_{x\in{\ZZ}} \Big(V' (\eta_{x+1}) -V' (\eta_x)\Big) X_x
\end{split}
\end{equation*}
where ${\mc H}_{\beta, \lambda} = \beta \sum_{x\in \bb Z} V(\eta_x) +\lambda \sum_{x\in \bb Z} \eta_x$. Observe that ${S}_\beta$ depends on $\beta$ but is independent of $\lambda$. The operator ${S}_{\beta}$ is a nonpositive self-adjoint operator in ${\bb L}^{2} (\mu_{\beta, \lambda})$ for any $\lambda$ and ${S}_{\beta} (\sum_{x\in \bb Z} \eta_x) =0$. Then, the perturbed volume-conserving model has a generator $L^V_{\beta}$ given by
\begin{equation}
L^V_{\beta} = A + \gamma S_\beta
\end{equation}
where $\gamma>0$ is a parameter fixing the strength of the noise. By construction, the Markov process generated by $L^V_{\beta}$ has $\mu_{\beta, \lambda}$ as a set of invariant probability measures. In fact, using the same methods as in \cite{BS,FFL}  one can prove that the only space-time invariant probability measures with finite local entropy density are mixtures of the $(\mu_{\beta, \lambda})_{\lambda}$. We can also rewrite $L^V_{\beta}$ as
\begin{equation*}
\begin{split}
L^V_{\beta} & = \sum_{x\in \bb Z} \left\{ \left(1 -\frac{\gamma \beta}{2} \right) V' (\eta_{x+1}) + \gamma \beta V' (\eta_x) - \left(1+ \frac{\gamma \beta}{2} \right) V' (\eta_{x-1}) \right\} \partial_{\eta_x}\\
&+ \gamma \sum_{x\in \bb Z}( \partial_{\eta_x}^2 -\partial_{\eta_x, \eta_{x+1}}^2).
\end{split}
\end{equation*}

The microscopic flux $j_{x,x+1}$ associated to the volume conservation law is defined by
\begin{equation*}
L^V_{\beta} (\eta_x) = -\nabla j_{x-1,x}, \quad j_{x-1,x} = -\left( 1+\frac{\gamma \beta}{2}\right) V' (\eta_{x-1}) - \left( 1 - \frac{\gamma \beta}{2} \right) V' (\eta_x).
\end{equation*}
The semi-discrete directed polymer model considered in \cite{BC} is, up to an irrelevant scaling factor $2$, recovered by taking $V(\eta) =e^{-\eta}$, $\beta=1$ and $\gamma=2$ (see (3.7) in \cite{Sp3}).  In \cite{BC} the authors show that for a particular non stationary initial condition (``wedge''), by developing a very nice theory of Macdonald processes, the system belongs to the Kardar-Parisi-Zhang universality class (\cite{Sp3}). Unfortunately one can not use their results or their methods to derive a more precise picture for the model with exponential interactions considered in this paper. For other potentials $V$ the theory developed by Borodin and Corwin in \cite{BC} can not be adapted but it would be very interesting to see if one can relate the models generated by $L^V_\beta$ to the semi-discrete directed polymer and deduce some qualitative information from the latter. The use of the variational formulas considered in this paper could be the way.

\appendix
\section{Existence of the infinite dynamics}
\label{sec:exis}

In this section we prove existence of the infinite volume dynamics $(\xi(t))_{t \ge 0}$. We focus here on the process $\xi$ but the same proof can be carried for the process $\eta$ (or just define $\eta$ in terms of $\xi$ by $\eta_x (t) =- b^{-1} \log \xi_x (t) $, $x \in \ZZ$.  To simplify notations we will assume $b=1$.

Since the interaction coming from the deterministic part is  non-quadratic at infinity, proving the existence of the infinite dynamics is a non trivial task. Nevertheless nice sophisticated techniques  have been introduced by Dobrushin and Fritz  in \cite{DF}. Here, we follow closely the approach of \cite{F3} (see also \cite{F1,F2}) adapted to our case. By itself, the strategy of the proof of existence of solutions is standard: we consider finite subsystems  and prove compactness of this family by means of an a priori bound for a quantity ${\bar E}$ which plays the role of an energy density.  The obtention of this a priori bound is however non trivial and is the main step to get the existence of the dynamics. The aim of this appendix is to show how to get such an a priori bound. The a priori bound we derive here for the infinite dynamics is also valid for finite subsystems corresponding to a finite set $\Lambda \subset \ZZ$  with a bound which is independent of the size of $\Lambda$. This proves then that the finite subsystems form a compact family from which one can extract a subsequence converging to the infinite dynamics.

We have first to specify the space of allowed configurations $\Omega \subset (0,+\infty)^{\ZZ}$. For $x\in\ZZ$, let $g(x) = 1 + \log (1+|x|)$ and denote by $E(\xi, \mu,\sigma)$, $\xi\in(0,+\infty)^{\ZZ}$, $\mu \in \ZZ$, $\sigma>0$, the quantities
\begin{equation*}
\begin{split}
&E(\xi,\mu,\sigma)= \sum_{|x - \mu| \le \sigma}  (1 + 2 \xi_x - \log (\xi_x)),\\
&{\bar E} (\xi)= \sup_{\mu \in \ZZ} \, \sup_{ \sigma \ge g(\mu)} \, \sigma^{-1} E (\xi,\mu, \sigma).
\end{split}
\end{equation*}

The quantity ${\bar E}$ is called the logarithmic fluctuation of energy and the set $\Omega$ is defined as
 \begin{equation*}
\Omega:=\{\xi\in{(0,+\infty)^\ZZ}: {\bar E} (\xi) <+ \infty\}.
 \end{equation*}
The configuration space $\Omega$ is equipped with the product topology and with the associated Borel structure. It is easy to see that $ \nu_{\beta, \lambda} (\Omega) =1$ for any $\beta>0$ and $\lambda>-1$.

Let ${\bf N} (t) = \{N_{x,x+1} (t) \, ; \, x \in \ZZ\}$ be a collection of independent Poison processes of intensity $\gamma>0$. The equations of motion corresponding to the generator ${\mc L}$ read as
\begin{equation}
\label{eq:app-stoch-eq}
{d \xi_x} =  \xi_x (\xi_{x+1} -\xi_{x-1}) dt + \nabla \left( (\xi_{x} -\xi_{x-1}) dN_{x-1,x} (t)\right),\quad  x \in \ZZ.
\end{equation}

Let $D(\RR_+, \RR)$ denote the space of c\`adl\`ag functions of $\RR_+$ into $\RR$ with the Skorohod topology and let ${\bb D}= [D(\RR_+, \RR)]^{\ZZ}$ equipped with the product topology and the associated Borel field ${\mc B}$. The smallest $\sigma$-algebra on which all projectionsrestricted to the time interval $[0,t]$ are measurable will be denoted by ${\mc B}_t$. Finally, suppose that we are given a probability measure ${\bf P}$ on ${\mc B}$ such that our Poisson processes $N_{x,x+1}$ are realized as components of the random element of ${\bb D}$.

\begin{definition}
A ${\mc B}_t$-adapted mapping $\xi (t):= \xi(t, {\bf N})$ of ${\bb D}$ into itself is called a tempered solution of (\ref{eq:app-stoch-eq}) with initial configuration $\xi^0 \in \Omega$ if $\xi (0) =\xi^0$, almost each trajectory $\xi (\cdot, {\bf N})$ satisfies the integral form of (\ref{eq:app-stoch-eq}), and the logarithmic energy fluctuation ${\bar E} (\xi (t))$ is bounded on finite intervals of time with probability one.
\end{definition}

\begin{theo}
\label{th:existdyn2}
For any $\xi^0 \in \Omega$, there exists a unique tempered solution of (\ref{eq:app-stoch-eq}) with initial configuration $\xi^0 \in \Omega$.
\end{theo}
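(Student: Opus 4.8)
The plan is to follow the Dobrushin--Fritz scheme as adapted in \cite{F3}, constructing the infinite dynamics as a limit of finite-volume approximations that are controlled by a single a priori estimate on the logarithmic energy fluctuation $\bar E$. For each $N$ I would introduce the finite system $\xi^N$ obtained by freezing the evolution outside the box $\Lambda_N=\{|x|\le N\}$ (equivalently, by switching off the drift and the exchange rates across $\partial\Lambda_N$). Between consecutive rings of the finitely many Poisson clocks active in $\Lambda_N$, equation (\ref{eq:app-stoch-eq}) reduces to a finite-dimensional ODE with a polynomial, hence locally Lipschitz, vector field, so a unique solution exists up to a possible explosion time. The a priori bound below will rule out explosion, make each $\xi^N$ global, and provide the compactness needed to pass to the limit; everything therefore rests on showing that $\bar E(\xi^N(t))$ stays bounded on compact time intervals, with probability one and uniformly in $N$.

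The heart of the matter is this a priori bound, and the mechanism behind it is a discrete conservation law. Writing $h(\xi_x)=2\xi_x-\log\xi_x$, so that the summand of $E$ is $1+h(\xi_x)$, the deterministic part of (\ref{eq:app-stoch-eq}) satisfies
\[
\frac{d}{dt}\,h(\xi_x)=(2\xi_x-1)(\xi_{x+1}-\xi_{x-1})=j_{x-1,x}-j_{x,x+1},\qquad j_{x,x+1}=-2\xi_x\xi_{x+1}+\xi_x+\xi_{x+1},
\]
so the drift transports $h$ by a discrete divergence. The exchange part of the noise is manifestly conservative and symmetric, so a clock ring on a bond interior to a window leaves $E(\xi,\mu,\sigma)$ unchanged. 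Consequently, summing $h(\xi_x)$ over $|x-\mu|\le\sigma$ and using the integral form of (\ref{eq:app-stoch-eq}), both the drift and the jump contributions to $E(\xi,\mu,\sigma)$ collapse to terms supported on the two boundary bonds of the window.

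The main obstacle is that these boundary currents are quadratic in $\xi$, through the products $\xi_a\xi_b$ in $j$, whereas $E$ is only linear in $\xi$, so they cannot be closed off pointwise against the energy. This is precisely the difficulty resolved in \cite{DF,F3}: one exploits the supremum over scales built into $\bar E$, together with the logarithmic weight $g(\mu)=1+\log(1+|\mu|)$, to trade a pointwise quadratic boundary term against an \emph{average} of the energy over a range of window sizes. Averaging the identity for the evolution of $E(\xi,\mu,\sigma)$ over $\sigma\ge g(\mu)$ and maximizing over $\mu$, and bounding $|\xi_a\xi_b|\le\tfrac12(\xi_a^2+\xi_b^2)$ on scale by $\sigma^{-1}E$, one arrives at a closed integral inequality of Gronwall type for $\bar E(\xi^N(t))$ whose right-hand side depends only on $\bar E$ itself and not on $N$. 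Integrating it yields the bound on compact intervals, uniform in $N$. I would handle the stochastic boundary terms by isolating the compensated martingale part and applying a maximal inequality, so that the bound holds almost surely after a stopping-time or Borel--Cantelli argument; for fixed $N$ this is especially clean, since only finitely many bonds in $\Lambda_N$ can ring in $[0,T]$.

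With the uniform estimate in hand, the family $\{\xi^N\}$ is tight in ${\bb D}=[D(\RR_+,\RR)]^{\ZZ}$, and any limit point is a tempered solution of (\ref{eq:app-stoch-eq}) with the prescribed initial datum and driving noise, the bound ensuring that the nonlinear drift passes to the limit coordinatewise. For uniqueness I would compare two tempered solutions $\xi,\xi'$ driven by the same ${\bf N}$ and started from the same $\xi^0$: running an analogous estimate on a suitably weighted distance $\sum_x w_x(\xi_x-\xi'_x)^2$, the divergence structure again confines the growth to boundary and weight terms, and a Gronwall argument using the finiteness of $\bar E$ enjoyed by both solutions forces the distance to vanish. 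This would complete the proof.
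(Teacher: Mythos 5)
Your overall architecture --- finite subsystems, an a priori bound on ${\bar E}$ uniform in the volume, compactness, and a weighted Gronwall argument for uniqueness --- is exactly the scheme the paper follows (delegating the standard bookkeeping to \cite{F3}), and your identification of the key difficulty, namely boundary currents quadratic in $\xi$ against an energy functional linear in $\xi$, is correct. But the step you propose to resolve it fails, and it is precisely where the paper's proof lives. The bound $|\xi_a\xi_b|\le\tfrac12(\xi_a^2+\xi_b^2)$ controlled ``on scale by $\sigma^{-1}E$'' is false: $E(\xi,\mu,\sigma)$ is linear in $\xi$, so $\sigma^{-1}E$ controls the window average of $\xi_x$, not of $\xi_x^2$. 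What is actually available is the asymmetric pointwise estimate $\xi_x\le C g(x){\bar W}(\xi)$ (the paper's \eqref{useful estimate}), which bounds the quadratic current by a term linear in $\xi$ times $g\,{\bar W}$; fed into the smoothly weighted window identity through the weight estimate $g(x)|\partial_x f(x-\mu,\sigma)|\le 4 g(|\mu|+\sigma)\,\partial_\sigma f(x-\mu,\sigma)$, this yields a contribution of order $g(|\mu|+\sigma)\,{\bar W}(\xi(t))\,\partial_\sigma W(\xi(t),\mu,\sigma)$. Note the derivative $\partial_\sigma W$ in the scale parameter: the inequality is not closed in $W$ at fixed $\sigma$, and no averaging over $\sigma$ and maximizing over $\mu$ converts it into a Gronwall inequality for ${\bar E}$; done naively one gets an inequality of the type $\dot{\bar E}\lesssim {\bar E}^2$, which blows up in finite time and proves nothing.

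The missing idea is the Dobrushin--Fritz self-controlling device implemented in Proposition \ref{prop:apb}: one evaluates $W$ along stochastically \emph{shrinking} radii $\rho_k(t)=k g(\mu)-C_0\int_0^t g(|\mu|+|\rho_k(s)|)\,{\bar W}(\xi(s))\,ds$, so that the chain-rule term $\dot\rho_k\,\partial_\sigma W=-C_0\,g\,{\bar W}\,\partial_\sigma W$ absorbs exactly the bad quadratic contribution, cf.\ the cancellation (\ref{eq:I0kne}). The stochastic part must then be controlled by the exponential supermartingale inequality applied to $\exp\bigl(M^{(k)}_\mu-\tfrac12\langle M^{(k)}_\mu\rangle\bigr)$; the plain maximal inequality you invoke gives only polynomial tails, which do not survive the union bound over all $\mu\in\ZZ$ and $k\ge1$ with $u$ replaced by $u+Akg(\mu)$. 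Finally, taking $k_t$ to be the first $k$ with $\rho_k(t)>g(\mu)$ produces the superlinear but non-exploding differential inequality $Z'(t)\le M w\bigl(1+Z(t)g(Z(t))\bigr)$ for $Z(t)=\int_0^t{\bar W}(\xi(s))\,ds$, which integrates on compact time intervals precisely because the growth is only $Z\log Z$. Without the shrinking windows, the ``closed integral inequality of Gronwall type'' your argument rests on does not exist, so the central estimate --- and with it global existence, uniformity in $N$, and your uniqueness comparison (whose coefficients are unbounded and need the same $g(x){\bar W}$ control) --- remains unproved.
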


As explained above, the main step to prove this theorem is to obtain an a priori bound that we prove in Proposition \ref{prop:apb}. For a complete proof, we refer to \cite{F3} ( or \cite{F1,F2}).\\

Now we notice that the Gibbs state $\nu_{\beta, \lambda}$, $(\beta, \lambda) \in (0, +\infty) \times (-1, +\infty)$ is formally invariant for the infinite dynamics generated by $(\xi (t))_{t \ge 0}$. This  can be seen by observing that $ \int ({\mc L} f)(\xi) d\nu_{\beta, \lambda}(d\xi)=0$ for nice functions $f:\Omega\rightarrow{\bb R}$. Nevertheless, some care has to be taken to prove this. Indeed, we do not know that ${\mc L}$ is really the generator of the semigroup generated by $(\xi (t))_{ t\ge 0}$ on the space of bounded measurable functions on $\Omega$ in the usual Hille-Yosida theory. This can be a very difficult question that we prefer to avoid (see \cite{F2}). Instead we use the fact that the infinite dynamics can be approximated by finite subsystems.

\begin{prop}
For any $\beta>0, \lambda > -1$, the probability measure $\nu_{\beta, \lambda}$ is invariant for the process $(\xi (t))_{t \ge 0}$.
\end{prop}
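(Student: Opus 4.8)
The plan is to avoid the Hille--Yosida machinery entirely---as the paragraph preceding the statement warns, we do not control ${\mc L}$ as a genuine generator on bounded measurable functions of $\Omega$---and instead to deduce invariance from the finite-volume approximations used to build the dynamics in Theorem \ref{th:existdyn2}. Concretely, it suffices to prove that for every bounded continuous cylinder function $f$ and every $t\ge 0$,
\[
\int {\bb E}_{\xi^0}\big[ f(\xi(t)) \big]\, \nu_{\beta,\lambda}(d\xi^0) \;=\; \int f \, d\nu_{\beta,\lambda},
\]
where ${\bb E}_{\xi^0}$ denotes expectation over the Poisson processes with the initial configuration $\xi^0$ held fixed; since such $f$ determine a probability measure on $(0,+\infty)^{\ZZ}$ equipped with the product topology, this identity is exactly the asserted invariance.

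First I would establish the identity at finite volume. On the discrete torus $\TT^N$ consider the periodic dynamics with generator ${\mc L}_N = b^2 {\mc A}_N + \gamma {\mc S}_N$ and let $\nu^N_{\beta,\lambda}$ be the product measure on $(0,+\infty)^{\TT^N}$ with the same one-site marginal as $\nu_{\beta,\lambda}$. The exchange part ${\mc S}_N$ preserves $\nu^N_{\beta,\lambda}$ because each bond exchange leaves the product of identical marginals unchanged. For the deterministic part it is a Liouville-type computation: writing $\rho(\xi) \propto \prod_{x} \xi_x^{\lambda} e^{-\beta \xi_x}$ for the density and $b_x = \xi_x(\xi_{x+1}-\xi_{x-1})$, an integration by parts (whose boundary contributions vanish as $\xi_x\downarrow 0$ precisely because $\lambda>-1$, and as $\xi_x\uparrow\infty$ by the exponential factor) reduces $\int ({\mc A}_N f)\, d\nu^N_{\beta,\lambda}$ to $-\int f\,\sum_{x\in\TT^N}\partial_{\xi_x}(b_x\rho)\,d\xi$. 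A direct differentiation gives
\[
\partial_{\xi_x}(b_x \rho) = \rho\,(\xi_{x+1}-\xi_{x-1})\,(1+\lambda-\beta\xi_x),
\]
and on the torus both $\sum_x (\xi_{x+1}-\xi_{x-1})$ and $\sum_x \xi_x(\xi_{x+1}-\xi_{x-1})$ telescope to $0$, so that $\int ({\mc A}_N f)\, d\nu^N_{\beta,\lambda}=0$. Hence $\int ({\mc L}_N f)\,d\nu^N_{\beta,\lambda}=0$ and, the finite-dimensional process being a bona fide Markov process, $\nu^N_{\beta,\lambda}$ is invariant: ${\bb E}_{\nu^N_{\beta,\lambda}}[f(\xi^N(t))]={\bb E}_{\nu^N_{\beta,\lambda}}[f]$.

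Then I would pass to the limit $N\to\infty$. The idea is to couple everything on one probability space: draw $\xi^0\sim\nu_{\beta,\lambda}$ together with the Poisson clocks ${\bf N}$, let $\xi^N(t)$ be the periodic dynamics started from the restriction of $\xi^0$ to $\TT^N$, and let $\xi(t)$ be the infinite tempered solution. For $f$ a bounded continuous cylinder function depending on coordinates inside a box contained in $\TT^N$, the restriction of $\nu_{\beta,\lambda}$ to that box coincides with $\nu^N_{\beta,\lambda}$, so the finite-volume invariance reads $\int {\bb E}_{\xi^0}[f(\xi^N(t))]\,\nu_{\beta,\lambda}(d\xi^0)=\int f\,d\nu_{\beta,\lambda}$ for all large $N$. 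By the a priori bound of Proposition \ref{prop:apb}, which holds uniformly in the volume, together with the uniqueness statement of Theorem \ref{th:existdyn2}, the periodic approximations converge to the infinite dynamics, $\xi^N(t)\to\xi(t)$ locally uniformly in $t$, for $\nu_{\beta,\lambda}$-a.e. $\xi^0$ and a.e. realization of ${\bf N}$. Since $f$ is bounded and continuous, $f(\xi^N(t))\to f(\xi(t))$, and two applications of bounded convergence (first in ${\bf N}$, then in $\xi^0$, the integrand being dominated by $\|f\|_\infty$) let the left-hand side converge to $\int {\bb E}_{\xi^0}[f(\xi(t))]\,\nu_{\beta,\lambda}(d\xi^0)$. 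As the right-hand side is independent of $N$, the displayed identity follows.

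The only genuinely delicate ingredient is the locally uniform convergence $\xi^N\to\xi$ of the periodic approximations to the infinite dynamics; this is not new work but precisely what the energy-fluctuation a priori bound together with uniqueness in Theorem \ref{th:existdyn2} provide, so it is exactly this compactness-plus-uniqueness input that I would invoke rather than reprove. Everything else---the finite-volume Liouville computation and the soft passage to the limit on bounded cylinder test functions---is routine.
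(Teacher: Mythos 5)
Your proof is correct and shares the paper's overall architecture --- prove exact invariance for a well-posed finite-dimensional approximation, then transfer it to the infinite dynamics via the a priori bound of Proposition~\ref{prop:apb} and the uniqueness in Theorem~\ref{th:existdyn2} --- but it differs at the finite-volume level in an interesting way. The paper does not use periodic boundary conditions: it freezes the particles outside a box $\{-n-1,\dots,n+1\}$ and \emph{modifies the drift at the boundary sites}, replacing the missing neighbour by the mean value $\tfrac{\lambda+1}{\beta}=\rho$, precisely so that $\int ({\mc L}_n f)\, d\nu_{\beta,\lambda}=0$ holds with respect to the \emph{infinite-volume} measure $\nu_{\beta,\lambda}$ itself; the limit is then taken along exactly the frozen-boundary subsystems for which the appendix establishes the volume-uniform a priori bound and compactness. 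Your torus computation is arguably cleaner --- the identity $\partial_{\xi_x}(b_x\rho)=\rho\,(\xi_{x+1}-\xi_{x-1})(1+\lambda-\beta\xi_x)$ and the two telescoping sums require no boundary engineering, and your observations that $\sum_x \xi_x$ is conserved on the torus (hence no explosion) and that $\lambda>-1$ kills the boundary terms at $\xi_x\downarrow 0$ are exactly right --- but it shifts the burden to the limit step: you must check that the \emph{periodized} approximations fall within the scope of the compactness-plus-uniqueness machinery, in particular that the logarithmic energy fluctuation $\bar E$ of the periodization of $\xi^0\vert_{\TT^N}$ is controlled uniformly in $N$ by $\bar E(\xi^0)$ (plausible, since windows of size $\le N$ match genuine windows of $\xi^0$ and larger windows are bounded by period averages, but it is a verification the paper's choice of subsystems avoids), and that subsequential limits solve the infinite equations with the \emph{original} clocks $\mathbf{N}$, not periodized ones. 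Two minor points of the same softness as the paper's own proof: the integration by parts requires smooth compactly supported cylinder functions (then a density argument), and passing from $\int {\mc L}_N f\, d\nu^N_{\beta,\lambda}=0$ on such a core to genuine invariance of the finite semigroup is the same ``classical Hille--Yosida at finite volume'' step the paper also takes without elaboration. In short: same skeleton, different finite approximation; the paper's boundary-drift trick buys a limit procedure identical to the one used to construct the dynamics, while your periodic version buys a more transparent Liouville computation at the cost of an extra (routine but real) compatibility check in the limit.
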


\begin{proof}
Let $n \ge 2$ and consider the local dynamics generated by the generator ${\mc L}_n = {\mc A}_n + \gamma {\mc S}_n $ where
\begin{equation*}
\begin{split}
({\mc A}_n f)(\xi) & = \sum_{x=-n}^n \xi_x (\xi_{x+1} -{\xi}_{x-1}) \partial_{\xi_x} f (\xi)\\
& \, -\, \xi_{n+1} \left( \xi_n +\frac{\lambda +1}{\beta}\right) \partial_{\xi_n} f(\xi) + \xi_{-n -1} \left( \xi_{-n} \, +\,  \frac{\lambda +1}{\beta} \right) \partial_{\xi_{-n-1}} f (\xi),\\
({\mc S}_n f)(\xi) &= \sum_{x=-n}^{n} \Big( f(\xi^{x,x+1}) -f(\xi)\Big)
\end{split}
\end{equation*}
where $f: \Omega \to \RR$ is a compactly supported continuously differentiable function. The dynamics is essentially finite-dimensional since the particles outside the box $\{-n-1, \ldots, n+1\}$ are frozen. Thus, the classical Hille-Yosida theory can be applied. The boundary conditions have been chosen to have
\begin{equation*}
 \int ({\mc L}_n f) (\xi) d \nu_{\beta, \lambda}(\xi) =0
\end{equation*}
for any compactly supported continuously differentiable function $f$ which shows that $\nu_{\beta, \lambda}$ is invariant for the local dynamics. Since, as a consequence of the a priori bound, the infinite dynamics is obtained as a limit of finite local dynamics, this implies that $\nu_{\beta, \lambda}$ is invariant for the infinite dynamics.
\end{proof}

Then this defines a strongly continuous semigroup of contractions $(P_t)_{t \ge 0}$ on the Hilbert space ${\bb L}^2 (\Omega, {\mc B}, \nu_{\beta, \lambda})$. Moreover, It\^o's formula shows that its generator is a closable extension of ${\mc L}$ given by ${\mc A} +\gamma {\mc S}$ since for any local compactly supported continuously differentiable function $f$, we have
\begin{equation*}
(P_t f) (\xi) = f (\xi) + \int_0^t ({P_s} {\mc L} f) (\xi) ds, \quad \xi \in \Omega, \quad t \ge 0.
\end{equation*}

\subsection{Logarithmic energy fluctuation}

We have first to consider a clever smooth modification of ${\bar E}$. Let $0<\lambda<1$ and consider a twice continuously differentiable nonincreasing function $\varphi: \RR \to (0,1)$ such that $\varphi (u) =e^{\lambda (1-u)}$ if $u \ge 2$, $\varphi (u) =(1+\lambda + \lambda^2 /2) e^{-\lambda}$ if $u \le 1$, and $\varphi$ is concave for $u \le 3/2$, convex if $u\ge 3/2$. Finally, $0 \le - \varphi' (u) \le \lambda \varphi (u) \le e^{\lambda(1-u)}$, $\varphi (u) \ge e^{-\lambda (1+u)}$ and $|\varphi^{\prime \prime} (u)| \le \varphi (u)$ for all $u>0$.

For $x\in\ZZ$ and $\sigma \ge 1$ we define the function $f$ as
\begin{equation*}
f(x,\sigma)= \int_{\RR} {\varphi} (|x-y| /\sigma) e^{-2\lambda|y|} dy.
\end{equation*}

In \cite{F2} are proved the following properties on $f$:
\begin{eqnarray}
\begin{split}
\label{eq:equff1} &c_1 \exp (-\lambda |x| / \sigma) \le f(x,\sigma) \le c_2 \exp (-\lambda |x| /\sigma), \\
 \label{eq:equff2} &f(x,\sigma) \le f(y,\sigma) e^{2 \lambda |x-y|}, \quad \partial_ \sigma f (x,\sigma) \le e^{2 \lambda | x-y|} \partial_{\sigma} f (y,\sigma). \\
 \label{eq:equff3} &| \partial_x f (x,\sigma) | \le \min \{\partial_\sigma f  (x,\sigma) , \sigma^{-1} f (x,\sigma)\},\\
 \label{eq:equff4}&g(x) |\partial_x f (x-\mu, \sigma) | \le 4 g(|\mu| +\sigma)\, ( \partial_{\sigma} f)(x-\mu, \sigma).
\end{split}
\end{eqnarray}

Here the constants depend only on $\lambda$.

For $\xi\in(0,+\infty)^{\ZZ}$, $\mu\in{\ZZ}$ and $\sigma>0$, consider the function
\begin{equation}
W(\xi, \mu, \sigma) =\sum_{x \in \ZZ} f(x-\mu, \sigma) (1+ 2 \xi_x - \log \xi_x)
\end{equation}
and let
\begin{equation}
{\bar W} (\xi) = \sup_{\mu \in \ZZ} \, \sup_{\sigma \ge g(\mu)} \,\Big\{ \sigma^{-1} W (\xi,\mu,\sigma)\Big\}.
\end{equation}

Observe that by (\ref{eq:equff1}),
\begin{equation}
\label{eq:compWE}
W(\xi,\mu,\sigma) \ge c_1e^{-\lambda} E(\xi, \mu,\sigma),
\end{equation}
for all $\xi\in(0,+\infty)^{\ZZ}$, $\mu\in{\ZZ}$ and $\sigma>0$.

For  $\xi\in(0,+\infty)^{\ZZ}$, we also consider the function
\begin{equation}
{\widehat W} (\xi) = \sup_{\mu \in \ZZ} \Big\{\frac{W(\xi, \mu,g(\mu))}{g(\mu)}\Big\}.
\end{equation}

The following lemma shows that these two modifications of the logarithmic energy fluctuation are equivalent to ${\bar E}$.

\begin{lemma}
\label{lem: comp6}

There exists a constant $C$ such that for all $\xi\in(0,+\infty)^{\ZZ}$:
\begin{equation*}
C^{-1} {\widehat W} (\xi) \le {\bar W}(\xi)\le C {\widehat W}(\xi), \quad C^{-1} {\bar E} (\xi)\le {\bar W}(\xi)  \le C {\bar E}(\xi).
\end{equation*}
\end{lemma}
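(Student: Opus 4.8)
The plan is to prove the two equivalences by establishing four one–sided bounds, writing $e_x := 1 + 2\xi_x - \log \xi_x$ for the energy density at site $x$, so that $E(\xi,\mu,\sigma) = \sum_{|x-\mu|\le\sigma} e_x$ and $W(\xi,\mu,\sigma) = \sum_x f(x-\mu,\sigma)\, e_x$. A key structural fact I would record first is that $e_x$ is \emph{strictly positive}, since $\min_{t>0}(1+2t-\log t) = 2+\log 2 > 0$; this positivity is what lets me freely enlarge or shrink the summation range in upper bounds, and it is used throughout. Two of the four bounds are immediate. Since $\sigma = g(\mu)$ is an admissible scale in the supremum defining $\bar W$, one has $\widehat W(\xi) \le \bar W(\xi)$ at once; and \eqref{eq:compWE} gives $W(\xi,\mu,\sigma)\ge c_1 e^{-\lambda} E(\xi,\mu,\sigma)$, whence taking suprema yields $c_1 e^{-\lambda}\,\bar E(\xi) \le \bar W(\xi)$.

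For the bound $\bar W(\xi) \le C\bar E(\xi)$ I would use the exponential decay in \eqref{eq:equff1}, $f(x-\mu,\sigma) \le c_2 e^{-\lambda|x-\mu|/\sigma}$, together with a dyadic decomposition into shells $S_k = \{x : k\sigma \le |x-\mu| < (k+1)\sigma\}$. On $S_k$ the weight is at most $c_2 e^{-\lambda k}$, and by positivity $\sum_{x\in S_k} e_x \le E(\xi,\mu,(k+1)\sigma)$. Because $\sigma \ge g(\mu)$ forces $(k+1)\sigma \ge g(\mu)$, the definition of $\bar E$ gives $E(\xi,\mu,(k+1)\sigma)\le (k+1)\sigma\,\bar E(\xi)$, so that $W(\xi,\mu,\sigma) \le c_2\,\bar E(\xi)\,\sigma\sum_{k\ge0}(k+1)e^{-\lambda k} = C(\lambda)\,\sigma\,\bar E(\xi)$; dividing by $\sigma$ and taking the supremum proves the claim.

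The remaining and most delicate bound is $\bar W(\xi) \le C\widehat W(\xi)$, that is, controlling the supremum over all admissible scales $\sigma \ge g(\mu)$ by the single–scale quantity $\widehat W$. In view of the previous paragraph it suffices to prove $\bar E(\xi) \le C\widehat W(\xi)$. Writing $h := \widehat W(\xi)$, the lower bound in \eqref{eq:equff1} and positivity give, for every center $\mu'$, the single–scale estimate $E(\xi,\mu',g(\mu')) \le (c_1 e^{-\lambda})^{-1}W(\xi,\mu',g(\mu')) \le (c_1 e^{-\lambda})^{-1}h\,g(\mu')$. The main obstacle is then a \emph{covering argument}: given $\mu$ and $\sigma\ge g(\mu)$, I would tile the interval $[\mu-\sigma,\mu+\sigma]$ by the balls $[\mu'_i - g(\mu'_i),\, \mu'_i + g(\mu'_i)]$ attached to a greedily chosen sequence of integer centers $\mu'_i$ (starting at the left endpoint and setting $\mu'_i = \lceil t_{i-1}\rceil$), using $g\ge 1$ to ensure that each step advances by at least $g(\mu'_i)$ while still covering; by positivity $E(\xi,\mu,\sigma)\le\sum_i E(\xi,\mu'_i,g(\mu'_i))$. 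The crux is the bookkeeping estimate $\sum_i g(\mu'_i)\le C\sigma$: telescoping the advances bounds the sum by $2\sigma$ plus one overshoot term $g(\mu'_{\mathrm{last}})$, and the constraint $\sigma\ge g(\mu)=1+\log(1+|\mu|)$ together with $|\mu'_i|\le|\mu|+\sigma+1$ gives $g(\mu'_{\mathrm{last}}) \le 1+\log((1+|\mu|)(1+\sigma)) \le C\sigma$, since both $\log(1+|\mu|)\le\sigma$ and $\log(1+\sigma)\le\sigma$. Combining, $E(\xi,\mu,\sigma)\le C h\sum_i g(\mu'_i)\le C'h\sigma$, and dividing by $\sigma$ and taking the supremum yields $\bar E(\xi)\le C'\widehat W(\xi)$, hence $\bar W(\xi)\le C''\widehat W(\xi)$. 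The delicate point throughout is that the covering must be efficient — each ball covers $\approx 2g(\mu'_i)$ sites but costs only $g(\mu'_i)$ in the sum — and that the logarithmic growth of $g$ is slow enough, relative to the admissibility constraint $\sigma\ge g(\mu)$, for the overshoot to stay $O(\sigma)$.
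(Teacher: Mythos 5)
Your proof is correct, and although its skeleton matches the paper's --- the trivial inequality $\widehat{W}(\xi)\le \bar{W}(\xi)$, the lower bounds via \eqref{eq:compWE}, and the exponential decay of $f$ from \eqref{eq:equff1} for the upper bounds --- the crucial step is handled by a genuinely different route. The paper proves $\bar{W}\le C\widehat{W}$ directly: after an Abel summation it must control $\sum_{|x-\mu|\le n}H_x$ (with $H_x=1+2\xi_x-\log\xi_x$), which it does by partitioning $\{x:|x-\mu|\le n\}$ into intervals $\Lambda_j$ of length $r\approx g(\mu)$ and asserting that ``easily'' $\sum_{x\in\Lambda_j}H_x\le C\,g(\mu)\,\widehat{W}(\xi)$; the comparison with $\bar{E}$ is then dispatched as ``the same argument''. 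You instead route through $\bar{E}$: dyadic shells give $\bar{W}\le C\bar{E}$ cleanly (replacing the Abel summation), and the greedy covering of $[\mu-\sigma,\mu+\sigma]$ by balls centered at integers $\mu'_i$ of radius $g(\mu'_i)$, with the telescoping bound $\sum_i g(\mu'_i)\le 2\sigma+g(\mu'_{\mathrm{last}})+O(1)\le C\sigma$, gives $\bar{E}\le C\widehat{W}$. Your careful bookkeeping is not pedantry: the paper's per-interval claim is too strong as literally stated, because for an interval $\Lambda_j$ centered at a point $y$ with $|y|$ of order $|\mu|+n$ one can only extract $\sum_{x\in\Lambda_j}H_x\le C\,g(y)\,\widehat{W}(\xi)$, and $g(y)$ may exceed $g(\mu)$ by an unbounded factor once $n\gg e^{g(\mu)}$ (a single site at distance $R$ can legitimately carry energy of order $g(R)\widehat{W}$ while $\widehat{W}$ stays bounded). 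The correct statement carries a single logarithmic overshoot over the whole ball, $\sum_{|x-\mu|\le n}H_x\le C\widehat{W}\,\bigl(n+\log(1+|\mu|+n)\bigr)$, which the exponential weight $e^{-\lambda n/\sigma}$ then absorbs using precisely your observations $\log(1+|\mu|)\le g(\mu)\le\sigma$ and $\log(1+\sigma)\le\sigma$. So your proposal both recovers the lemma and supplies in rigorous form the detail that the paper's ``easily'' glosses over; what the paper's shorter route buys is brevity, while yours buys an airtight covering step whose single-overshoot structure is exactly what makes the constant uniform in $(\mu,\sigma)$.
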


\begin{proof}
The inequality ${\widehat W}(\xi)  \le {\bar W}(\xi) $  for all $\xi\in(0,+\infty)^{\ZZ}$, is trivial. Let us prove the second one by taking $\sigma \ge g(\mu)$, $\mu \in \ZZ$ and denoting $1+ 2\xi_x -\log \xi_x$ by $H_x$. By  (\ref{eq:equff1}), we have
\begin{equation*}
\begin{split}
W(\xi, \mu,\sigma)& \le c_2 \sum_{x \in \ZZ} \exp \left( -\lambda |x-\mu|/\sigma \right) H_x =c_2 \sum_{n =0}^{\infty} e^{- \lambda n/\sigma}  \sum_{|x-\mu|=n} H_x\\
&=c_2 (1- e^{- \lambda  /\sigma}) \sum_{n=0}^{\infty} e^{- \lambda n /\sigma}  \sum_{|x-\mu| \le n} H_x,
\end{split}
\end{equation*}
where the last equality follows from $\sum_{|x-\mu|=n} H_x =  \sum_{|x-\mu| \le n} H_x  - \sum_{|x-\mu| \le n-1} H_x $ and a discrete  integration by parts. Let $r\ge 1$ be the integer such that $r-1 < g(\mu) \le r$ and decompose the set $\{x \in \ZZ \, ;\, |x-\mu| \le n\}$ as $\cup_{j=1}^{K+1} \Lambda_j $ where the $\Lambda_j$ are non intersecting intervals of length $r$ for $j=1, \ldots,K$ and $\Lambda_{K+1}$ is of length at most $r-1$. Observe that $K+1$ is of order $n/ g(\mu)$. By using (\ref{eq:compWE}), we have easily that
\begin{equation*}
\sum_{x \in \Lambda_j} H_x \le C \, g(\mu)\,  {\widehat W} (\xi)
\end{equation*}
where $C$ depends only on $\lambda$. Thus we get
\begin{equation*}
\begin{split}
W(\xi, \mu,\sigma)& \le C  (1- e^{- \lambda  /\sigma}) \sum_{n=0}^{\infty} e^{- \lambda n /\sigma} n  {\widehat W} (\xi)\le C' \sigma {\widehat W} (\xi)
\end{split}
\end{equation*}
which concludes the proof of the second inequality.

The proof of $C^{-1} {\bar E}(\xi) \le {\bar W}(\xi)  \le C {\bar E}(\xi) $ for all $\xi\in(0,+\infty)^{\ZZ}$, is the same. The first inequality follows from \eqref{eq:compWE}
and the constant can be taken equal to $c_1e^{-\lambda}$. The second inequality follows from a similar argument to the one used above.
\end{proof}

\subsection{The {\textit{a priori}} bound}

\begin{prop}[A priori bound]
\label{prop:apb}
For each $w \ge 1$ there exists a continuous function $q_{w} (t)$, $t\ge 0$, such that
\begin{equation*}
{\bf P} \left\{ \sup_{0 \le s \le t} {\bar W} (\xi(s)) > \exp(q_{w} (t) g(u)) \right\} \le e^{-u}
\end{equation*}
for each $u \ge 1$, $t \ge 0$, whenever ${\bar W} (\xi^0) \le w$ and $(\xi (t))_{t \ge 0}$ is a tempered solution of (\ref{eq:app-stoch-eq}) with initial condition $\xi^0$.
\end{prop}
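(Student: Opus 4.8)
The plan is to follow the scheme of Fritz (\cite{F3,F1,F2}): apply Itô's formula for jump processes to the functional $W(\xi(t),\mu,\sigma(t))$, with $\mu$ fixed and $\sigma(t)$ an \emph{increasing} function of time to be chosen, and then take the supremum over $\mu$ and over the final scale $\sigma\ge g(\mu)$. Writing $H_x=1+2\xi_x-\log\xi_x$, this density is conserved by both parts of the dynamics (it is a fixed linear combination of the two conserved fields $\xi_x$ and $\log\xi_x$), so that ${\mc A}(H_x)$ and ${\mc S}(H_x)$ are discrete gradients of currents. A summation by parts then gives
\begin{equation*}
dW(\xi(t),\mu,\sigma(t)) = \Big[\dot\sigma\,\partial_\sigma W + \sum_{x}(\nabla f)(x-\mu,\sigma)\,I_{x,x+1}(\xi) + \gamma\,{\mc S}W\Big]\,dt + dM_t,
\end{equation*}
where $I_{x,x+1}=-2\xi_x\xi_{x+1}+\xi_x+\xi_{x+1}$ is the Hamiltonian current of $H$ (for $b=1$), $\gamma\,{\mc S}W=\gamma\sum_x(\Delta f)(x-\mu,\sigma)H_x$ is the compensated drift of the exchange noise, and $M_t$ is the martingale coming from the compensated Poisson jumps.

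The central step is to estimate the Hamiltonian drift. Since $H_x\ge\xi_x$ for every $x$, the quadratic term obeys $\xi_x\xi_{x+1}\le H_xH_{x+1}$, and the linear terms are dominated likewise. Using the definition of $\bar W$ together with \eqref{eq:compWE} and Lemma \ref{lem: comp6}, one obtains the single-site bound $H_x\le C\,g(x)\,\bar W(\xi)$ by taking the window of radius $g(x)$ centred at $x$. Inserting this for the factor $H_{x+1}$ and invoking the weighted-gradient estimate \eqref{eq:equff4}, $g(x)\,|\partial_x f(x-\mu,\sigma)|\le 4\,g(|\mu|+\sigma)\,(\partial_\sigma f)(x-\mu,\sigma)$, yields
\begin{equation*}
\Big|\sum_{x}(\nabla f)(x-\mu,\sigma)\,I_{x,x+1}\Big|\le C\,\bar W(\xi)\,g(|\mu|+\sigma)\,\partial_\sigma W .
\end{equation*}
The noise drift is of lower order, $|\gamma\,{\mc S}W|\le C\,\sigma^{-2}W$, controlled by $\partial_\sigma W$ through \eqref{eq:equff3}. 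Hence $dW\le C\,\bar W\,g(|\mu|+\sigma)\,\partial_\sigma W\,dt+(\text{lower order})\,dt+dM_t$.

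It then remains to close this self-referential inequality and extract the tail. The idea is to let $\sigma(t)$ increase at a rate proportional to $\bar W(\xi(t))\,g(|\mu|+\sigma(t))$, so that the transport term is absorbed by the positive term $\dot\sigma\,\partial_\sigma W$; since $\bar W$ is exactly the quantity being estimated, this is carried out rigorously by a stopping-time bootstrap, working up to the first time $\bar W(\xi(s))$ crosses a threshold and checking self-consistency below it. A Gronwall argument on $\log(\sigma^{-1}W)$, started from $W(\xi^0,\mu,\sigma(0))\le\sigma(0)\,w$, controls $\sigma(t)^{-1}W(\xi(t),\mu,\sigma(t))$ by a factor $\exp(q_w(t)\,\cdots)$; taking the supremum over $\mu$ and $\sigma\ge g(\mu)$ reconstitutes $\bar W$, the constraint $\sigma\ge g(\mu)$ guaranteeing that the contributions of distant, possibly large, sites stay summable and producing the factor $g(u)$. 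Finally the exponential moments of the compensated Poisson martingale $M_t$ are bounded by a Bernstein-type inequality, which converts the deterministic estimate into the claimed ${\bf P}\{\cdots>\exp(q_w(t)g(u))\}\le e^{-u}$.

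The main obstacle is the superlinear, non-Lipschitz nonlinearity $\xi_x\xi_{x+1}$ in the current: it forces the drift bound to involve the global quantity $\bar W$ itself, so that no naive Gronwall estimate closes, and it is also why one cannot invoke a finite-speed-of-propagation argument as in the quadratic case. Overcoming this is precisely the role of the moving scale $\sigma(t)$, the stopping-time bootstrap, and the tailored weight $f$ with its properties \eqref{eq:equff1}--\eqref{eq:equff4}. The second delicate point is to obtain the exponential Poisson tail with constants uniform in the finite-volume truncations, so that the a priori bound passes to the infinite dynamics of Theorem \ref{th:existdyn2}.
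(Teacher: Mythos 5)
Your overall architecture (weighted energy $W(\xi,\mu,\sigma)$ with the Fritz weight $f$, the single-site bound $\xi_{x+1}\le C g(x)\,{\bar W}(\xi)$ combined with \eqref{eq:equff4} to estimate the Hamiltonian drift by $C\,{\bar W}(\xi)\,g(|\mu|+\sigma)\,\partial_\sigma W$, an exponential martingale inequality for the compensated Poisson part, and a final union bound producing the $g(u)$ in the exponent) matches the paper. But there is a genuine gap at the central mechanism: you take $\sigma(t)$ \emph{increasing} and claim the transport term is ``absorbed by the positive term $\dot\sigma\,\partial_\sigma W$''. This cannot work. Since $\partial_\sigma f\ge 0$ (by the conditions on $\varphi$), one has $\partial_\sigma W\ge 0$, so with $\dot\sigma>0$ the term $\dot\sigma\,\partial_\sigma W$ is \emph{positive} and adds to the worst-case drift $+C\,{\bar W}\,g(|\mu|+\sigma)\,\partial_\sigma W$; a positive term cannot absorb another positive term when one wants an upper bound on $W$. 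The paper's construction runs in the opposite direction: it uses a family of \emph{shrinking} scales $\rho_k(t)=kg(\mu)-C_0\int_0^t g(|\mu|+|\rho_k(s)|)\,{\bar W}(\xi(s))\,ds$, so that the chain-rule contribution $\dot\rho_k\,\partial_\sigma W=-C_0\,g(|\mu|+\rho_k)\,{\bar W}\,\partial_\sigma W$ is \emph{negative} and, for $C_0$ large, dominates both the Hamiltonian drift and the quadratic variation $d\langle M^{(k)}_\mu\rangle_t\le C\,g(|\mu|+\rho_k)\,{\bar W}\,\partial_\sigma W\,dt$ of the noise martingale (the latter must also be absorbed, which your increasing-scale scheme likewise cannot do). One then compares the small window at time $t$ with the large window $kg(\mu)$ at time $0$, using that $W$ is increasing in $\sigma$, up to the stopping time $\tau_k$ at which $\rho_k$ hits $g(\mu)$.

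The second missing piece is the quantitative device that closes the self-reference in ${\bar W}$. Your ``Gronwall argument on $\log(\sigma^{-1}W)$'' does not close: the drift is proportional to the \emph{global} quantity ${\bar W}$, not to $W(\xi,\mu,\sigma)$ itself, so a naive Gronwall bound degenerates to $\frac{d}{dt}{\bar W}\lesssim {\bar W}^2$, which blows up in finite time — and your vague ``stopping-time bootstrap below a threshold'' does not repair this, because the threshold would have to grow with the very quantity being estimated. The paper instead runs the estimate simultaneously for all $k\ge 1$ and $\mu\in\ZZ$ (union bound with $u$ replaced by $u+Ak\,g(\mu)$), introduces the minimal index $k_t$ for which $\rho_{k_t}(t)>g(\mu)$, and from $\rho_{k_t}(t)\le 2g(\mu)$ deduces, via Lemma \ref{lem: comp6}, ${\bar W}(\xi(t))\le Ck_t\,{\bar W}(\xi(0))+u$ together with the closed inequality $k_t\le 2+C_0 Z(t)\bigl(1+g(k_t)\bigr)$, $Z(t)=\int_0^t{\bar W}(\xi(s))\,ds$. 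It is exactly this counting of $k_t$ — exploiting the \emph{logarithmic} growth of $g$, via $g(x)\le 1+2\sqrt{x}$ — that replaces the catastrophic quadratic bound by the integrable differential inequality $M^{-1}Z'(t)\le w\bigl(1+Z(t)g(Z(t))\bigr)+u$, from which the stated tail bound with $\exp(q_w(t)g(u))$ follows (\cite{F3}, Proposition 1). Without reversing the direction of your moving scale and supplying this $k$-indexed bootstrap, your argument does not yield the proposition.
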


 \begin{proof}
We consider a tempered solution $(\xi (t))_{t \ge 0}$ of (\ref{eq:app-stoch-eq}) with initial configuration $\xi^0 \in \Omega$.

For each $k \ge 1$, $\mu \in \ZZ$ and $t\geq{0}$ we define the stochastic process $\rho_{k}$ by
\begin{equation}
\rho_k (t) = k g (\mu) - C_0 \int_0^t g (|\mu| +|\rho_k (s)| ) Z' (s)  ds
\end{equation}
where $C_0:=C_0(\gamma, \lambda)$ is a positive constant that will be chosen later and
\begin{equation*}
Z(t)= \int_0^t {\bar W} (\xi (s)) ds.
\end{equation*}

Since the function $f(\cdot)$ is positive, $\bar W(\cdot)$ is also positive and this turns $Z(\cdot)$ positive. The trajectories of $\rho_k$ are differentiable, decreasing and satisfy $\rho_{k+1} (t) - \rho_k (t) \le g(\mu)$ a.s. for each $t\ge 0$. We consider also the sequence of stopping times $\tau_k =\inf \{ t \ge 0 \, ; \, \rho_k (t) \le g(\mu) \}$ which satisfy $\tau_k < \tau_{k+1} < +\infty$ and $\lim_{k \to  \infty} \tau_k =  \infty$ a.s.
We evaluate now the stochastic differential of $t \to W (\xi (t), \mu, \rho_k (t))$ for $t \le \tau_k$ (so that $\rho_k (t) \ge 1$). This is given by
\begin{equation*}
\begin{split}
d \left[  W (\xi (t), \mu, \rho_k (t)) \right] &=   I^{(k)}_0 (t) dt -   C_0   (\partial_\sigma W) (\xi(t), \mu, \rho_k (t)) g (|\mu| +\rho_k (t)) {\bar W} (\xi (t)) dt\\
&  + \;  dI^{(k)}_1 (t)
\end{split}
\end{equation*}
where
\begin{equation}
\label{eq:I0k}
\begin{split}
I_0^{(k)} (t) =&2 \sum_{x \in \ZZ} \Big(  f(x- \mu, \rho_k (t) ) - f(x+1- \mu, \rho_k (t) )\Big) \xi_x (t) \xi_{x+1} (t) \\
+& \sum_{x \in \ZZ} \Big(f(x+1- \mu, \rho_k (t) ) - f (x-1-\mu, \rho_k (t) ) \Big) \xi_x (t)
\end{split}
\end{equation}
and
\begin{equation*}
d I^{(k)}_1 =\sum_{x\in\ZZ} f(x- \mu, \rho_k ) \Big\{ 2 \nabla\Big((\xi_{x}  -\xi_{x-1} ) dN_{x-1,x} \Big) -  \nabla\Big( (\log \xi_{x} -\log \xi_{x-1}) dN_{x-1,x} \Big)  \Big\}.
\end{equation*}

We first estimate the term $I_0^{(k)} (t)$ and we show that if $C_0$ is taken sufficiently large then, for $t \le \tau_k$ we have that
\begin{equation}
\label{eq:I0kne}
I_{0}^{(k)} (t) \; - \;  C_0 \,   (\partial_\sigma W) (\xi(t), \mu, \rho_k (t)) \, g (|\mu| +\rho_k (t)) {\bar W} (\xi (t)) \le 0.
\end{equation}

The second term on the right hand side of (\ref{eq:I0k}) can be estimated, by using (\ref{eq:equff2}) and (\ref{eq:equff3}),  to get to

\begin{equation}\label{detailed estimate}
\begin{split}
\Big| f(x+1- \mu, \rho_k (t) ) - f (x-1-\mu, \rho_k (t) ) \Big| &= \Big| \int_{-1}^1 \,  (\partial_x f) (x-\mu +\alpha, \rho_k (t)) d\alpha \Big|\\
& \le  \int_{-1}^1  \,   \Big| (\partial_x f) (x-\mu +\alpha, \rho_k (t))d\alpha \Big|\\
& \le  \int_{-1}^1 \,   (\partial_\sigma f) (x-\mu +\alpha, \rho_k (t))d\alpha \\
& \le 2 \sup_{[ x- \mu -1, x -\mu +1]}\Big\{ \partial_{\sigma} f( \cdot, \rho_{k} (t))\Big\} \\
&\le 2 e^{2 \lambda}  \partial_{\sigma} f( x-\mu, \rho_{k} (t))
\end{split}
\end{equation}
 which gives us that
\begin{equation*}
\begin{split}
\sum_{x \in \ZZ} \Big(f(x+1- \mu, \rho_k (t) ) - f (x-1-\mu, \rho_k (t) ) \Big) \xi_x (t) &\le C \sum_{x \in \ZZ} \partial_\sigma f (x-\mu, \rho_k (t) )  \xi_x (t)\\
 &\leq C (\partial_\sigma W) (\xi(t), \mu, \rho_k (t)).
 \end{split}
\end{equation*}

Now, notice that for any $x \in \ZZ$ and for all $\xi\in{(0,\infty)^{\ZZ}}$ we have that
\begin{equation*}
 \bar W(\xi)\geq{\hat W(\xi)}\geq{\frac{W(\xi,x,g(x))}{g(x)}}.
 \end{equation*}
On the other hand, by \eqref{eq:compWE} and since for all $x>0$ it holds that $\log(x)\leq{1+x}$, then we have that $W(\xi,x,g(x))\geq{c_1e^{\lambda}E(\xi,x,g(x))}\geq{c_1e^{-\lambda}\xi_{x+1}}$.
Then, we conclude that there exists a constant $C$ such that for all $x\in{\ZZ}$ and $\xi\in{(0,\infty)}^{\ZZ}$,
 \begin{equation}\label{useful estimate}
 \xi_{x+1} \le Cg(x) {\bar W} (\xi).
 \end{equation}

To estimate the first term on the right hand side of (\ref{eq:I0k}) we use the previous estimate, (\ref{eq:equff4}) and a similar argument as done in \eqref{detailed estimate}. It follows that
\begin{equation*}
\begin{split}
&\Big|\sum_{x \in \ZZ} \left( f(x- \mu, \rho_k (t) ) - f(x+1- \mu, \rho_k (t) )\right) \xi_x (t) \xi_{x+1} (t)\Big| \\
& \le C \, {\bar W} (\xi (t)) \, g(|\mu| +\rho_k (t) )  \sum_{x \in \ZZ} \partial_\sigma f (x-\mu, \rho_k (t) ) \xi_x (t)\\
&\leq C \, {\bar W} (\xi (t)) \, g(|\mu| +\rho_k (t) ) \, (\partial_\sigma W) (\xi(t), \mu, \rho_k (t)).
\end{split}
\end{equation*}

Then, (\ref{eq:I0kne}) follows.

The  term $dI^{(k)}_1$ can be written as
\begin{equation*}
\begin{split}
dI^{(k)}_1 \!\!&=\sum_{x \in \ZZ}\!f(x-\mu,\rho_k )\!\Big\{ 2 \nabla\left( (\xi_{x} -\xi_{x-1}) dN_{x-1,x} \right)\!-\!\nabla( (\log \xi_{x} -\log \xi_{x-1}) dN_{x-1,x})\Big\}\\
&= - \sum_{x \in \ZZ} \left( f(x+1-\mu,\rho_k ) -f(x- \mu, \rho_k ) \right) \left\{ 2 \nabla \xi_x -\nabla \log \xi_x \right\} dN_{x,x+1}\\
&=  - \sum_{x \in \ZZ} \left( f(x+1-\mu,\rho_k ) -f(x- \mu, \rho_k) \right) \left\{ 2 \nabla \xi_x -\nabla \log \xi_x \right\} (dN_{x,x+1}  -\gamma dt)\\
&- \gamma \sum_{x \in \ZZ} \left( f(x+1-\mu,\rho_k ) -f(x- \mu, \rho_k ) \right) \left\{ 2 \nabla \xi_x -\nabla \log \xi_x \right\} dt.
\end{split}
\end{equation*}

Since the compensated Poisson processes $N_{x,x+1} (t) -\gamma t$ are orthogonal martingales with quadratic variation $\gamma^2 t$, then
$$dM^{(k)}_{\mu}\! =\! -\! \sum_{x \in \ZZ} \left( f(x+1-\mu,\rho_k) \!-\!f(x- \mu, \rho_k ) \right) \Big\{ 2 \nabla \xi_x\! -\!\nabla \log \xi_x \Big\} (dN_{x,x+1} -\gamma dt)$$
defines a martingale with a quadratic variation equal to
 $$d\langle M^{(k)}_{\mu} \rangle_t =\gamma^2 \sum_{x \in \ZZ} \left( f(x+1-\mu,\rho_k ) -f(x- \mu, \rho_k ) \right)^2 \Big\{ 2 \nabla \xi_x -\nabla \log \xi_x \Big\}^2 \, dt. $$

Using a similar argument to the one in \eqref{detailed estimate}, together with the fact that for all $x,y\in \ZZ$ such that $|x|, |y| \le C$ it holds that $|x-y|^2\leq 2C|x-y|$, the boundedness of the function $f$, (\ref{eq:equff2}), (\ref{eq:equff3}), (\ref{eq:equff4}) and \eqref{useful estimate}, one has that there exists a constant $C$ such that
\begin{equation*}
d \langle M^{(k)}_\mu \rangle_t \le C\;  g (|\mu| + \rho_k (t) ) \; {\bar W} (\xi (t)) \; \partial_{\sigma} W(\xi (t), \mu, \rho_k (t)) \, dt.
\end{equation*}

Similarly we obtain that
\begin{equation*}
\begin{split}
&\left|  \sum_{x \in \ZZ} \left[ f(x+1-\mu,\rho_k (t)) -f(x- \mu, \rho_k (t)) \right] \left\{ 2 \nabla \xi_x (t) -\nabla \log \xi_x (t) \right\} \right| \\
&\quad \quad \quad \quad \quad \quad \quad \quad \quad \quad \quad \quad \quad \quad \quad \quad \quad \quad \quad \quad \quad \quad \quad \quad \le C \,  \partial_{\sigma} W (\xi (t), \mu, \rho_k (t)).
\end{split}
\end{equation*}

Thus, if the constant $C_0$ is chosen sufficiently  large, we have
\begin{equation*}
\begin{split}
\sup_{t \ge 0} \Big\{W (\xi (t\wedge \tau_k ), \mu, \rho_k (t \wedge \tau_k )) \Big\}\le W (\xi (0), \mu, k g(\mu))  + \sup_{t \ge 0}\Big\{ N(\mu,k, t)\Big\}
 \end{split}
\end{equation*}
where $N(\mu,k, t)= M^{(k)}_{\mu} (t \wedge \tau_k)- \frac{1}{2} \langle M^{(k)}_{\mu} \rangle_{t \wedge \tau_k}$. Observe that $\exp ( M^{(k)}_{\mu} (t \wedge \tau_k) -\frac12 \langle M^{(k)}_{\mu}  \rangle_{t \wedge \tau_k})$ is a martingale with expectation equal to $1$. By the exponential supermartingale  inequality, we have that
$${\bf P } \Big(\sup_{t \ge 0}\Big\{ N (\mu, k, t) > u\Big\} \Big)  \le e^{-u}.$$
Thus we proved that for each $k \ge 1$, $\mu \in \ZZ$ and $u>0$,
\begin{equation}
\label{eq:diavolo}
\sup_{t \ge 0} \Big\{W (\xi (t \wedge \tau_k),\mu, \rho_k (t \wedge \tau_k))\Big\} \le W ( \xi (0), \mu, kg(\mu)) +u
\end{equation}
with a probability greater than $1-e^{-u}$. Applying (\ref{eq:diavolo}) for each $\mu \in \ZZ$ and $k \ge 1$ with $u$ replaced by $u +Ak g(\mu)$ where $A\ge 1$ is sufficiently large to have $\sum_{k \ge 1} \sum_{\mu \in \ZZ} e^{-Ak g(\mu)} \le 1$, we obtain
\begin{equation}
\label{eq:diavolo2}
\begin{split}
\sup_{t \ge 0}\Big\{ W (\xi (t \wedge \tau_k),\mu, \rho_k (t \wedge \tau_k))\Big\} &\le W ( \xi (0), \mu, kg(\mu)) +Ak g(\mu) +u \\
& \le k g(\mu) {\bar W} (\xi (0)) + A k g(\mu) +u
\end{split}
\end{equation}
with a probability greater than $1-e^{-u}$ uniformly in $k$ and $\mu$.

Define now $k:=k_t$, $t \ge 0$ as the smallest integer $k \ge 1$ for which $\rho_k (t) > g(\mu)$; then $\tau_k >t$ and $\rho_k (t) \le 2 g(\mu)$ as $\rho_{k-1} (t) \le g(\mu)$; thus choosing $k=k_t$ in (\ref{eq:diavolo2}) and using that $W(\xi,\mu,\sigma)$ is increasing in $\sigma$ (since $\partial_{\sigma} f \ge 0$ by the conditions imposed on $\varphi$), we get
\begin{equation*}
\frac{W(\xi (t), \mu,g(\mu))}{g(\mu)} \!\le\! \frac{W(\xi (t), \mu, \rho_k (t) )}{g(\mu)} \!\le\! k {\bar W} (\xi (0))\! +Ak\! +\!\frac{u}{g(\mu)}\!\leq{k {\bar W} (\xi (0)) \!+Ak \!+{u}},
\end{equation*}
 where in the last inequality we used the fact that $g(x)\geq{1}$ for all $x\in{\RR}$.
Taking the supremum over $\mu$ and using Lemma \ref{lem: comp6}, we obtain
\begin{equation*}
{\bar W} (\xi (t)) \le C k_t {\bar W} (\xi (0)) +u
\end{equation*}
for each $t\ge 0$ with probability at least $1-e^{-u}$. On the other hand,
\begin{equation*}
2 g(\mu) \ge k_t g(\mu) - C_0 \int_0^t g (|\mu| + |\rho_k (s)|) Z' (s) ds
\end{equation*}
whence
\begin{equation*}
k_t \le 2 + C_0 \int_0^t \frac{g (|\mu| + |\rho_k (s)|)}{g(\mu)} Z' (s) ds.
\end{equation*}
Since $\rho_k (s) \le k_t g(\mu)$ for any $s \in [0,t]$  and $g$ is increasing, we have that $g (|\mu| + |\rho_k (s)|)\leq{g(\mu+k_tg(\mu))}$. On the other hand for $x \ge 2$, $g(x) \le x$ together with the fact that for $x,y\in \RR$ $g(|x||y|) \le g(|x|) g(|y|)$ and since $g(1+x) \le 1 +g(x)$ for $x \ge 1$, we obtain that $g(\mu+k_tg(\mu))\leq{g(\mu)(1+g(k_t))}$. As a consequence we obtain that
\begin{equation}\label{final estimate}
k_t\leq{2+C_0Z(t) (1+ g(k_t))}.
\end{equation}
Since for all $x\geq{1}$ we have that  $g(x) \le 1+2 \sqrt{|x|}$, then

\begin{equation*}
k_t \le 2 + C_0 Z(t) (2 +2 {\sqrt{k_t}}).
\end{equation*}
Finally, it follows that ${\sqrt{k_t}} \le 2 + 4C_0 Z(t)$. Then, since $g$ is increasing and by plugging the previous  inequality in \eqref{final estimate}, we obtain that
\begin{equation*}
k_t \le 2 + C_0 Z (t) (1 +g((2+4C_0 Z(t))^2) ).
\end{equation*}
Recalling that $Z'(t)={\bar W} (\xi (t))$ we obtained that there exists a constant $M >0$ depending only on $\lambda$ such that for any $w  \ge 1$ and any initial condition $\xi (0)$ satisfying ${\bar W} (\xi (0)) \le w$,
\begin{equation*}
{\bf P} \left[ \sup_{t \ge 0} \Big\{M^{-1} Z' (t) - w (1 + Z(t) g(Z(t)))\Big\} \le u \right] \ge 1-e^{-u}.
\end{equation*}

The a priori bound follows from this last inequality (see \cite{F3}, Proposition 1).
\end{proof}

\section*{Acknowledgements}
The authors are very grateful to J\'ozsef Fritz for illuminating discussions on the existence of the infinite dynamics.  We acknowledge the support of the French Ministry of
Education through the grant ANR-10-BLAN 0108 (SHEPI). We are grateful to \' Egide and FCT for the research project FCT/1560/25/1/2012/S.
 We are grateful to FCT (Portugal) for support through the research project PTDC/MAT/109844/2009. PG thanks the Research Centre of Mathematics of the University of Minho, for the financial support provided by "FEDER" through the "Programa Operacional Factores de Competitividade – COMPETE" and by FCT through the research project PEst-C/MAT/UI0013/2011. PG thanks the warm hospitality of ``Courant Institute of Mathematical Sciences", where part of this work was done.


\end{document}